\documentclass[a4 paper, 11pt]{article}
\usepackage{amsmath}
\usepackage{latexsym}
\usepackage{amsfonts}
\usepackage{amssymb}
\usepackage{amsthm}
\usepackage{amsbsy}
\usepackage{graphicx}
\usepackage{color}
\usepackage{epstopdf}
\usepackage{todonotes}
\usepackage{fullpage}
\usepackage{hyperref}
\usepackage{mathrsfs}
\numberwithin{equation}{section}
\usepackage{float}
\usepackage[left=2cm, right=4cm, top=2cm, bottom=2cm]{geometry}
\newcommand{\bbb}{\color{black}}

\newcommand{\R}{\mathbb{R}}

\newcommand{\cE}{\mathcal{E}}
\newcommand{\cF}{\mathcal{F}}
\newcommand{\cG}{\mathcal{G}}
\newcommand{\cP}{\mathcal{P}}

\newcommand{\eps}{\varepsilon}
\renewcommand{\div}{\operatorname{div}}
\newcommand{\curl}{\operatorname{curl}}

\newcommand{\id}{\mathbf{id}}

\theoremstyle{definition}
\newtheorem{theorem}{Theorem}[section]
\newtheorem{lemma}[theorem]{Lemma}
\newtheorem{proposition}[theorem]{Proposition}
\newtheorem{definition}[theorem]{Definition}
\newtheorem{corollary}[theorem]{Corollary}
\newtheorem{remark}[theorem]{Remark}
\newtheorem{example}[theorem]{Example}
\theoremstyle{plain}

\begin{document}
	\title{\Large \bf On the single field formulation in magnetostatics
	}

	\author{
		Stefan Kr\"{o}mer
		\thanks{The Czech Academy of Sciences, Institute of Information Theory and Automation, Prague, Czech Republic} \and
		Giuseppe Tomassetti
		\thanks{Department of Industrial, Electronic, and Mechanical Engineering - Roma Tre University, Roma, Italy}
		\thanks{Institute for Applied Mathematics ``Mauro Picone'' (IAC) - National Research Council (CNR), Rome, Italy}
	}

	\maketitle
	\begin{abstract}
		We systematically discuss the equivalence of
		two variational formulations of magnetostatics, in terms of magnetization and magnetic field on the one hand and the single field formulation
		using only magnetic induction.	
		To demonstrate that this link is stable also when the magnetic laws are coupled with other variational static models, elasticity is included in the models as well.
		Interestingly, despite the fact that 
		the corresponding magnetoelastic energy densities in the material can be computed via Legendre-Fenchel transform in the magnetic state variables, 
		the two formulations are not linked by standard convex duality on the level of the functionals. 
		In addition, convexity and coercivity of the given functional are neither required for the transformation nor always preserved by it.
	\end{abstract}
\medskip
\noindent\textbf{Keywords:} critical points, Legendre--Fenchel transform, Helmholtz decomposition, magnetoelasticity.

\medskip
\noindent\textbf{MSC 2020 classification:}
74F15. 
49S05,
78A30, 
52A41.

	\section{Introduction}
	Within the scope of Brown's theory \cite{Brown1966MagnetoelasticInteractions}, the configuration of a magnetoelastic body $\Omega$ is described by a \emph{deformation} $\mathbf y:\Omega\to\mathbb R^3$, a \emph{magnetization density} $\mathbf{m}:\mathbf{y}(\Omega)\to\mathbb R^3$, and a \emph{self magnetic field} $\mathbf{h}_{\rm s}:\mathbb R^3\to\mathbb R^3$. The self field $\mathbf{h}_{\rm s}$ is determined by the magnetization density $\mathbf{m}$ as the distributional solution of the \emph{Maxwell constraint}:
	\begin{equation}\label{divcon}
		\begin{aligned}
			&\operatorname{curl}\mathbf{h}_{\rm s}=0,\\
			&\operatorname{div}(\mathbf{h}_{\rm s}+\chi_{\mathbf{y}(\Omega)}\mathbf{m})=0,
		\end{aligned}\quad\text{ on }\mathbb R^3,
	\end{equation}
	where $\chi_{\mathbf{y}(\Omega)}\mathbf{m}$ is the trivial extension of $\mathbf{m}$ to $\mathbb R^3$.  
When a current density $\mathbf{j}:\mathbb R^3\to\mathbb R^3$
is generated by an external source outside $\Omega$,
the body is acted upon by an \emph{applied magnetic field}
$\mathbf{h}_{\rm a}:\mathbb R^3\to\mathbb R^3$, given by
\begin{equation}\label{eq:ha}
    \begin{aligned}
        \operatorname{curl}\mathbf{h}_{\rm a}=\mathbf{j},\\
        \operatorname{div}\mathbf{h}_{\rm a}=0,
    \end{aligned}\quad\text{ in }\mathbb R^3,
\end{equation}
and its stable equilibria are local minimizers, under the constraint \eqref{divcon}, of the
\emph{magneto-elastic energy}:
	\begin{equation}\label{eq:JK}
		\widehat{\mathcal E}(\mathbf{y}, \mathbf{m},\mathbf{h}_{\rm s})
		=\int_{\Omega}\widehat\Phi(\mathbf{x},\nabla \mathbf{y}, \mathbf{m}\circ \mathbf{y}) {\rm d}\mathbf{x}
		+\frac{\mu_0}2\int_{\mathbb R^3}|\mathbf{h}_{\rm s}|^2{\rm d}\boldsymbol{\xi}
		-\mu_0\int_{\mathbf{y}(\Omega)}\mathbf{h}_{\rm a}\cdot\mathbf{m}\,{\rm d}\boldsymbol{\xi},
	\end{equation}
	which we write by adapting the format used by James and Kinderlehrer in their investigation of magnetic microstructures in magnetoelastic crystalline bodies undergoing large strains \cite{JamesKinderlehrer1993Magnetostriction}. 
	
Equation \eqref{eq:ha} is set in the framework of magnetostatics, 
which requires two conditions on the current density $\mathbf{j}$: 
that induces no charge accumulation, $\operatorname{div}\mathbf{j}=0$ 
, and that its time variations are slow 
compared to the time light takes to traverse the body, the so-called quasi-static regimes. The first condition is necessary for 
\eqref{eq:ha} to be consistent, since 
$\operatorname{div}(\operatorname{curl}\mathbf{h}_{\rm a})=0$ 
identically. The second justifies neglecting the displacement 
current in Amp\`ere's law \cite{jackson}.

	In writing \eqref{eq:JK} we use the notation
		\begin{equation}
		(\mathbf{m} \circ \mathbf{y})(\mathbf{x})=\mathbf{m}[\mathbf{y}(\mathbf{x})],\qquad \mathbf{x}\in\Omega,
	\end{equation}
	which emphasizes that, while the domain of definition of the deformation $\mathbf{y}$ is the \emph{reference configuration} $\Omega$, the magnetization density is defined in the \emph{deformed configuration} $\mathbf{y}(\Omega)$. Thus, when integrating the energy density $\widehat\Phi$ over the reference configuration, taking the referential point $\mathbf{x}$ as independent variable, the magnetization density must be evaluated at $\boldsymbol{\xi}=\mathbf{y}(\mathbf{x})$.	
	
	Within this framework, the (total) \emph{magnetic field}, 
	\begin{equation}
		\mathbf{h}= \mathbf{h}_{\rm a} + \mathbf{h}_{\rm s},
	\end{equation}
	is recovered as the sum of the applied- and self-field. The magnetic induction, which is the quantity that can be actually measured, is given by
	\begin{equation}\label{eq:induction}
		\mathbf{b}=\mu_0(\mathbf{m}+\mathbf{h}).
	\end{equation}

	Brown's original formulation was developed to model ferromagnetic materials at the microscopic level where domain structures can be observed \cite{Brown1940TheoryApproachSaturation,Brown1941EffectDislocations,Brown1945VirtuesDomain,Brown1957CriterionUniform}. At this level of spatial resolution, the \emph{specific magnetization density} $\mathbf{m}_s=(\operatorname{det}\nabla\mathbf{y})\mathbf{m}$ in a ferromagnetic body obeys the saturation constraint  $|\mathbf{m}|=m_s$.  The non-convexity arising from such constraint leads to the possibility of multiple local minima and formation of microstructures, whose analysis can be dealt rigorously with relaxation and Young's measures \cite{DeSimone1993EnergyminimisersLargeFerromagnetic,KruzikProhl2001YoungMeasureApproximation} or by introducing exchange energy, which penalizes spatial variations of $\mathbf{m}$, as discussed in the review \cite{DeSimoneKohnMullerOtto2006_AnalyticalMicromagnetics}. 
	
	Kankanala and Triantafyllidis \cite{KankanalaTriantafyllidis2004MRE} leveraged Brown's approach in situations beyond classical ferromagnets, such as soft magnetoelastic bodies --- specifically, magnetorheological elastomers. Although they used a similar mathematical formulation, their physical perspective is different from Brown's: in their work, the field $\mathbf{m}$ represents  the magnetization density at an observation scale much larger than that of magnetic domains. In particular, the saturation constraint is no longer enforced, and, more importantly, the detailed spatial profile of the magnetization is not the quantity of main interest. This second aspect perhaps explains why, in the context of soft magnetoelastic bodies, alternative formulations have emerged, which replace magnetization with other magnetic variables. The first of these alternative formulation was proposed by Dorfmann and Ogden \cite{DorfmannOgden2004Magnetoelastic}. In their formulation of magneto-elasticity the magnetic state of the body is described by the \emph{Lagrangian magnetic induction} $\mathbf{B}$, which is related to the (Eulerian) magnetic induction \eqref{eq:induction} by the following transformation formula:
	\begin{equation}\label{langrangeB}
		\mathbf{B}={J}{\mathbf{F}}^{-1}(\mathbf{b}\circ{\mathbf{y}}),
	\end{equation}
	where
	\begin{equation}\label{eq:FJ}
		\mathbf{F}=\nabla\mathbf{y},\qquad J=\det\mathbf{F}.
	\end{equation}
	The Dorfmann \& Ogden theory admits a variational formulation, which has been discussed by Bustamante et al. in \cite{h}. This formulation has been considered, with minor changes, by \v{S}ilhav\'y \cite{b} in his discussion of existence theorem in non-linear electro-magneto-elasticity. With suitable modifications to align with the form of \eqref{eq:JK}, the functional form of the energy considered by \v{S}ilhav\'y can be written (up to a constant) as
	\begin{equation} \label{E-Silhavy}
		\overline{\mathcal E}(\mathbf{y},\mathbf{B})=\int_{\Omega}\overline\Phi(\mathbf{x},\mathbf{F},\mathbf{B})\,{\rm d}\mathbf{x}+\frac 1 {2\mu_0}\int_{\mathbb R^3}\left\{\frac 1 J |\mathbf{F}(\mathbf{B}-J(\mathbf{b}_{\rm a}\circ\mathbf{y}))|^2 \right\}{\rm d}\mathbf{x},
	\end{equation}
	where $\mathbf{y}$ has been extended to $\mathbb R^3$ in a suitable way.
	
	Equilibrium states are identified with the critical points of $\overline{\mathcal E}$ under the constraint
	\begin{equation}
		\operatorname{Div}\mathbf{B}=0,
	\end{equation}
	which thanks to the definition \eqref{langrangeB} is equivalent to the \emph{solenoidal constraint}:
	\begin{equation}\label{eq:divb0}
		\operatorname{div}\mathbf{b}=0.
	\end{equation}
	Working with the Lagrangian induction field $\mathbf{B}$ bypasses several difficulties inherent with the fact that the magnetic state variables are naturally defined in the physical space, here represented by $\mathbb R^3$, whereas the domain of definition of the deformation is a subdomain $\Omega$ of a fictitious space still modeled with $\mathbb R^3$, thus simplifying actual calculations, be they analytical or numerical. However, it introduces the conceptual difficulty of extending the deformation gradient outside $\Omega$, which is not always possible, since a sufficiently regular extension of the deformation may not exist. Since in this paper we are concerned with theoretical aspects, we shall consider critical points, subjected to the constraint \eqref{eq:divb0}, of the following functional:
	\begin{equation}\label{eq:bb}
		\mathcal E(\mathbf{y},\mathbf{b})=\int_{\Omega}\Phi(\mathbf{x},\nabla\mathbf{y},\mathbf{b}\circ\mathbf{y}){\rm d}\mathbf{x}+\frac{1}{2\mu_0}\int_{\mathbb R^3}|\mathbf{b}-\mathbf{b}_{\rm a}|^2{\rm d}\boldsymbol{\xi}.
	\end{equation}
If the function $\Phi$ is related to $\overline\Phi$ by the transformation formula
\begin{equation}
\Phi(\mathbf{x},\mathbf{F},\mathbf{b})=\overline\Phi(\mathbf{x},\mathbf{F},J\mathbf{F}^{-1}\mathbf{b}),
\end{equation}
then by \eqref{langrangeB}, $\mathcal E$ and $\overline{\mathcal E}$ coincide.

For many cases of engineering interest, at least one of the two functionals $\mathbf{b}\mapsto\mathcal E(\mathbf{y},\mathbf{b})$ and $(\mathbf{m},\mathbf{h}_{\rm s})\mapsto\widehat{\mathcal E}(\mathbf{y},\mathbf{m},\mathbf{h}_{\rm s})$ is convex and coercive in their respective magnetic variables. In particular, for any fixed $\mathbf{y}$ that functionals then admits a unique minimizer. However, while the formulation based on \eqref{eq:bb} is local, the one based on $\mathbf{m}$ and $\mathbf{h}_{\rm s}$ is not. In fact, by writing
\begin{equation}
\mathbf{b}=\operatorname{curl} \mathbf{a} \quad \text { in } \mathbb{R}^3,
\end{equation}
and adopting the vector potential $\mathbf{a}$ as the unknown, the solenoidal constraint \eqref{eq:divb0} is enforced identically, up to the usual gauge invariance $\mathbf{a} \mapsto \mathbf{a}+\nabla u$, and the resulting Euler-Lagrange equations are pointwise. By contrast, even if the irrotationality constraint $\operatorname{curl}\mathbf{h}_{\rm s} = 0$ in the $(\mathbf{m},\mathbf{h})$-based formulation is enforced by introducing a scalar potential $\varphi$ and by setting $\mathbf{h}_{\rm s} = -\nabla\varphi$, the dependence of $\varphi$ from $\mathbf{m}$ is non-local. 

The equivalence between the two formulations has been discussed in several papers in the engineering literature \cite{h,Danas2017EffectiveResponse,ShaSa21a}. Although all of these papers, either implicitly or explicitly acknowledge the role convex duality, the only rigorous discussion has been given, to our knowledge, by Pedregal \& Yan \cite{PedregalYan2010DualityMicromagnetics}. 
More precisely, \cite{PedregalYan2010DualityMicromagnetics} only considers the special case of a coercive micromagnetics energy density with a hard saturation constraint, a nonconvex but still coercive \bbb case where full model equivalence can only be expected after a suitable relaxation. Interestingly, if the hard saturation constraint  or coercivity \bbb are dropped, the link to the single-field model can change significantly, see Example~\ref{ex:saturation} 
(close to \cite{PedregalYan2010DualityMicromagnetics}, but when starting from the $\mathbf{b}$-model, the hard saturation constraint is never fully recovered when passing to the $(\mathbf{m},\mathbf{h})$-model, and softer versions of saturation are not covered in \cite{PedregalYan2010DualityMicromagnetics} at all),
and Example~\ref{ex:diapara} (demonstrating that in the $(\mathbf{m},\mathbf{h})$-formulation, diamagnetic materials naturally lead to concave functionals that should be maximized, not minimized).
Both are natural cases that we cover here as well. 

The goal of the present paper is to provide a precise and rigorous discussion of the equivalence of the two formulations, which goes beyond the related results of \cite{PedregalYan2010DualityMicromagnetics}. With this, we hope to fill what we believe to be a gap in the existing literature, where the precise meaning of such equivalence is not stated clearly. In particular, we wish to make the point that a more careful treatment is needed, compared to the existing literature cited above. In doing so, we still want to keep the level of mathematical technicality  accessible to the engineering community. 
In order to outline our line of arguments, we present our equivalence statement in a special case, and then we comment on the result. For simplicity, we restrict ourselves to the smooth case, i.e., where all densities are smooth, and we assume that the material energy density $\widehat\Psi(\mathbf x,\mathbf F,\mathbf m)$ is convex with respect to $\mathbf m$.

\paragraph{Equivalence statement for the smooth convex case.}
In discussing the equivalence between the two formulations, the deformation $\mathbf{y}:\Omega\to\mathbb{R}^3$ does not interfere at all with the equivalence of the magnetic models, and hence can be treated as a fixed parameter. We write $\mathcal{E}_{\mathbf{y}}(\mathbf{b}):=\mathcal{E}(\mathbf{y},\mathbf{b})$
and $\widehat{\mathcal{E}}_{\mathbf{y}}(\mathbf{m},\mathbf{h}_{\rm s})
:=\widehat{\mathcal{E}}(\mathbf{y},\mathbf{m},\mathbf{h}_{\rm s})$
for the two energy functionals viewed as functionals of the magnetic variables only.

The key hypothesis for the equivalence of the two formulations is that the energy densities $\Phi$ and $\widehat\Phi$ are linked by the following transformation formula:
\begin{equation}\label{relations1}
		\Phi(\mathbf{x},\mathbf{F},\mathbf{b})=-\widehat\Psi^*(\mathbf{x},\mathbf{F},\mathbf{b}).
	\end{equation}
	Here, $\widehat\Psi$ is the \emph{augmented material energy density}, given by
	\begin{equation}\label{def:hatPsi}
		\widehat\Psi(\mathbf{x},\mathbf{F},\mathbf{m}):=\widehat\Phi(\mathbf{x},\mathbf{F},\mathbf{m})+\frac {\mu_0}2|\mathbf{m}|^2,
	\end{equation}
	where $\widehat\Psi^*$ is the Legendre-Fenchel transform of $\widehat\Psi$ in the variable $\mathbf{m}$, given by \eqref{def:LeFe-convex}
	(still in the smooth cases, for energies with invertible gradient,  $\widehat\Psi^*$ will be replaced by the generalized Legendre-Fenchel transform $\widehat\Psi^\diamond$ defined in \eqref{genFc-smooth}). Accordingly, $\Phi(\mathbf x,\mathbf F,\mathbf b)$ is concave in $\mathbf{b}$. Moreover, the applied fields are linked by $\mathbf{h}_{\rm a}=\mu_0 \mathbf{b}_{\rm a}$. \bbb

The reverse transformation corresponding to \eqref{relations1} is
	\begin{equation}\label{relations1b}
		\widehat\Phi(\mathbf{x},\mathbf{F},\mathbf{m})= (-\Phi)^*(\mathbf{x},\mathbf{F},\mathbf{m})-\frac {\mu_0}2|\mathbf{m}|^2,
	\end{equation}
with the Legendre-Fenchel transform now acting in the variable $\mathbf{b}$ of $\Phi$. 
Here we omit some technical assumptions on $\Omega$, $\mathbf{y}$, $\widehat\Phi$ and $\Phi$ that we shall specify in the body of the paper. With these assumption, the equivalence statement we prove can be stated as follows in the simplest case:

\begin{itemize}
\item[(i)] If $(\overline{\mathbf{m}},\overline{\mathbf{h}}_{\rm s})$ is a critical point of
$\widehat{\mathcal{E}}_{\mathbf{y}}$ under the Maxwell constraint \eqref{divcon},
then the field $\overline{\mathbf{b}}$ defined by
\begin{align}\label{eq:induction2}
	\overline{\mathbf{b}}=\mathbf{b}_{\rm a}
	+\mu_0(\chi_{\mathbf{y}(\Omega)}\overline{\mathbf{m}}+\overline{\mathbf{h}}_{\rm s})
	\qquad\text{in }\mathbb R^3.
\end{align}
is a critical point of $\mathcal{E}_{\mathbf{y}}$ under the solenoidal constraint
\eqref{eq:divb0}, with $\Phi$ given by \eqref{relations1}. Moreover,
\begin{equation}\label{eq:eneq}
\mathcal{E}_{\mathbf{y}}(\overline{\mathbf{b}})
=\widehat{\mathcal{E}}_{\mathbf{y}}(\overline{\mathbf{m}},\overline{\mathbf{h}}_{\rm s}).
\end{equation}

\item[(ii)] Conversely, if $\overline{\mathbf{b}}$ is a critical point of $\mathcal{E}_{\mathbf{y}}$
under the solenoidal constraint \eqref{eq:divb0}, then for
$\overline{\mathbf{m}}$ defined via the constitutive relation 
\begin{equation}\label{m-from-b-smooth-intro}
\begin{aligned}
	\overline{\mathbf{m}}\circ\mathbf y &=-\nabla_{\mathbf{b}}\Phi(\cdot,\mathbf F,\overline{\mathbf{b}}\circ\mathbf y) &&\text{in } \Omega,\\
\end{aligned}
\end{equation}
and $\overline{\mathbf{h}}_{\rm s}$ is the unique solution of \eqref{divcon}, equivalently, \bbb  the curl-free part of $-\chi_{\mathbf{y}(\Omega)}\mathbf{m}$ 
from the Helmholtz decomposition on $\R^3$ (cf.~\eqref{mh-from-b-smooth}), 
the pair $(\overline{\mathbf{m}},\overline{\mathbf{h}}_{\rm s})$ is a critical point of
$\widehat{\mathcal{E}}_{\mathbf{y}}$ under the Maxwell constraint \eqref{divcon},
with $\widehat{\Phi}_{\mathbf{y}}$ given by \eqref{relations1b}. Moreover, \eqref{eq:eneq} and \eqref{eq:induction2} are again satisfied.
\end{itemize}
From the above equivalence statement, it emerges that it is crucially important to distinguish equilibrium points from general admissible states. Precisely, given a generic state $(\mathbf{m},\mathbf{h}_{\rm s})$ that does not satisfy the Euler-Lagrange equations, the corresponding $\mathbf{b}$ defined by \eqref{eq:induction2} does not satisfy 
$\mathcal{E}_{\mathbf{y}}(\mathbf{b})=\widehat{\mathcal{E}}_{\mathbf{y}}(\mathbf{m},\mathbf{h}_{\rm s})$ as can be easily checked for concrete examples, for instance the diamagnetic case of Example 4.1 where
$\mathcal{E}_{\mathbf{y}}$ is convex and coercive whereas $\widehat{\mathcal{E}}_{\mathbf{y}}$ is not bounded from below. Conversely, if $\mathbf{b}$ is a generic admissible state that does not satisfy the Euler-Lagrange equations, the corresponding $\mathbf{m}$ defined by \eqref{m-from-b-smooth-intro} and the corresponding $\mathbf{h}_{\rm s}$ obtained by solving \eqref{divcon} do not satisfy $\mathcal{E}_{\mathbf{y}}(\mathbf{b})=\widehat{\mathcal{E}}_{\mathbf{y}}(\mathbf{m},\mathbf{h}_{\rm s})$ either. In other words, the transformation linking the two models is such that the energies coincide only at equilibrium. In our opinion, such distinction has not been sufficiently emphasized in the engineering literature.

Fully rigorous versions of the above equivalence statement including precise assumptions are presented in
Theorem~\ref{thm:eq-smooth} and Corollary~\ref{cor:eneq-smooth} (smooth case,
Section~\ref{sec:smooth}), and in Theorems~\ref{thm:mh-to-b} and~\ref{thm:b-to-mh} and Corollary~\ref{cor:eneq-cc} (convex/concave case,
Section~\ref{sec:cc}). These results also discuss cases beyond the convex one. Examples are given in Section~\ref{sec:ex}.

\section{The smooth case}\label{sec:smooth}
Throughout this section and the rest of the paper, we regard the deformation $\mathbf{y}:\Omega\to\mathbb{R}^3$ as fixed and we make the assumption that the deformation is regular enough so that the image $\mathbf{y}(\Omega)$ is a measurable subset of $\R^3$. As to the applied field, 
we assume that 
$\mathbf{b}_a\in L^2(\R^3;\R^3)$ with $\div \mathbf{b}_a=0$ on $\R^3$.
Since these assumption are common to all theorem statements, we shall leave them tacit.
The two energy functionals that we compare are \eqref{eq:bb} and \eqref{eq:JK}, but now the domain of integration is the deformed configuration 
after the 
change 
of variables $\boldsymbol{\xi}= \mathbf{y}(\mathbf{x})$:
\begin{align}
  \cE_{\mathbf{y}}(\mathbf{b})
    &:= \int_{\mathbf{y}(\Omega)} \Phi_{\mathbf{y}}(\boldsymbol{\xi},\mathbf{b})\,{\rm d}\boldsymbol{\xi}
      + \frac{1}{2\mu_0}\int_{\mathbb R^3}|\mathbf{b}-\mathbf{b}_{\rm a}|^2
        \,{\rm d}\boldsymbol{\xi},
    \label{E(b)}\\
  \widehat\cE_{\mathbf{y}}(\mathbf{m},\mathbf{h}_{\rm s})
    &:= \int_{\mathbf{y}(\Omega)}\widehat\Phi_{\mathbf{y}}(\boldsymbol{\xi},\mathbf{m})
        \,{\rm d}\boldsymbol{\xi}
      + \frac{\mu_0}{2}\int_{\mathbb R^3}|\mathbf{h}_{\rm s}|^2\,{\rm d}\boldsymbol{\xi}
      - \int_{\mathbf{y}(\Omega)}\mathbf{m}\cdot\mathbf{b}_{\rm a}\,{\rm d}\boldsymbol{\xi},
    \label{E(mh)}
\end{align}
Here the transformed material energy densities are
\begin{align}
  \Phi_{\mathbf{y}}(\boldsymbol{\xi},\mathbf{b})
    &:= \frac{1}{J(\mathbf{y}^{-1}(\boldsymbol{\xi}))}
        \Phi\!\left(\mathbf{y}^{-1}(\boldsymbol{\xi}),\,F(\mathbf{y}^{-1}(\boldsymbol{\xi})),\,\mathbf{b}\right),
    \label{eq:Phiy-def}\\
  \widehat\Phi_{\mathbf{y}}(\boldsymbol{\xi},\mathbf{m})
    &:= \frac{1}{J(\mathbf{y}^{-1}(\boldsymbol{\xi}))}
        \widehat\Phi\!\left(\mathbf{y}^{-1}(\boldsymbol{\xi}),\,F(\mathbf{y}^{-1}(\boldsymbol{\xi})),\,\mathbf{m}\right).
    \label{eq:hatPhiy-def}
\end{align}

In this section we will show that the two models, the one based on the energy $\widehat{\mathcal E}_{\mathbf{y}}(\mathbf m,\mathbf h_{\rm s})$ and the one based on the energy $\cE_{\mathbf{y}}(\mathbf b)$, are indeed equivalent provided that the energy densities $\widehat{\Phi}_{\mathbf{y}}(\boldsymbol{\xi},\mathbf m)$ and $\Phi_{\mathbf{y}}(\boldsymbol{\xi},\mathbf b)$ are related by a suitable duality transformation (see \eqref{H0}, \eqref{H0p} and \eqref{eduality2} below). Such equivalence holds both at the level of stationary points and at the level of energies. 

For clarity of exposition, we divide this section into three distinct subsections. In the first one, we give the notion of constrained critical point. In the second one, we discuss the correspondence between constrained critical points of the two models at the level of the Euler-Lagrange equations. In the third one, we show the energies at the two models coincide at critical points.

\subsection{Constrained critical points}
To begin, we recapitulate the Euler-Lagrange equations of the two models. These equations are well known, and their derivation is given in Proposition~\ref{prop:EL-smooth} in the Appendix for the reader's convenience. If the energy functionals are differentiable, states that satisfy the Euler-Lagrange equations are natural candidates for equilibrium states, and we call them \emph{smooth constrained critical points}.

\begin{definition}[Smooth constrained critical points]\label{def:crit-smooth}~
\begin{enumerate} 
    \item[(i)] We say that $(\overline{\mathbf{m}},\overline{\mathbf{h}}_{\rm s})\in L^2(\Omega;\R^3)\times L^2(\R^3;\R^3)$ is a smooth constrained critical point of the functional $\widehat\cE_{\mathbf{y}}$ defined in \eqref{E(mh)} if the Maxwell equations \eqref{divcon} hold:
    \begin{equation}
    \operatorname{div}(\chi_{\mathbf{y}(\Omega)}\overline{\mathbf{m}}+\overline{\mathbf{h}}_{\rm s})=0, \quad \operatorname{curl}\overline{\mathbf{h}}_{\rm s}=0 \quad \text{in $\mathbb R^3$},
    \end{equation}
    and the following Euler-Lagrange equation holds:
    \begin{equation}\label{equil-mh-smooth} 
			\nabla_{\mathbf{m}} \widehat{\Phi}_{\mathbf{y}}\left(\cdot,\overline{\mathbf{m}}\right)=\mu_0\overline{\mathbf{h}}_{\rm s}+ \mathbf{b}_{\rm a} 
			\quad \text { a.e. in } \mathbf{y}(\Omega).
    \end{equation}
    \item[(ii)] A field $\overline{\mathbf b}\in L^2(\R^3;\R^3)$ is a smooth constrained critical point
    of the functional $\cE_{\mathbf y}$ defined in \eqref{E(b)} if $\operatorname{div}\overline{\mathbf{b}}=0$, and there exists a scalar potential $\overline\varphi\in \mathring W^{1,2}(\R^3)$
		(the space defined in Subsection~\ref{secA:Helmholtz}) such that the following Euler-Lagrange equation holds:
    \begin{equation}\label{equil-b-smooth}
    	-\chi_{\mathbf{y}(\Omega)}\nabla_{\mathbf{b}}\Phi_{\mathbf{y}}(\cdot,\overline{\mathbf{b}})=\frac{\overline{\mathbf b}-\mathbf b_a}{\mu_0}+\nabla\overline\varphi\qquad\text{a.e. in }\mathbb R^3\,.
    \end{equation}
\end{enumerate}
\end{definition}
For a derivation of these Euler-Lagrange equations see Appendix~\ref{secC:crit}. 

\begin{remark}
An equivalent way of writing the Euler-Lagrange equation \eqref{equil-mh-smooth} is
\begin{equation}\label{equil-mh-smooth-2}
\nabla_{\mathbf{m}} \widehat{\Psi}_{\mathbf{y}}\left(\cdot,\overline{\mathbf{m}}\right)=\mu_0(\overline{\mathbf m}+\overline{\mathbf{h}}_{\rm s})+ \mathbf{b}_{\rm a} 
\quad \text { a.e. in } \mathbf{y}(\Omega),
\end{equation}
where $\widehat\Psi_{\mathbf y}(\boldsymbol{\xi},\mathbf m)=\widehat\Phi_{\mathbf y}(\boldsymbol{\xi},\mathbf m)+\frac{\mu_0}{2}|\mathbf m|^2$ is the augmented material energy density defined in \eqref{def:hatPsi}. The two forms are equivalent. However, the form \eqref{equil-mh-smooth} is more convenient for the comparison with the $\mathbf{b}$-based model, since the right-hand side is the magnetic induction $\mathbf{b}$, this equation can be interpreted as a constitutive relation linking $\mathbf{b}$ to $\mathbf{m}$. 	
\end{remark}

\begin{remark}
Equation \eqref{equil-b-smooth} is often written in the form
\begin{equation}\label{equil-b-smooth-curl}
\curl\left(\chi_{\mathbf{y}(\Omega)}\nabla_{\mathbf{b}}\Phi_{\mathbf{y}}(\cdot,\overline{\mathbf{b}})+\frac{\overline{\mathbf b}-\mathbf b_a}{\mu_0}\right)=0.
\end{equation}
The two forms are equivalent. However, while \eqref{equil-b-smooth} holds pointwise (almost everywhere), \eqref{equil-b-smooth-curl} holds only in the sense of distributions. \end{remark}

\begin{remark}
In \eqref{equil-b-smooth}, the field $\nabla\overline\varphi$ is a Lagrange multiplier associated with the divergence-free constraint on $\mathbf{b}$. Such multiplier is not present in the Euler-Lagrange equation \eqref{equil-mh-smooth} of the $\mathbf{m}$-$\mathbf{h}$-based model, since the constraints \eqref{divcon} are automatically satisfied through the projection of $\chi_{\mathbf y(\Omega)}$ on the space of curl-free vector fields, as explained in Remark~\ref{rem:h-from-m}.
\end{remark}

\subsection{Correspondence between constrained critical points}
In this subsection we show that the two models are indeed equivalent at the level of constrained critical points, provided that the energy densities $\widehat{\Phi}_{\mathbf{y}}$ and $\Phi_{\mathbf{y}}$ are related by the duality relations \eqref{H0}. In particular, if $(\overline{\mathbf{m}},\overline{\mathbf{h}}_{\rm s})$ is a constrained critical point of $\widehat\cE_{\mathbf{y}}$, then the field $\overline{\mathbf{b}}$ corresponding to the physical definition
\begin{align}\label{b-from-mh-smooth}
   \overline{\mathbf{b}}:=\mathbf{b}_{\rm a}+\mu_0(\chi_{\mathbf{y}(\Omega)}\overline{\mathbf{m}}+\overline{\mathbf{h}}_{\rm s})
\end{align}
is a constrained critical point of $\cE$. Going from the $\mathbf b$-model to the $\mathbf{m}$-$\mathbf{h}$-model, is slightly more involved and requires instead to define $(\overline{\mathbf{m}},\overline{\mathbf{h}}_{\rm s})$ in terms of $\overline{\mathbf{b}}$ by:
\begin{equation}\label{mh-from-b-smooth}
\begin{aligned}
	\overline{\mathbf{m}} &:=-\nabla_{\mathbf{b}}\Phi_{\mathbf{y}}(\cdot,\overline{\mathbf{b}}) &&\text{in } \mathbf{y}(\Omega),\\
	\overline{\mathbf{h}}_{\rm s} &:=-\cP[\chi_{\mathbf{y}(\Omega)}\overline{\mathbf{m}}] &&\text{in } \mathbb{R}^3.
\end{aligned}
\end{equation}
The first of \eqref{mh-from-b-smooth} depends on the material, and can be interpreted as a constitutive relation between magnetic induction and magnetization. In the second equation $\mathcal P[\cdot]$ is the orthogonal projection of $L^2(\mathbb R^3;\mathbb R^3)$ on the space of curl-free vector fields. 

\begin{remark}
The Helmholtz decomposition of $L^2(\mathbb R^3;\mathbb R^3)$ based on $\mathcal P$ (see Appendix \ref{secA:Helmholtz}) reveals a partial link between \eqref{b-from-mh-smooth} and \eqref{mh-from-b-smooth}. It allows us to decompose any $L^2(\mathbb R^3;\mathbb R^3)$ vector field into the sum of a curl-free and a divergence-free vector field, in the present case reads
\begin{equation}
\chi_{\mathbf{y}(\Omega)}\overline{\mathbf{m}}=\cP[\chi_{\mathbf{y}(\Omega)}\overline{\mathbf{m}}]+(I-\cP)[\chi_{\mathbf{y}(\Omega)}\overline{\mathbf{m}}].
\end{equation}
Assuming that \eqref{b-from-mh-smooth} with $\overline{\mathbf{b}}$ divergence-free and $\overline{\mathbf{h}}_{\rm s}$ curl-free, we 
necessarily have (since $\mathbf b_{\rm a}$ is divergence-free by assumption)
\begin{equation}\label{eq:mh-from-b-smooth-2}
\cP[\chi_{\mathbf{y}(\Omega)}\overline{\mathbf{m}}]=-\overline{\mathbf{h}}_{\rm s}
\quad\text{and}\quad (I-\cP)[\chi_{\mathbf{y}(\Omega)}\overline{\mathbf{m}}]=\frac{1}{\mu_0}(\overline{\mathbf{b}}-\mathbf{b}_{\rm a}).
\end{equation}
\end{remark}

The precise statement of the correspondence between critical points is in Theorem \ref{thm:eq-smooth} below, which we prove under the following assumptions. In particular,
they link the material energy densities $\widehat{\Phi}_{\mathbf y}$ and $\Phi_{\mathbf y}$ by suitable duality relations, where the 
symbol $\diamond$ denotes the generalized Legendre-Fenchel transform defined in \eqref{genFc-smooth}. 
If we are coming from the $(\mathbf{m},\mathbf{h}_s)$-based formulation with given $\widehat\Phi_{\mathbf{y}}$, 
define the corresponding density function as 
\begin{align*}
	&\begin{aligned}[c]
			&\Phi_{\mathbf{y}}(\boldsymbol{\xi},\mathbf{b})=-\widehat\Psi_{\mathbf{y}}^\diamond(\boldsymbol{\xi},\mathbf{b}),\quad
			\text{where $\widehat\Psi_{\mathbf y}(\boldsymbol{\xi},\mathbf m)=\widehat\Phi_{\mathbf y}(\boldsymbol{\xi},\mathbf m)+\frac{\mu_0}{2}|\mathbf m|^2$ (cf.~\eqref{def:hatPsi})},
		\end{aligned}\label{H0}\tag{H0}\\
\intertext{and require that}
	&\begin{aligned}[c]
		&\parbox{.84\textwidth}{
			$(\boldsymbol{\xi},\mathbf{m})\mapsto \widehat\Phi_{\mathbf{y}}(\boldsymbol{\xi},\mathbf{m})$, 
			$\widehat\Phi_{\mathbf{y}}:\mathbf{y}(\Omega)\times \R^3\to \R\cup\{+\infty\}$, \\
			is measurable in $\boldsymbol{\xi}$ for each $\mathbf{m}$
			and lower semi-continuous in $\mathbf{m}$ for a.e.~$\boldsymbol{\xi}$;
		}
	\end{aligned}	\tag{H1}\label{H1} \\
  &\begin{aligned}[c]
		&\text{$-C|\mathbf{m}|^2-C \leq \widehat\Phi_{\mathbf{y}}(\boldsymbol{\xi},\mathbf{m})$ for all $\mathbf{m}$ and a.e.\ $\boldsymbol{\xi}$, with a constant $C>0$;}
	\end{aligned}\tag{H2}\label{H2} \\
	&\begin{aligned}[c] 
		\parbox{.84\textwidth}{
			for a.e. $\boldsymbol{\xi}\in\mathbf y(\Omega)$, 
			$\widehat\Phi_{\mathbf{y}}(\boldsymbol{\xi},\cdot)$ is differentiable 
			and its gradient
			$\mathbf{m}\mapsto \nabla_{\mathbf{m}}\widehat\Phi_{\mathbf{y}}(\boldsymbol{\xi},\mathbf{m})$, $\R^3\to \R^3$,
			is a continuous and continuously invertible function.
		}	
	\end{aligned}\label{H3}\tag{H3}
\end{align*} 
If we are the given $\mathbf{b}$-based formulation featuring $\Phi_{\mathbf{y}}$ instead, we define
\begin{align*}
	&\begin{aligned}[c]
			&\widehat\Phi_{\mathbf{y}}(\boldsymbol{\xi},\mathbf{m})=(-\Phi_{\mathbf{y}})^\diamond(\boldsymbol{\xi},\mathbf{m})-\frac{\mu_0}{2}|\mathbf{m}|^2
		\end{aligned}\label{H0p}\tag{H0$'$}\\
\intertext{and assume that}
	&\begin{aligned}[c]
		\parbox{.84\textwidth}{
			$(\boldsymbol{\xi},\mathbf{b})\mapsto \Phi_{\mathbf{y}}(\boldsymbol{\xi},\mathbf{b})$, 
			$\mathbf{y}(\Omega)\times \R^3\to \R\cup\{+\infty\}$,\\
			is measurable in $\boldsymbol{\xi}$ for each $\mathbf{b}$
			and lower semi-continuous in $\mathbf{b}$ for a.e.~$\boldsymbol{\xi}$;
		}
	\end{aligned}	\tag{H1$'$}\label{H1p} \\
  &\begin{aligned}[c]
		&\text{$-C|\mathbf{b}|^2-C \leq \Phi_{\mathbf{y}}(\boldsymbol{\xi},\mathbf{b})$ for all $\mathbf{b}$ and a.e.\ $\boldsymbol{\xi}$, with a constant $C>0$;}
	\end{aligned}\tag{H2$'$}\label{H2p} \\
	&\begin{aligned}[c] 
		&\parbox{.84\textwidth}{
			for a.e. $\boldsymbol{\xi}\in\mathbf y(\Omega)$, $\Phi_{\mathbf{y}}(\boldsymbol{\xi},\cdot)$ is differentiable 
			and its gradient
			$\mathbf{b}\mapsto \nabla_{\mathbf{b}}\Phi_{\mathbf{y}}(\boldsymbol{\xi},\mathbf{b})$, $\R^3\to \R^3$,
			is a continuous and continuously invertible function.
		}	
	\end{aligned}\label{H3p}\tag{H3$'$}
\end{align*}
Here and throughout, \eqref{H1}--\eqref{H2} and \eqref{H1p}--\eqref{H2p} are imposed only to ensure that the integrals in $\widehat\cE_{\mathbf{y}}$ and $\cE_{\mathbf{y}}$ are well defined.
As stated, these conditions readily admit convex examples; alternatively, one may impose them for $-\widehat\Phi_{\mathbf{y}}$ or $-\Phi_{\mathbf{y}}$ when a strongly concave example is desired. \bbb

We emphasize that, under \eqref{H3} or \eqref{H3p}, the material energy densities appearing in \eqref{H0} and \eqref{H0p} can be related using the (smooth) Legendre--Fenchel transform. Since this transform is an involution (see Lemma~\ref{lem:duality}), the definitions \eqref{H0} and \eqref{H0p} are mutually consistent. \bbb
This equivalence does not require convexity; it only requires invertibility of the gradient.

\begin{theorem}[Model equivalence at equilibirum: smooth case]\label{thm:eq-smooth}
The following holds:
	\begin{itemize}
\item [(i)] Assume that \eqref{H0} and \eqref{H3} are satisfied. 
If $(\overline{\mathbf{m}},\overline{\mathbf{h}}_{\rm s})\in L^2(\mathbf y(\Omega);\mathbb R^3)\times L^2(\mathbb R^3;\mathbb R^3)$ is a constrained critical point of $\widehat{\cE}$ 
in the sense of Definition~\ref{def:crit-smooth}, 
then the field $\overline{\mathbf{b}}$ defined by \eqref{b-from-mh-smooth}
is a constrained critical point of $\cE$.

\item [(ii)] Assume that \eqref{H0p} and \eqref{H3p} are satisfied.
If $\overline{\mathbf{b}}\in L^2(\R^3;\mathbb R^3)$ is a constrained critical point of $\cE$
in the sense of Definition~\ref{def:crit-smooth},

then the pair $(\overline{\mathbf{m}},\overline{\mathbf{h}}_{\rm s})$ defined by \eqref{mh-from-b-smooth} belongs to $L^2(\mathbf y(\Omega);\mathbb R^3)\times L^2(\mathbb R^3;\mathbb R^3)$
and is a constrained critical point of $\widehat\cE_{\mathbf y}$. 
Moreover, it satisfies 
\begin{equation}\label{eq:mh-from-b-smooth}
\mathbf b_{{\rm a}}+\mu_0(\chi_{\mathbf{y}(\Omega)}\overline{\mathbf{m}}+\overline{\mathbf{h}}_{\rm s})=\overline{\mathbf{b}} \quad \text{a.e. in } \mathbb{R}^3.
\end{equation}
\end{itemize}
\end{theorem}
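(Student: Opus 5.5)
The proof rests on one pointwise fact about the generalized Legendre--Fenchel transform $\diamond$ of \eqref{genFc-smooth}. If $f$ is differentiable with continuous, continuously invertible gradient, then $f^\diamond$ is differentiable and
\[
\nabla f^\diamond=(\nabla f)^{-1}.
\]
This is the content of Lemma~\ref{lem:duality}, together with involutivity. Applied for a.e.\ $\boldsymbol{\xi}$, it turns the two Euler--Lagrange equations into one another. The constraints are then handled through the Helmholtz decomposition from Appendix~\ref{secA:Helmholtz}.

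\textbf{Part (i).} Let $(\overline{\mathbf m},\overline{\mathbf h}_{\rm s})$ be a critical point and define $\overline{\mathbf b}$ by \eqref{b-from-mh-smooth}.
\begin{itemize}
\item $\overline{\mathbf b}\in L^2$, since each summand is in $L^2$.
\item $\div\overline{\mathbf b}=0$, because $\div \mathbf b_{\rm a}=0$ and $\div(\chi_{\mathbf y(\Omega)}\overline{\mathbf m}+\overline{\mathbf h}_{\rm s})=0$ by the Maxwell constraint.
\item By \eqref{equil-mh-smooth-2}, on $\mathbf y(\Omega)$ we have $\nabla_{\mathbf m}\widehat\Psi_{\mathbf y}(\cdot,\overline{\mathbf m})=\overline{\mathbf b}$. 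Here $\nabla_{\mathbf m}\widehat\Psi_{\mathbf y}=\nabla_{\mathbf m}\widehat\Phi_{\mathbf y}+\mu_0\,\mathrm{id}$, which is invertible by \eqref{H3}. Strictly, \eqref{H3} asserts invertibility of $\nabla\widehat\Phi_{\mathbf y}$; I take $\diamond$ in \eqref{H0} to be defined via $\widehat\Psi_{\mathbf y}$, as the paper does, and will check that the hypotheses of Lemma~\ref{lem:duality} are met for $\widehat\Psi_{\mathbf y}$. This compatibility check is one of the main obstacles.
\item Granting it, \eqref{H0} gives
\[
-\nabla_{\mathbf b}\Phi_{\mathbf y}(\cdot,\overline{\mathbf b})=\nabla\widehat\Psi^\diamond_{\mathbf y}(\cdot,\overline{\mathbf b})=(\nabla\widehat\Psi_{\mathbf y})^{-1}(\overline{\mathbf b})=\overline{\mathbf m}\quad\text{on }\mathbf y(\Omega).
\]
\item Since $\overline{\mathbf h}_{\rm s}$ is curl-free and in $L^2$, we can write $\overline{\mathbf h}_{\rm s}=\nabla\psi$ with $\psi\in\mathring W^{1,2}(\R^3)$.
\item Then
\[
\frac{\overline{\mathbf b}-\mathbf b_{\rm a}}{\mu_0}=\chi_{\mathbf y(\Omega)}\overline{\mathbf m}+\nabla\psi ,
\]
so \eqref{equil-b-smooth} holds with $\overline\varphi=-\psi$.
\end{itemize}

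\textbf{Part (ii).} Let $\overline{\mathbf b}$ be a critical point.
\begin{itemize}
\item Define $\overline{\mathbf m}$ as in \eqref{mh-from-b-smooth}. The Euler--Lagrange equation \eqref{equil-b-smooth} gives, on $\R^3$,
\[
\chi_{\mathbf y(\Omega)}\overline{\mathbf m}=\frac{\overline{\mathbf b}-\mathbf b_{\rm a}}{\mu_0}+\nabla\overline\varphi .
\]
The right-hand side is in $L^2$, so $\overline{\mathbf m}\in L^2(\mathbf y(\Omega))$. This settles the integrability claim without growth assumptions on $\nabla\Phi_{\mathbf y}$.
\item The term $(\overline{\mathbf b}-\mathbf b_{\rm a})/\mu_0$ is divergence-free and $\nabla\overline\varphi$ is curl-free. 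By uniqueness and orthogonality in the Helmholtz decomposition, $\cP[\chi_{\mathbf y(\Omega)}\overline{\mathbf m}]=\nabla\overline\varphi$.
\item Hence $\overline{\mathbf h}_{\rm s}=-\nabla\overline\varphi$ is in $L^2$ and curl-free.
\item Rearranging the displayed identity gives \eqref{eq:mh-from-b-smooth}. Taking the divergence then shows $\div(\chi_{\mathbf y(\Omega)}\overline{\mathbf m}+\overline{\mathbf h}_{\rm s})=0$, so the Maxwell constraints hold.
\item For the Euler--Lagrange equation, \eqref{H0p} gives $\widehat\Psi_{\mathbf y}=(-\Phi_{\mathbf y})^\diamond$. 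By Lemma~\ref{lem:duality} under \eqref{H3p},
\[
\nabla_{\mathbf m}\widehat\Psi_{\mathbf y}(\cdot,\overline{\mathbf m})=(-\nabla_{\mathbf b}\Phi_{\mathbf y})^{-1}(\overline{\mathbf m})=\overline{\mathbf b}\quad\text{on }\mathbf y(\Omega).
\]
\item By \eqref{eq:mh-from-b-smooth}, $\overline{\mathbf b}=\mu_0(\overline{\mathbf m}+\overline{\mathbf h}_{\rm s})+\mathbf b_{\rm a}$ on $\mathbf y(\Omega)$. This is exactly \eqref{equil-mh-smooth-2}, which is equivalent to \eqref{equil-mh-smooth}.
\end{itemize}

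\textbf{Main obstacle.} The delicate step is the pointwise duality: confirming that $\diamond$ yields a differentiable function whose gradient is the inverse map, and that the invertibility in \eqref{H3} and \eqref{H3p} transfers correctly between $\widehat\Phi_{\mathbf y}$ and the augmented $\widehat\Psi_{\mathbf y}$. The Lagrange-multiplier identification $\overline\varphi\leftrightarrow-\overline{\mathbf h}_{\rm s}$ is a routine Helmholtz argument. A minor additional point is measurability of $\overline{\mathbf m}$, which follows from continuity of the gradient map in $\mathbf b$ (Carathéodory).
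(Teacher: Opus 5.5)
Your proposal is correct and follows the paper's proof essentially step by step: in (i) you read off $\overline{\mathbf b}=\nabla_{\mathbf m}\widehat\Psi_{\mathbf y}(\cdot,\overline{\mathbf m})$ on $\mathbf y(\Omega)$ and invert it via Lemma~\ref{lem:duality}, and in (ii) you obtain $\overline{\mathbf m}\in L^2$ from the Euler--Lagrange identity and identify the multiplier $\nabla\overline\varphi=\cP[\chi_{\mathbf y(\Omega)}\overline{\mathbf m}]=-\overline{\mathbf h}_{\rm s}$ by Helmholtz orthogonality, exactly as the paper does. One correction to your aside: invertibility of $\nabla_{\mathbf m}\widehat\Psi_{\mathbf y}=\nabla_{\mathbf m}\widehat\Phi_{\mathbf y}+\mu_0\,\mathrm{id}$ does not follow from \eqref{H3} (take $\widehat\Phi_{\mathbf y}=-\frac{\mu_0}{2}|\mathbf m|^2$), so the ``compatibility check'' you promise cannot succeed in general and must instead be read as tacitly contained in \eqref{H0} being well defined, together with the Lipschitz inverse that Lemma~\ref{lem:duality} formally requires --- the paper's proof relies on both in the same unstated way.
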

\begin{remark}\label{rem:EL}
Under our assumptions, the Euler-Lagrange equations \eqref{equil-mh-smooth} and \eqref{equil-b-smooth} of Definition~\ref{def:crit-smooth} are well-defined, but
\eqref{H3} and \eqref{H3p} alone are not strong enough to justify them as necessary conditions for local extremals via Proposition~\ref{prop:charact-mh}. For the latter, we also need \eqref{H1} and \eqref{H1p} as well as stronger versions of
\eqref{H2} and \eqref{H2p}, namely, that
$\nabla_{\mathbf{m}}\widehat\Phi_{\mathbf{y}}$, $\nabla_{\mathbf{b}}\Phi_{\mathbf{y}}$ and their inverse functions satisfy a linear growth condition.
\end{remark}
\bbb
\begin{proof}[Proof of Theorem~\ref{thm:eq-smooth}]
\noindent{\bf Proof of (i).} In view of the definition of $\overline{\mathbf{b}}$ in \eqref{b-from-mh-smooth}, $\overline{\mathbf{b}}\in L^2(\R^3;\R^3)$. Moreover, the first Maxwell constraint $\operatorname{div}(\chi_{\mathbf{y}(\Omega)} \overline{\mathbf{m}} + \overline{\mathbf{h}}_{\rm s}) = 0$  and the assumption that $\mathbf b_a$ is divergence-free yield immediately that $\overline{\mathbf b}$ is divergence-free:
\begin{equation}
	\operatorname{div} \overline{\mathbf{b}} = 0 \quad \text{in } \mathbb{R}^3.
\end{equation}
By the second Maxwell constraint, $\operatorname{curl} \overline{\mathbf{h}}_{\rm s} = 0$, there exists a scalar potential $\overline{\varphi} \in \mathring{W}^{1,2}(\mathbb{R}^3)$ such that
\begin{equation}
	\overline{\mathbf{h}}_{\rm s} = -\nabla \overline{\varphi}.
\end{equation}
Outside $\mathbf{y}(\Omega)$, by definition \eqref{b-from-mh-smooth}, $\overline{\mathbf{m}} = 0$ and $\overline{\mathbf{b}} = \mu_0 \overline{\mathbf{h}}_{\rm s} + \mathbf{b}_{\rm a}$. Hence,
\begin{equation}\label{eq:mh-from-b-proof1}
	\frac{\overline{\mathbf{b}} - \mathbf{b}_{\rm a}}{\mu_0} +\nabla\overline\varphi = 0 \quad \text{a.e. in } \mathbb{R}^3 \setminus \mathbf{y}(\Omega).
\end{equation}
Thus the Euler-Lagrange equation \eqref{equil-b-smooth} for the $\mathbf b$-model holds in $\mathbb{R}^3 \setminus \mathbf{y}(\Omega)$, and it remains to show that it also holds in $\mathbf{y}(\Omega)$. This is where we use the assumption that $(\overline{\mathbf{m}}, \overline{\mathbf{h}}_{\rm s})$ satisfies the Euler-Lagrange equation \eqref{equil-mh-smooth}. Indeed, substituting the definition of $\overline{\mathbf{b}}$ in \eqref{b-from-mh-smooth}, we obtain
\begin{equation}\label{eq:b-from-mh-proof}
	\overline{\mathbf{b}} = \nabla_{\mathbf{m}} \widehat{\Psi}_{\mathbf{y}}(\cdot, \overline{\mathbf{m}}) \quad \text{a.e. in } \mathbf{y}(\Omega).
\end{equation}
By the duality properties of the Legendre transform (see Lemma~\ref{lem:duality}), equation \eqref{eq:b-from-mh-proof} is equivalent to
\begin{equation}
	\overline{\mathbf{m}} = -\nabla_{\mathbf{b}} \Phi_{\mathbf{y}}(\cdot, \overline{\mathbf{b}}) \quad \text{a.e. in } \mathbf{y}(\Omega).
\end{equation}
Substituting this into the definition of $\overline{\mathbf{b}}$ in $\mathbb{R}^3$, and recalling that $\overline{\mathbf{h}}_{\rm s} = -\nabla \overline{\varphi}$, we have
\begin{equation}\label{eq:mh-from-b-proof2}
	-\chi_{\mathbf{y}(\Omega)} \nabla_{\mathbf{b}} \Phi_{\mathbf{y}}(\cdot, \overline{\mathbf{b}}) = \frac{\overline{\mathbf{b}} - \mathbf{b}_{\rm a}}{\mu_0} + \nabla \overline{\varphi} \quad \text{a.e. in } \mathbb{R}^3.
\end{equation}
The combination of \eqref{eq:mh-from-b-proof1} and \eqref{eq:mh-from-b-proof2} yields the Euler-Lagrange equation \eqref{equil-b-smooth} for $\overline{\mathbf{b}}$.

\noindent{\bf Proof of (ii).} Conversely, let $\overline{\mathbf{b}}$ be a smooth constrained critical point of $\mathcal{E}_{\mathbf{y}}$. By hypothesis, $\overline{\mathbf{b}}$ satisfies the Euler-Lagrange equation \eqref{equil-b-smooth}. By the definition of $\overline{\mathbf{m}}$ in \eqref{mh-from-b-smooth}, we obtain that
\begin{equation}\label{eq:equil}
    \frac{\overline{\mathbf{b}} - \mathbf{b}_{\rm a}}{\mu_0} + \nabla \overline{\varphi} = \chi_{y(\Omega)}\overline{\mathbf{m}} \quad \text{a.e. in } \mathbb R^3.
\end{equation}
(Note that we do not know yet whether $\overline\varphi$ is the same as in the proof of (i).)
In particular, we have that $\overline{\mathbf{m}} \in L^2$, as required by the formulation:
\begin{equation}
\overline{\mathbf{m}}\in L^2(\mathbf y(\Omega);\R^3).
\end{equation}
Applying the projection operator $\mathcal P$ to both sides of \eqref{eq:equil}, using the orthogonality beteween curl-free and divergence-free fields, and the assumption that both $\overline{\mathbf b}$ and  $\mathbf b_a$ are divergence-free, we obtain $\nabla\overline\varphi=\mathcal P[\chi_{\mathbf y(\Omega)}\overline{\mathbf m}]$, and hence the definition of $\overline{\mathbf{h}}_{\rm s}$ in \eqref{mh-from-b-smooth} yields 
\begin{equation}\label{eq:h-from-mh-proof}
	\overline{\mathbf{h}}_{\rm s}=-\nabla\overline\varphi.
\end{equation}
Substituting \eqref{eq:h-from-mh-proof} into \eqref{eq:equil}, we obtain \eqref{eq:mh-from-b-smooth}. Moreover, since both $\overline{{\mathbf{b}}}$ and $\mathbf{b}_a$ are divergence-free, we have $\operatorname{div}(\chi_{\mathbf{y}(\Omega)}\overline{\mathbf{m}}+\overline{\mathbf{h}}_{\rm s})=0$. Thus, the pair $(\overline{\mathbf{m}}, \overline{\mathbf{h}}_{\rm s})$ satisfies the Maxwell constraints.

Finally, in view of \eqref{H0} and the duality properties of the Legendre transform (see Lemma~\ref{lem:duality}), the definition \eqref{mh-from-b-smooth} of $\overline{\mathbf{m}}$ is equivalent to
\begin{equation}\label{eq:equil-mh-proof}
    \overline{\mathbf{b}} = \nabla_{\mathbf{m}} \widehat{\Psi}_{\mathbf{y}}(\cdot, \overline{\mathbf{m}}),\quad\text{a.e. in } \mathbf{y}(\Omega).
\end{equation}
Substituting \eqref{eq:h-from-mh-proof} and \eqref{eq:equil-mh-proof} into \eqref{eq:equil} (evaluated only in $\mathbf{y}(\Omega)$), we recover the Euler-Lagrange equation \eqref{equil-mh-smooth} for $(\overline{\mathbf{m}}, \overline{\mathbf{h}}_{\rm s})$.
\end{proof}

\begin{remark}
The proof of (ii) of Theorem~\ref{thm:eq-smooth}, shows that if $\overline{\mathbf b}$ is a constrained critical point then the Lagrange multiplier $-\nabla\overline\varphi$ in the Euler-Lagrange equation \eqref{equil-b-smooth} coincides with the stray field $\overline{\mathbf{h}}_{\rm s}$ that corresponds to $\overline{\mathbf b}$ through \eqref{mh-from-b-smooth}.
\end{remark}

\begin{remark}\label{rem:alternative-h-cc}
In the statement of Theorem \ref{thm:eq-smooth} (ii), the definition of $\overline{\mathbf{h}}_{\rm s}$ in \eqref{mh-from-b-smooth} is not the only possible one. In particular, the second of \eqref{mh-from-b-smooth} may be replaced by
$$
\overline{\mathbf{h}}_{\rm s}:=-\chi_{\mathbf{y}(\Omega)}\overline{\mathbf{m}}+\tfrac{1}{\mu_0}\left(\overline{\mathbf{b}}-\mathbf{b}_{\rm a}\right).
$$
This definition, along with the assumption that $\overline{\mathbf{b}}$ and $\mathbf b_{\rm a}$ are divergence-free, automatically guarantees \eqref{eq:mh-from-b-smooth}, and $\operatorname{div}(\chi_{\mathbf{y}(\Omega)}\overline{\mathbf m}+\overline{\mathbf{h}}_{\rm s})=0$, but it does not guarantee the second Maxwell constraint, namely, that $\overline{\mathbf{h}}_{\rm s}$ is curl-free, unless we assume that $\overline{\mathbf{b}}$ satisfies the Euler-Lagrange equation \eqref{equil-b-smooth}. In contrast, the definition of $\overline{\mathbf{h}}_{\rm s}$ in \eqref{mh-from-b-smooth} immediately implies that $\overline{\mathbf{h}}_{\rm s}$ is curl-free as the projection of $-\chi_{\mathbf{y}(\Omega)}\overline{\mathbf{m}}$ on the space of curl-free vector fields. However, it does not guarantee the other Maxwell constraint $\operatorname{div}(\chi_{\mathbf{y}(\Omega)}\overline{\mathbf m}+\overline{\mathbf{h}}_{\rm s})=0$ unless we assume that $\overline{\mathbf{b}}$ satisfies the Euler-Lagrange equation \eqref{equil-b-smooth}.
\end{remark}

\subsection{Critical states and energy equality}\label{ssec:critstates}
We have seen in the previous subsection that to each constrained critical point of $\widehat{\mathcal E}_{\mathbf{y}}$ corresponds a constrained critical point of $\mathcal E_{\mathbf{y}}$, and vice versa. Most importantly, the correspondence between critical points is one to one (at least in the smooth case treated in this section). This correspondence allows us to define the notion of critical state, which is a triplet $(\overline{\mathbf m},\overline{\mathbf{h}}_{\rm s},\overline{\mathbf b})$ that satisfies the Maxwell equations and the Euler-Lagrange equations of both models simultaneously. The precise definition is as follows.

\begin{definition}[Critical states]\label{def:critstate}
Under the same assumptions as in Definition \ref{def:crit-smooth} and either (i) or (ii) in  Theorem~\ref{thm:eq-smooth}, a triplet $(\overline{\mathbf m},\overline{\mathbf{h}}_{\rm s},\overline{\mathbf b})\in L^2(\mathbf{y}(\Omega); \mathbb{R}^3)\times L^2(\mathbb{R}^3; \mathbb{R}^3)\times L^2(\mathbb{R}^3; \mathbb{R}^3)$ is a critical state if
\begin{equation}\label{eq:maxdefb}
	\operatorname{curl}\overline{\mathbf{h}}_s=0, \quad \operatorname{div}\overline{\mathbf{b}}=0,\qquad \overline{\mathbf b}=\mathbf{b}_{\rm a}+\mu_0(\chi_{\mathbf{y}(\Omega)}\overline{\mathbf{m}}+\overline{\mathbf{h}}_{\rm s}), 
\end{equation}
and
\begin{equation}\label{eq:duality3}
	\nabla_{\mathbf{m}} \widehat{\Psi}_{\mathbf{y}}\left(\cdot,\overline{\mathbf{m}}\right)=\overline{\mathbf b}  \quad \text { a.e. in } \mathbf{y}(\Omega),
\end{equation}		
or, equivalently,
\begin{equation}\label{eq:duality4}
	 -\nabla_{\mathbf{b}}\Phi_{\mathbf{y}}(\cdot,\overline{\mathbf{b}})=\overline{\mathbf m} \quad \text { a.e. in } \mathbf{y}(\Omega).
\end{equation}	
\end{definition}
\noindent Here, recall that $\widehat{\Psi}_{\mathbf{y}}(\boldsymbol{\xi},\mathbf{m})=\widehat{\Phi}_{\mathbf{y}}(\boldsymbol{\xi},\mathbf{m})+\frac{\mu_0}{2}|\mathbf{m}|^2$ 
is the augmented density introduced in \eqref{def:hatPsi}.

\begin{remark}\label{rem:equivalent-def-consistent}
As the smooth Legendre-Fenchel transformation is an involution by Lemma~\ref{lem:duality}, the duality relations \eqref{H0} and 
\eqref{H0p} are equivalent. Using either one and \eqref{ldual-dual1}, we see that \eqref{eq:duality3} and \eqref{eq:duality4} are indeed equivalent, too. 
\end{remark}
\begin{remark}\label{rem:link-to-equilibirum}
The third of \eqref{eq:maxdefb}, i.e., the physical induction relation
$\overline{\mathbf{b}}=\mathbf{b}_{\rm a}
+\mu_0(\chi_{\mathbf{y}(\Omega)}\overline{\mathbf{m}}+\overline{\mathbf{h}}_{\rm s})$,
holds in both directions of Theorem~\ref{thm:eq-smooth}:
in direction~(i) it is the very definition \eqref{b-from-mh-smooth} of 
$\overline{\mathbf{b}}$ from $(\overline{\mathbf{m}},\overline{\mathbf{h}}_{\rm s})$;
in direction~(ii) it is the conclusion \eqref{eq:mh-from-b-smooth} of 
the proof once $(\overline{\mathbf{m}},\overline{\mathbf{h}}_{\rm s})$ 
is defined via \eqref{mh-from-b-smooth}.

The duality condition \eqref{eq:duality3}, i.e., 
$\nabla_{\mathbf{m}}\widehat{\Psi}_{\mathbf{y}}(\cdot,\overline{\mathbf{m}})
=\overline{\mathbf{b}}$,
is intrinsically linked to equilibrium: combined with the third of 
\eqref{eq:maxdefb}, it is equivalent to the Euler-Lagrange equation 
\eqref{equil-mh-smooth} for $(\overline{\mathbf{m}},\overline{\mathbf{h}}_{\rm s})$., while the first two of \eqref{eq:maxdefb} are just the Maxwell constraint.

In direction~(ii) of Theorem~\ref{thm:eq-smooth}, the starting assumption is that the second of \eqref{eq:maxdefb} holds, and that the Euler-Lagrange equation \eqref{equil-b-smooth} holds. The proof then shows that the equivalent 
condition \eqref{eq:duality4}, i.e., 
$-\nabla_{\mathbf{b}}\Phi_{\mathbf{y}}(\cdot,\overline{\mathbf{b}})
=\overline{\mathbf{m}}$, serves as the constitutive relation used 
to \emph{define} $\overline{\mathbf{m}}$ from $\overline{\mathbf{b}}$ 
via \eqref{mh-from-b-smooth}. The proof then shows that the pair $(\overline{\mathbf{m}},\overline{\mathbf{h}}_{\rm s})$ 
so defined automatically satisfies the Maxwell constraints and 
\eqref{equil-mh-smooth}, which is exactly \eqref{eq:duality4}.
\end{remark}

Given a critical state $(\overline{\mathbf m},\overline{\mathbf{h}}_{\rm s},\overline{\mathbf b})$, a natural question is whether or not the equality $\widehat{\cE}_{\mathbf{y}}(\overline{\mathbf{b}})=\widehat{\mathcal E}_{\mathbf{y}}(\overline{\mathbf{m}},\overline{\mathbf{h}}_{\rm s})$ holds, i.e., the energy values of the two functionals correspond. The answer to this question is positive. In particular, it hinges on the following algebraic equality which is a consequence of the duality of the Legendre-Fenchel transform. 
\begin{proposition}\label{prop:eduality2}
Assume that $\overline{\mathbf m}\in L^2(\mathbf{y}(\Omega); \mathbb{R}^3)$ and $\overline{\mathbf b}\in L^2(\mathbb{R}^3; \mathbb{R}^3)$ are related by the relation \eqref{eq:duality3}. Then 
\begin{equation}\label{eduality2}
	\widehat\Psi(\cdot,\overline{\mathbf{m}})+\widehat\Psi^\diamond(\cdot,\overline{\mathbf{b}})=\overline{\mathbf{m}}\cdot \overline{\mathbf{b}}\quad \text{on $\mathbf{y}(\Omega)$.} 
\end{equation}
\end{proposition}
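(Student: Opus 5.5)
The identity is pointwise. The plan is to verify it for a.e.\ fixed $\boldsymbol{\xi}\in\mathbf y(\Omega)$, regarding $\widehat\Psi=\widehat\Psi_{\mathbf y}(\boldsymbol{\xi},\cdot)$ as a function on $\R^3$ alone. Under \eqref{H3}, the gradient map $\mathbf m\mapsto\nabla_{\mathbf m}\widehat\Phi_{\mathbf y}(\boldsymbol{\xi},\mathbf m)$ is continuous and continuously invertible. It then remains to relate this to the augmented density and to the generalized transform \eqref{genFc-smooth}. There is one subtle point: \eqref{H3} is stated for $\widehat\Phi$, while the transform and \eqref{eq:duality3} involve $\widehat\Psi=\widehat\Phi+\frac{\mu_0}{2}|\mathbf m|^2$. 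I would therefore read the hypotheses, as the paper does in \eqref{H0}, as guaranteeing that $\nabla_{\mathbf m}\widehat\Psi(\boldsymbol{\xi},\cdot)$ is invertible, so that $\widehat\Psi^\diamond$ is defined as the smooth Legendre transform of $\widehat\Psi$.

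The core step is to unfold the definition \eqref{genFc-smooth}. For a $C^1$ function $\Psi$ with invertible gradient $G=\nabla\Psi$, the generalized transform is
\begin{equation*}
\Psi^\diamond(\mathbf b)=\mathbf b\cdot G^{-1}(\mathbf b)-\Psi\bigl(G^{-1}(\mathbf b)\bigr),
\end{equation*}
that is, the Legendre transform evaluated along the inverse gradient, without any supremum. Now \eqref{eq:duality3} says exactly that $\overline{\mathbf b}(\boldsymbol{\xi})=G(\overline{\mathbf m}(\boldsymbol{\xi}))$, so $G^{-1}(\overline{\mathbf b})=\overline{\mathbf m}$. Substituting this gives $\widehat\Psi^\diamond(\overline{\mathbf b})=\overline{\mathbf b}\cdot\overline{\mathbf m}-\widehat\Psi(\overline{\mathbf m})$, which is \eqref{eduality2} after rearranging. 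If instead the paper's Lemma~\ref{lem:duality} already contains the Fenchel-equality form (equality in Fenchel--Young precisely when $\mathbf b=\nabla\Psi(\mathbf m)$, cf.~\eqref{ldual-dual1}), I would simply invoke it pointwise.

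The remaining issues are measure-theoretic and finiteness questions. The relation \eqref{eq:duality3} holds only a.e., so the identity holds a.e.\ in $\mathbf y(\Omega)$; I would interpret ``on $\mathbf y(\Omega)$'' in this sense. Both sides are finite there, since $\widehat\Psi$ is differentiable (hence finite) at $\overline{\mathbf m}(\boldsymbol{\xi})$. Measurability of $\boldsymbol{\xi}\mapsto\widehat\Psi^\diamond(\boldsymbol{\xi},\overline{\mathbf b}(\boldsymbol{\xi}))$ is not needed for the pointwise statement itself, only later when integrating.

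I expect the main obstacle to be bookkeeping rather than analysis. First, one must check that the form of \eqref{genFc-smooth} matches the formula above, in particular its sign convention and that it acts on $\widehat\Psi$ rather than $\widehat\Phi$. Second, one must ensure that the invertibility hypothesis applies to $\nabla\widehat\Psi$.
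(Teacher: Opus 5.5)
Your proposal is correct and matches the paper's argument: the paper simply cites Lemma~\ref{lem:duality}, and that lemma proves the Fenchel identity \eqref{ldual-dual2} by the same substitution $(\nabla_{\mathbf m}\widehat\Psi_{\mathbf y})^{-1}(\overline{\mathbf b})=\overline{\mathbf m}$ into \eqref{genFc-smooth}, applied pointwise a.e. You are also right that invertibility has to be assumed for $\nabla_{\mathbf m}\widehat\Psi_{\mathbf y}$ rather than $\nabla_{\mathbf m}\widehat\Phi_{\mathbf y}$ (the paper leaves this implicit), and your direct route avoids the Lipschitz hypothesis on the inverse gradient that Lemma~\ref{lem:duality} imposes but this identity does not need.
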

\begin{proof}
This follows from Lemma~\ref{lem:duality}.
\end{proof}\bbb
The next propostion explains why \eqref{eduality2} is a key equality to obtain the equivalence of the energies at critical points.
\begin{proposition}[Energy equality]\label{prop:ee}
Assume that either \eqref{H1}-\eqref{H2} or \eqref{H1p}-\eqref{H2p} hold. Moreover, 
let $(\overline{\mathbf m},\overline{\mathbf{h}}_{\rm s})\in L^2(\mathbf{y}(\Omega); \mathbb{R}^3)\times L^2(\mathbb R^3; \mathbb{R}^3)$ and $\overline{\mathbf b}\in L^2(\mathbb R^3; \mathbb{R}^3)$ be such that
 \eqref{eduality2} holds and that
\begin{equation}\label{b0m0h0}
	\overline{\mathbf{b}}=\mu_0(\chi_{\mathbf{y}(\Omega)}\overline{\mathbf{m}}+\overline{\mathbf{h}}_{\rm s})+\mathbf{b}_{\rm a}\quad \text{on $\R^3$}.
\end{equation}
Then
\begin{equation}\label{ee}
\mathcal E_{\mathbf{y}}(\overline{\mathbf{b}})=\widehat{\mathcal E}_{\mathbf{y}}(\overline{\mathbf{m}},\overline{\mathbf{h}}_{\rm s}).
\end{equation}
\end{proposition}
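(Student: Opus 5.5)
The plan is a direct pointwise computation. I substitute the duality identity \eqref{eduality2} into the material part of $\cE_{\mathbf y}(\overline{\mathbf b})$ and the induction relation \eqref{b0m0h0} into its field part, and check that every term not present in $\widehat\cE_{\mathbf y}(\overline{\mathbf m},\overline{\mathbf h}_{\rm s})$ cancels. No Helmholtz orthogonality should be needed; the cancellation is purely algebraic, pointwise a.e.

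\emph{Step 1 (material part).} By \eqref{H0}, or equivalently \eqref{H0p} via the involution property of Lemma~\ref{lem:duality} (Remark~\ref{rem:equivalent-def-consistent}), we have $\Phi_{\mathbf y}(\cdot,\overline{\mathbf b})=-\widehat\Psi^\diamond_{\mathbf y}(\cdot,\overline{\mathbf b})$ on $\mathbf y(\Omega)$. Then \eqref{eduality2} gives a.e. on $\mathbf y(\Omega)$
\[
\Phi_{\mathbf y}(\cdot,\overline{\mathbf b})=\widehat\Phi_{\mathbf y}(\cdot,\overline{\mathbf m})+\tfrac{\mu_0}{2}|\overline{\mathbf m}|^2-\overline{\mathbf m}\cdot\overline{\mathbf b}.
\]
Using \eqref{b0m0h0} on $\mathbf y(\Omega)$, I expand $\overline{\mathbf m}\cdot\overline{\mathbf b}=\overline{\mathbf m}\cdot\mathbf b_{\rm a}+\mu_0|\overline{\mathbf m}|^2+\mu_0\overline{\mathbf m}\cdot\overline{\mathbf h}_{\rm s}$.

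\emph{Step 2 (field part).} By \eqref{b0m0h0}, $\frac1{2\mu_0}|\overline{\mathbf b}-\mathbf b_{\rm a}|^2=\frac{\mu_0}{2}|\chi_{\mathbf y(\Omega)}\overline{\mathbf m}+\overline{\mathbf h}_{\rm s}|^2$. This equals $\frac{\mu_0}2|\overline{\mathbf h}_{\rm s}|^2$ outside $\mathbf y(\Omega)$, and $\frac{\mu_0}{2}|\overline{\mathbf m}|^2+\mu_0\overline{\mathbf m}\cdot\overline{\mathbf h}_{\rm s}+\frac{\mu_0}2|\overline{\mathbf h}_{\rm s}|^2$ inside.

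\emph{Step 3 (cancellation).} Adding the two parts on $\mathbf y(\Omega)$, the terms $\frac{\mu_0}{2}|\overline{\mathbf m}|^2-\mu_0|\overline{\mathbf m}|^2+\frac{\mu_0}{2}|\overline{\mathbf m}|^2$ cancel, and so do the terms $\pm\mu_0\overline{\mathbf m}\cdot\overline{\mathbf h}_{\rm s}$. What remains is $\widehat\Phi_{\mathbf y}(\cdot,\overline{\mathbf m})-\overline{\mathbf m}\cdot\mathbf b_{\rm a}+\frac{\mu_0}2|\overline{\mathbf h}_{\rm s}|^2$. Outside $\mathbf y(\Omega)$ only $\frac{\mu_0}2|\overline{\mathbf h}_{\rm s}|^2$ remains. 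Integrating then yields \eqref{ee}, provided the integrals may be split and recombined.

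\emph{Main obstacle (justification of integration).} The obstacle is the integration step, since the densities may take the value $+\infty$. Because $\overline{\mathbf m},\overline{\mathbf h}_{\rm s},\overline{\mathbf b},\mathbf b_{\rm a}\in L^2$, all quadratic and bilinear terms above are in $L^1$. The pointwise identity of Step 1 therefore says that $\Phi_{\mathbf y}(\cdot,\overline{\mathbf b})$ and $\widehat\Phi_{\mathbf y}(\cdot,\overline{\mathbf m})$ differ by an $L^1$ function. Under \eqref{H1}--\eqref{H2}, measurability of the composition follows from the Carath\'eodory-type assumption in \eqref{H1}, and the negative part of $\widehat\Phi_{\mathbf y}(\cdot,\overline{\mathbf m})$ is integrable by the lower bound in \eqref{H2}. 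I use the finiteness of the measure of $\mathbf y(\Omega)$ to absorb the constant $C$. So $\int\widehat\Phi_{\mathbf y}$ is well defined in $(-\infty,+\infty]$, hence so is $\int\Phi_{\mathbf y}$, and the integrals of both sides agree, with both possibly equal to $+\infty$. The case \eqref{H1p}--\eqref{H2p} is symmetric. Once this is settled, linearity of the integral over the $L^1$ terms concludes the proof.
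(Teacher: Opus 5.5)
Your proposal is correct and is essentially the paper's proof. The paper also uses \eqref{b0m0h0} to rewrite $\frac{\mu_0}{2}|\overline{\mathbf h}_{\rm s}|^2$ in terms of $\overline{\mathbf b}-\mathbf b_{\rm a}$ and $\overline{\mathbf m}$, and then uses \eqref{eduality2} with $\Phi_{\mathbf y}=-\widehat\Psi^\diamond_{\mathbf y}$ to turn $\widehat\Psi_{\mathbf y}(\cdot,\overline{\mathbf m})-\overline{\mathbf m}\cdot\overline{\mathbf b}$ into $\Phi_{\mathbf y}(\cdot,\overline{\mathbf b})$; this is the same pointwise cancellation, run from $\widehat\cE_{\mathbf y}$ to $\cE_{\mathbf y}$ rather than the other way. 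Your justification of the integration step (the $L^1$ difference, the lower bound from \eqref{H2} or \eqref{H2p}, and finiteness of $|\mathbf y(\Omega)|$) is sound and is something the paper leaves tacit, as is the density relation $\Phi_{\mathbf y}=-\widehat\Psi^\diamond_{\mathbf y}$, which you rightly invoke although it is not among the stated hypotheses.
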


\begin{corollary}[Correspondence between energy values at constrained critical points]\label{cor:eneq-smooth}
Assume that either \eqref{H0}--\eqref{H3} or \eqref{H0p}--\eqref{H3p} are satisfied. \bbb
If $(\overline{\mathbf{m}},\overline{\mathbf{h}}_{\rm s},\overline{\mathbf{b}})$ is a critical state, then the following energy equality holds:
\begin{equation}\label{eq:eneq-smooth}
	\widehat{\cE}_{\mathbf{y}}(\overline{\mathbf{m}},\overline{\mathbf{h}}_{\rm s})=\cE_{\mathbf{y}}(\overline{\mathbf{b}}).
\end{equation}
\end{corollary}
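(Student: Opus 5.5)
The plan is to reduce the corollary to Proposition~\ref{prop:ee}. That proposition already gives \eqref{eq:eneq-smooth} as soon as two hypotheses hold: the pointwise Fenchel equality \eqref{eduality2} on $\mathbf y(\Omega)$, and the induction relation \eqref{b0m0h0} on $\R^3$. So it suffices to extract both from the definition of a critical state, under either set of assumptions.

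\textbf{The induction relation.} Relation \eqref{b0m0h0} is literally the third identity in \eqref{eq:maxdefb}, which is part of Definition~\ref{def:critstate}. It therefore needs no argument.

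\textbf{The Fenchel equality.} By Definition~\ref{def:critstate}, a critical state satisfies \eqref{eq:duality3}, namely $\nabla_{\mathbf m}\widehat\Psi_{\mathbf y}(\cdot,\overline{\mathbf m})=\overline{\mathbf b}$ a.e.\ in $\mathbf y(\Omega)$. Proposition~\ref{prop:eduality2} then yields $\widehat\Psi_{\mathbf y}(\cdot,\overline{\mathbf m})+\widehat\Psi_{\mathbf y}^\diamond(\cdot,\overline{\mathbf b})=\overline{\mathbf m}\cdot\overline{\mathbf b}$ a.e.\ in $\mathbf y(\Omega)$. The point here is that \eqref{H3} makes $\nabla_{\mathbf m}\widehat\Psi_{\mathbf y}=\nabla_{\mathbf m}\widehat\Phi_{\mathbf y}+\mu_0\,\id$ well defined; I read \eqref{H3} as requiring invertibility of this augmented gradient, which is what \eqref{genFc-smooth} needs. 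Lemma~\ref{lem:duality} then applies pointwise for a.e.\ $\boldsymbol\xi$.

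\textbf{Starting from \eqref{H0p}--\eqref{H3p}.} In this case $\widehat\Phi_{\mathbf y}$ is defined by \eqref{H0p}. By the involution property in Lemma~\ref{lem:duality} (see Remark~\ref{rem:equivalent-def-consistent}), we get $\widehat\Psi_{\mathbf y}=(-\Phi_{\mathbf y})^\diamond$ and $\widehat\Psi_{\mathbf y}^\diamond=-\Phi_{\mathbf y}$. Hence \eqref{H0} holds as well, and \eqref{eq:duality3} and \eqref{eq:duality4} are equivalent. The same pointwise identity follows, now written as $\widehat\Psi_{\mathbf y}(\cdot,\overline{\mathbf m})-\Phi_{\mathbf y}(\cdot,\overline{\mathbf b})=\overline{\mathbf m}\cdot\overline{\mathbf b}$.

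\textbf{Measurability.} The remaining hypotheses of Proposition~\ref{prop:ee} are \eqref{H1}--\eqref{H2} or \eqref{H1p}--\eqref{H2p}, and these are assumed directly. They ensure both integrals are well defined in $\R\cup\{+\infty\}$. The main delicate step is measurability and finiteness of the composed densities $\boldsymbol\xi\mapsto\widehat\Phi_{\mathbf y}(\boldsymbol\xi,\overline{\mathbf m}(\boldsymbol\xi))$ and $\Phi_{\mathbf y}(\boldsymbol\xi,\overline{\mathbf b}(\boldsymbol\xi))$.

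\begin{itemize}
\item Measurability: I would argue that lower semicontinuity in the second variable combined with measurability in $\boldsymbol\xi$ gives a normal integrand. Alternatively, under \eqref{H3} one may use differentiability, hence continuity, to obtain a Carathéodory function. Either way the composition with an $L^2$ field is measurable.
\item Finiteness: the Fenchel identity shows that one density is finite wherever the other is. Both are bounded below by $-C|\cdot|^2-C$, which is integrable because $\overline{\mathbf m},\overline{\mathbf b}\in L^2$. So no $\infty-\infty$ ambiguity arises.
\end{itemize}

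\textbf{Conclusion.} With both hypotheses verified, Proposition~\ref{prop:ee} gives \eqref{eq:eneq-smooth}. Should Proposition~\ref{prop:ee} need re-deriving, I would compute $\cE_{\mathbf y}(\overline{\mathbf b})-\widehat\cE_{\mathbf y}(\overline{\mathbf m},\overline{\mathbf h}_{\rm s})$. I would insert \eqref{eduality2} and \eqref{b0m0h0}, then use $\int_{\R^3}\overline{\mathbf h}_{\rm s}\cdot(\overline{\mathbf b}-\mathbf b_{\rm a})\,{\rm d}\boldsymbol\xi=0$. This vanishing holds because curl-free and divergence-free fields are $L^2$-orthogonal by the Helmholtz decomposition. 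Expanding the squares then shows the difference is zero.
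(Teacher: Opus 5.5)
Your proposal is correct and follows the paper's own argument: the paper proves the corollary by reducing it to Proposition~\ref{prop:ee}, reading \eqref{b0m0h0} off the third identity in \eqref{eq:maxdefb} and obtaining \eqref{eduality2} from \eqref{eq:duality3} through Proposition~\ref{prop:eduality2}, which is what you do. Your extra remarks go beyond the paper's two-line proof: the \eqref{H0p} case via the involution, your reading of \eqref{H3} as invertibility of $\nabla_{\mathbf m}\widehat\Psi_{\mathbf y}$, and the measurability discussion. Two caveats there: measurability in $\boldsymbol\xi$ plus lower semicontinuity does not by itself give a normal integrand, so your Carath\'eodory alternative is the valid route; and the Helmholtz orthogonality in your fallback is not needed, because the identity used in the proof of Proposition~\ref{prop:ee} already holds pointwise.
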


\begin{proof}[Proof of Proposition~\ref{prop:ee}]
From \eqref{b0m0h0} it follows that
\begin{equation}
	\frac{\mu_0}{2} |\overline{\mathbf{h}}_{\rm s}|^2=\frac{1}{2\mu_0} |\overline{\mathbf{b}}-\mathbf{b}_{\rm a}|^2+\frac{\mu_0}{2}\chi_{\mathbf{y}(\Omega)}|\overline{\mathbf{m}}|^2-(\overline{\mathbf{b}}-\mathbf{b}_{\rm a})\cdot \overline{\mathbf{m}}.
\end{equation}
This, along with \eqref{eduality2} and the definitions of $\widehat\cE_{\mathbf y}(\overline{\mathbf{m}},\overline{\mathbf{h}}_{\rm s})$ and $\cE_{\mathbf{y}}(\overline{\mathbf{b}})$, gives the following chain of equalities:
\begin{equation*}
	\begin{aligned}
	\widehat\cE_{\mathbf y}(\overline{\mathbf{m}},\overline{\mathbf{h}}_{\rm s})&=
			 \int_{\mathbf{y}(\Omega)} \big( \widehat\Psi_{\mathbf{y}}(\boldsymbol{\xi},\overline{\mathbf{m}})
			-\overline{\mathbf m}\cdot\overline{\mathbf b} \big) \,{\rm d}\boldsymbol{\xi}
				+\int_{\mathbb R^3}\frac{1}{2\mu_0}|\overline{\mathbf{b}}-\mathbf{b}_{\rm a}|^2 \,{\rm d}\boldsymbol{\xi}
				\\
				&=\int_{\mathbf{y}(\Omega)} -\widehat\Psi_{\mathbf{y}}^\diamond(\boldsymbol{\xi},\overline{\mathbf{b}}) \,{\rm d}\boldsymbol{\xi}
				+\int_{\mathbb R^3}\frac{1}{2\mu_0}|\overline{\mathbf{b}}-\mathbf{b}_{\rm a}|^2 \,{\rm d}\boldsymbol{\xi}\\
				&= \cE_{\mathbf{y}}(\overline{\mathbf{b}}).
	\end{aligned}
\end{equation*}
\end{proof}
\begin{proof}[Proof of Corollary~\ref{cor:eneq-smooth}]
In view of Proposition~\ref{prop:ee}, we only have to check \eqref{eduality2} and \eqref{b0m0h0}. The former is due to Proposition~\ref{prop:eduality2}, while the latter is given
by \eqref{eq:maxdefb}.
\end{proof}

\begin{remark}
To verify \eqref{ee} in the proof of Proposition~\ref{prop:ee}, we have used not only
\eqref{b0m0h0} and the duality property \eqref{eduality2}
(which holds by convex or concave duality in the convex/concave case,
and by standard properties of the Legendre-Fenchel transform in the smooth case),
but also the remarkable coincidence that their combination completely eliminates
both $\overline{\mathbf{b}}$ and $\overline{\mathbf{m}}$ in the difference
$\cE_{\mathbf{y}}(\overline{\mathbf{b}})-\widehat\cE_{\mathbf{y}}(\overline{\mathbf{m}},\overline{\mathbf{h}}_{\rm s})$. However, strictly speaking, energy equality does not require that the triplet $(\overline{\mathbf m},\overline{\mathbf{h}}_{\rm s},\overline{\mathbf b})$ is a critical state in the sense of Definition~\ref{def:critstate}. 
\end{remark}

\begin{remark}
If the functions $\widehat\Psi_{\mathbf{y}}(\boldsymbol{\xi},\cdot)$ and $\Phi_{\mathbf{y}}(\boldsymbol{\xi},\cdot)$ are convex or concave, respectively, the duality relation  \eqref{eduality2} is not only a consequence of \eqref{eq:duality4}, but it in fact an equivalent condition. Thus, in view of Remark \ref{rem:equivalent-def-consistent}, \eqref{eq:duality4} and \eqref{b0m0h0} imply that the pair $(\overline{\mathbf m},\overline{\mathbf{h}}_{\rm s})$ satisfies the Euler-Lagrange equation of the functional $\widehat\cE_{\mathbf{y}}$. This however still does not guarantee that the pair $(\overline{\mathbf m},\overline{\mathbf h}_s)$ satisfies the Maxwell constraints.
\end{remark}

\begin{remark}We emphasize that if $(\mathbf{m},\mathbf{h}_{\rm s})$ is a generic pair satisfying the Maxwell constraint, the field  $\mathbf{b}=\mu_0(\chi_{\mathbf{y}(\Omega)}\mathbf{m}+\mathbf{h}_{\rm s})+\mathbf{b}_{\rm a}$ is divergence free, but the energies $\widehat\cE_{\mathbf{y}}(\mathbf{m},\mathbf{h}_{\rm s})$ and $\cE_{\mathbf{y}}(\mathbf{b})$ do not necessarily coincide for generic triplets $(\mathbf{m},\mathbf{h}_{\rm s},\mathbf{b})$ satisfying \eqref{eq:maxdefb}, as can be easily seen in the examples of Section~\ref{sec:ex}. For those, the energy equality is a direct consequence of either \eqref{eq:duality3} or \eqref{eq:duality4} which in turn are linked to equilibrium (see Remark \ref{rem:link-to-equilibirum}). 
\end{remark}

\begin{remark}
The proof of Proposition~\ref{prop:ee} is purely algebraic: it uses only the
Fenchel identity \eqref{eduality2} and the physical induction relation \eqref{b0m0h0},
neither of which requires differentiability of the energy densities. The same argument
therefore applies verbatim in the convex/concave setting of Section~\ref{sec:cc}
(see Proposition~\ref{prop:ee-cc}).
\end{remark}

\section{The convex/concave case}\label{sec:cc}

In this section we extend the equivalence results of Section~\ref{sec:smooth} to the nonsmooth setting. The key difference is that differentiability of the energy densities is no longer assumed. Instead, we require that $\widehat\Psi_{\mathbf{y}}(\boldsymbol{\xi},\cdot)$ or $\Phi_{\mathbf{y}}(\boldsymbol{\xi},\cdot)$ be either convex or concave, and we replace gradients by convex subdifferentials (or concave superdifferentials). The Fenchel duality machinery, which in the smooth case relied on standard properties of the Legendre-Fenchel transform, is now applied in its classical convex-analytic form (see, e.g., \cite{Ro70B}).

As in Section~\ref{sec:smooth}, the deformation $\mathbf{y}:\Omega\to\mathbb R^3$ is regarded as a fixed parameter. Throughout, we again assume that $\mathbf y(\Omega)$ is measurable and that 
$\mathbf{b}_a\in L^2(\R^3;\R^3)$ with $\div \mathbf{b}_a=0$ on $\R^3$.\bbb

\medskip

We next define the notion of constrained critical point appropriate to this nonsmooth setting. In the convex (or concave) case, critical points are characterized by subdifferential (or superdifferential) inclusions that generalize the Euler-Lagrange equations of Definition~\ref{def:crit-smooth}.

\begin{remark}
To treat using a single notation both the convex and the concave case, we use the symbol $\partial$ to denote either the convex subdifferential (if it is non-empty) or the concave superdifferential (otherwise), for both functions and functionals. In the convex case, the subdifferential inclusion \eqref{equil-mh} characterizes global minimizers, while in the concave case it characterizes global maximizers. More details are given in the Appendix.
\end{remark}

\begin{remark}
We note that in more general nonsmooth settings beyond the convex/concave framework, defining meaningful notions of stationarity becomes significantly more delicate \cite{Cla90B,Dol14a,Dol21a}.
Many generalized notions of differentiability formally give rise to \emph{necessary} conditions for local extrema, but these can be substantially weaker than their smooth counterparts 
and may fail to characterize stationarity in dynamical context (for an associated gradient flow, e.g.). For convex (or concave) functionals, this is not a concern, as the subdifferential conditions characterize global minimizers (maximizers) and are thus physically meaningful. Moreover, stationarity notions for \emph{semiconvex} (i.e., convex up to a quadratic perturbation) 
or \emph{semiconcave} functionals like ours can be reduced to fully convex/concave cases and are also safe.
\end{remark}

\begin{definition}[Nonsmooth constrained critical points]\label{def:crit}
\begin{enumerate}
    \item[] \hspace{-2.5em}
    \item[(i)] We say that $(\overline{\mathbf{m}},\overline{\mathbf{h}}_{\rm s})\in L^2(\mathbf{y}(\Omega);\R^3)\times L^2(\R^3;\R^3)$ is a \emph{constrained critical point of $\widehat\cE_{\mathbf{y}}$} if $\mathbf h_s$ is curl-free, $\chi_{\mathbf{y}(\Omega)}\overline{\mathbf{m}}+\overline{\mathbf{h}}_{\rm s}$ is divergence-free, and the following subdifferential inclusion holds:
    \begin{equation}\label{equil-mh}
    	\mu_0 \overline{\mathbf{m}}+\mu_0 \overline{\mathbf{h}}_{\rm s}+ \mathbf{b}_{\rm a} \in \partial_{\mathbf{m}} \widehat{\Psi}_{\mathbf{y}}\left(\cdot,\overline{\mathbf{m}}\right) \quad \text { a.e. in } \mathbf{y}(\Omega),
    \end{equation}
    where $\widehat{\Psi}_{\mathbf{y}}$ is defined in \eqref{eduality2}. 
    \item[(ii)] A field $\overline{\mathbf b}\in L^2(\R^3;\R^3)$ is a \emph{constrained critical point of $\cE_{\mathbf{y}}$} if $\operatorname{div}\overline{\mathbf{b}}=0$ in $\R^3$, and there exists $\overline{\varphi}\in \mathring W^{1,2}(\R^3)$ such that
    \begin{equation}\label{equil-b}
    	\frac{\overline{\mathbf{b}}-\mathbf{b}_{\rm a}}{\mu_0}+\nabla\overline{\varphi}\in -\chi_{\mathbf{y}(\Omega)}\partial_{\mathbf{b}}\Phi_{\mathbf{y}}(\cdot,\overline{\mathbf{b}})\qquad\text{a.e. in }\mathbb R^3.
    \end{equation}
\end{enumerate}
\end{definition}
\noindent Here and throughout, $\partial_{\mathbf{m}}\widehat{\Psi}_{\mathbf{y}}$ denotes the convex subdifferential with respect to the variable $\mathbf{m}\in\R^3$ if 
$\widehat{\Psi}_{\mathbf{y}}(\boldsymbol{\xi},\cdot)$ is convex, and the concave superdifferential if $\widehat{\Psi}_{\mathbf{y}}(\boldsymbol{\xi},\cdot)$ is concave. 
The notation for the sub- or superdifferential of $\Phi_{\mathbf{y}}$ with respect to $\mathbf{b}\in\R^3$ is analogous.
For more details see Appendix~\ref{sssec:convexanalysis}. 

As shown in Appendix~\ref{secC:crit} for convex energy densities, the inclusions \eqref{equil-b} and \eqref{equil-mh} are necessary conditions for constrained local minimizers of $\cE_{\mathbf{y}}$ and $\widehat\cE_{\mathbf{y}}$, respectively, and sufficient conditions for global minimizers. In the convex or concave case, they generalize the Euler-Lagrange equations \eqref{equil-b-smooth} and \eqref{equil-mh-smooth} of Definition~\ref{def:crit-smooth}: for a differentiable convex (or concave) function the subdifferential reduces to the singleton consisting of the gradient. In the saddle case, however, the subdifferential inclusions \eqref{equil-b} and \eqref{equil-mh} are not available, and one must rely instead on the smooth framework of Section~\ref{sec:smooth}.

\begin{remark}\label{rem:Psi-vs-Phi-subdiff}
Since $\widehat{\Psi}_{\mathbf{y}}(\boldsymbol{\xi},\mathbf{m})=\widehat{\Phi}_{\mathbf{y}}(\boldsymbol{\xi},\mathbf{m})+\frac{\mu_0}{2}|\mathbf{m}|^2$,
\eqref{equil-mh} is formally the same as
\begin{equation}\label{equil-mh2}
	\mu_0 \overline{\mathbf{h}}_{\rm s}+ \mathbf{b}_{\rm a} \in \partial_{\mathbf{m}} \widehat{\Phi}_{\mathbf{y}}\left(\cdot,\overline{\mathbf{m}}\right).
\end{equation}
However, our examples indicate that $\widehat{\Psi}_{\mathbf{y}}$ can reasonably be expected to be always either convex or concave in $\mathbf{m}$, while
$\widehat{\Phi}_{\mathbf{y}}$ might be neither (cf.~Example~\ref{ex:saturation}). Hence, the formulation \eqref{equil-mh} using $\widehat\Psi_{\mathbf{y}}$ is preferable in the nonsmooth setting.
\end{remark}

\begin{remark}
In the smooth case, \eqref{equil-mh} and \eqref{equil-b} reduce to the Euler-Lagrange equations \eqref{equil-mh-smooth} and \eqref{equil-b-smooth} of Definition~\ref{def:crit-smooth}, since for a differentiable convex (or concave) function the subdifferential is the singleton consisting of the gradient.
\end{remark}

\subsection{Correspondence between constrained critical points}

We now establish the correspondence between constrained critical points of the two models in both directions.

\begin{theorem}[From the $(\mathbf{m},\mathbf{h}_{\rm s})$-based formulation to the $\mathbf{b}$-based formulation]\label{thm:mh-to-b}
Suppose that \eqref{H1} and \eqref{H2} hold. In addition, recall that $\widehat\Psi_{\mathbf{y}}(\boldsymbol{\xi},\mathbf{m})=\widehat\Phi_{\mathbf{y}}(\boldsymbol{\xi},\mathbf{m})+\frac{\mu_0}{2}|\mathbf{m}|^2$
and assume that
\begin{align*}
  &\begin{aligned}[t]
		&\text{either $\widehat\Psi_{\mathbf{y}}$ or $-\widehat\Psi_{\mathbf{y}}$ is proper (\emph{i.e.} not everywhere infinite),\ } \\
&\text{lower semi-continuous and convex in $\mathbf{m}$, for a.e.\ $\boldsymbol{\xi}$.}
	\end{aligned}
	\tag{H4}\label{H4}
\end{align*}
Moreover, let $\Phi_y$ be defined by \eqref{H0} and let
$(\overline{\mathbf{m}},\overline{\mathbf{h}}_{\rm s})$ be a constrained critical point of $\widehat{\cE}_{\mathbf{y}}$ in the sense of Definition~\ref{def:crit} 
which satisfy the Maxwell constraint \eqref{divcon}. Then $\overline{\mathbf b}$ defined by \eqref{b-from-mh-smooth}, i.e.,
\begin{align}\label{b-from-mh}
   \overline{\mathbf{b}}=\mathbf{b}_{\rm a}+\mu_0(\chi_{\mathbf{y}(\Omega)}\overline{\mathbf{m}}+\overline{\mathbf{h}}_{\rm s})
\end{align}
is a constrained critical point of $\cE_{\mathbf{y}}$.
\end{theorem}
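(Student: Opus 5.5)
The plan mirrors the proof of Theorem~\ref{thm:eq-smooth}(i), with the gradient identities replaced by the subdifferential inversion rule of convex analysis. First, since $\overline{\mathbf m}\in L^2$, $\overline{\mathbf h}_{\rm s}\in L^2$ and $\mathbf b_{\rm a}\in L^2$, the field $\overline{\mathbf b}$ defined by \eqref{b-from-mh} lies in $L^2(\R^3;\R^3)$. Its divergence vanishes because $\div(\chi_{\mathbf y(\Omega)}\overline{\mathbf m}+\overline{\mathbf h}_{\rm s})=0$ and $\div\mathbf b_{\rm a}=0$. Since $\curl\overline{\mathbf h}_{\rm s}=0$, the Helmholtz decomposition of Appendix~\ref{secA:Helmholtz} gives $\overline\varphi\in\mathring W^{1,2}(\R^3)$ with $\overline{\mathbf h}_{\rm s}=-\nabla\overline\varphi$. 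This $\overline\varphi$ will be the multiplier in \eqref{equil-b}.

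Outside $\mathbf y(\Omega)$ we have $\frac{\overline{\mathbf b}-\mathbf b_{\rm a}}{\mu_0}+\nabla\overline\varphi=\overline{\mathbf h}_{\rm s}-\overline{\mathbf h}_{\rm s}=0$. The right-hand side of \eqref{equil-b} is $\{0\}$ there, because of the factor $\chi_{\mathbf y(\Omega)}$ (read as $0\cdot\partial=\{0\}$ off $\mathbf y(\Omega)$). So the inclusion holds a.e.\ in $\R^3\setminus\mathbf y(\Omega)$.

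Inside $\mathbf y(\Omega)$ we have
\[
\frac{\overline{\mathbf b}-\mathbf b_{\rm a}}{\mu_0}+\nabla\overline\varphi=\overline{\mathbf m}.
\]
It therefore suffices to show $\overline{\mathbf m}\in-\partial_{\mathbf b}\Phi_{\mathbf y}(\cdot,\overline{\mathbf b})$ a.e.\ in $\mathbf y(\Omega)$. Rewritten via \eqref{b-from-mh}, the hypothesis \eqref{equil-mh} says exactly that $\overline{\mathbf b}\in\partial_{\mathbf m}\widehat\Psi_{\mathbf y}(\cdot,\overline{\mathbf m})$ a.e.\ in $\mathbf y(\Omega)$. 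The argument then splits into two cases according to \eqref{H4}.

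\textbf{Convex case.} Here $\widehat\Psi_{\mathbf y}(\boldsymbol\xi,\cdot)$ is proper, lsc and convex, and $\widehat\Psi^\diamond_{\mathbf y}=\widehat\Psi^*_{\mathbf y}$. The classical inversion rule (Fenchel--Young equality case, \cite{Ro70B}) gives
\[
\mathbf b\in\partial\widehat\Psi_{\mathbf y}(\mathbf m)\iff\mathbf m\in\partial\widehat\Psi^*_{\mathbf y}(\mathbf b),
\]
valid for proper lsc convex functions. Hence $\overline{\mathbf m}\in\partial_{\mathbf b}\widehat\Psi^*_{\mathbf y}(\cdot,\overline{\mathbf b})$. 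Since $\Phi_{\mathbf y}=-\widehat\Psi^*_{\mathbf y}$ is concave, its superdifferential is $-\partial\widehat\Psi^*_{\mathbf y}$, which is precisely the required inclusion.

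\textbf{Concave case.} Here $-\widehat\Psi_{\mathbf y}$ is proper, lsc and convex, and the transform $\diamond$ is the concave conjugate from the Appendix. I would apply the same inversion rule to $g=-\widehat\Psi_{\mathbf y}(\boldsymbol\xi,\cdot)$, tracking signs: $\overline{\mathbf b}$ is a supergradient of $\widehat\Psi_{\mathbf y}$, so $-\overline{\mathbf b}\in\partial g(\overline{\mathbf m})$. Rewriting the concave conjugate in terms of $g^*$ (something like $\widehat\Psi^\diamond(\mathbf b)=-g^*(-\mathbf b)$) should again yield $\overline{\mathbf m}\in-\partial_{\mathbf b}\Phi_{\mathbf y}(\cdot,\overline{\mathbf b})$, now with $\Phi_{\mathbf y}$ convex and $\partial$ its subdifferential.

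\textbf{Main obstacle.} The step I expect to be the main obstacle is this sign and convention bookkeeping. I need to check that the paper's $\diamond$ and its unified symbol $\partial$ match in the concave case, so that the inversion rule indeed produces $-\partial_{\mathbf b}\Phi_{\mathbf y}$ and not $+\partial_{\mathbf b}\Phi_{\mathbf y}$. I will do this by reducing the concave case to the convex one through $g$ and using the Appendix definitions. A minor extra point is measurability of the resulting inclusion, which is immediate here since $\overline\varphi$ is given explicitly. Assumptions \eqref{H1}--\eqref{H2} are only used to make the functionals meaningful.
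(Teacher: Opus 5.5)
Your proposal is correct and follows the paper's proof essentially step by step. The shared steps are: $\div\overline{\mathbf b}=0$ from the Maxwell constraint, the potential $\overline\varphi$ with $\overline{\mathbf h}_{\rm s}=-\nabla\overline\varphi$, rewriting \eqref{equil-mh} as $\overline{\mathbf b}\in\partial_{\mathbf m}\widehat\Psi_{\mathbf y}(\cdot,\overline{\mathbf m})$, and inverting via Fenchel duality \eqref{Fenchelduality} to obtain $\overline{\mathbf m}\in\partial_{\mathbf b}(-\Phi_{\mathbf y})(\cdot,\overline{\mathbf b})$ on $\mathbf y(\Omega)$. The only difference is that you treat the concave case separately, whereas the paper absorbs it into its unified notation for $\partial$ and $\diamond$; your sign bookkeeping there does close, since Definition~\ref{def:genFc} gives exactly $\widehat\Psi_{\mathbf y}^\diamond(\mathbf b)=-g^*(-\mathbf b)$ with $g=-\widehat\Psi_{\mathbf y}$, so $\partial_{\mathbf b}\Phi_{\mathbf y}(\overline{\mathbf b})=-\partial g^*(-\overline{\mathbf b})\ni-\overline{\mathbf m}$ as required.
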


\begin{proof}
As in the proof of Theorem~\ref{thm:eq-smooth}, we observe that $\overline{\mathbf{b}}$ automatically satisfies the Maxwell constraint \eqref{divcon} by construction, and we only need to verify that $\overline{\mathbf{b}}$ satisfies the subdifferential inclusion \eqref{equil-b}. Since $(\overline{\mathbf{m}},\overline{\mathbf{h}}_{\rm s})$ is a constrained critical point of $\widehat{\cE}_{\mathbf{y}}$, the inclusion \eqref{equil-mh} holds.
Substituting the definition of $\overline{\mathbf{b}}$ from \eqref{b-from-mh-smooth}, this  inclusionreads
\begin{equation}\label{cc-inclusion1}
\overline{\mathbf{b}}(\boldsymbol{\xi})\in \partial_{\mathbf{m}} \widehat{\Psi}_{\mathbf{y}}\left(\boldsymbol{\xi}, \overline{\mathbf{m}}(\boldsymbol{\xi})\right)\qquad\text{for a.e. }\boldsymbol{\xi}\text{ in }\mathbf{y}(\Omega).
\end{equation}
By standard properties of the Legendre-Fenchel dual under the convexity or concavity assumption \eqref{H4} (see \eqref{Fenchelduality}), the inclusion \eqref{cc-inclusion1} implies that
\begin{equation*}
  \overline{\mathbf{m}}(\boldsymbol{\xi})\in\partial_{\mathbf{b}}\widehat{\Psi}_{\mathbf{y}}^\diamond(\boldsymbol{\xi},\overline{\mathbf{b}}(\boldsymbol{\xi}))\quad\text{ for a.e. }\boldsymbol{\xi} \text{ in }\mathbf{y}(\Omega).
\end{equation*}
Then, by \eqref{H0},
\begin{equation}\label{cc-inclusion2}
  \overline{\mathbf{m}}(\boldsymbol{\xi})\in\partial_{\mathbf{b}}(-\Phi_{\mathbf{y}})(\boldsymbol{\xi},\overline{\mathbf{b}})\quad\text{ for a.e. }\boldsymbol{\xi} \text{ in }\mathbf{y}(\Omega).
\end{equation}
From $\operatorname{curl}\overline{\mathbf{h}}_{\rm s}=0$, there exists a scalar potential $\overline{\varphi}\in \mathring W^{1,2}(\R^3)$ such that $\overline{\mathbf{h}}_{\rm s}=-\nabla\overline{\varphi}$. From the definition \eqref{b-from-mh-smooth} we then have
\begin{equation}\label{cc-eq2222}
  	\frac{\overline{\mathbf{b}}-\mathbf{b}_{\rm a}}{\mu_0}+\nabla\overline{\varphi}=\chi_{\mathbf{y}(\Omega)}\overline{\mathbf{m}}.
\end{equation}
Combining \eqref{cc-inclusion2} and \eqref{cc-eq2222}, we obtain
\begin{equation}
  \frac{\overline{\mathbf{b}}-\mathbf{b}_{\rm a}}{\mu_0}+\nabla \overline{\varphi}\in \chi_{\mathbf{y}(\Omega)}\partial_{\mathbf{b}}(-\Phi_{\mathbf{y}})(\cdot,\overline{\mathbf{b}})\quad \text{ a.e. in }\mathbb R^3,
\end{equation}
which is exactly the inclusion \eqref{equil-b}. Hence $\overline{\mathbf{b}}$ is a constrained critical point of $\cE_{\mathbf{y}}$.
\end{proof}

\begin{theorem}[From the $\mathbf{b}$-based formulation to the $(\mathbf{m},\mathbf{h}_{\rm s})$-based formulation]\label{thm:b-to-mh}
Suppose that \eqref{H1p} and \eqref{H2p} hold. In addition, assume that
\begin{align*}
 &\text{either $\Phi_{\mathbf{y}}$ or $-\Phi_{\mathbf{y}}$ is proper, lower semi-continuous and convex in $\mathbf{b}$, for a.e.\ $\boldsymbol{\xi}$.}
	\tag{H4$'$}\label{H4p}
\end{align*}
Let $\widehat\Phi_{\mathbf{y}}$ be defined by \eqref{H0} and let
$\overline{\mathbf{b}}\in L^2(\R^3;\R^3)$ be a constrained critical point of $\cE_{\mathbf{y}}$ in the sense of Definition~\ref{def:crit} which satisfies $\div \overline{\mathbf{b}}=0$.
Then there exists $\overline{\mathbf{m}}\in L^2(\mathbf{y}(\Omega);\R^3)$ such that
\begin{equation}\label{mh-from-b}
	\overline{\mathbf{m}}\in-\partial_{\mathbf{b}} \Phi_{\mathbf{y}}\left(\cdot, \overline{\mathbf{b}}\right)\quad\text{a.e.~on }\mathbf{y}(\Omega)
\end{equation}
and the pair $(\overline{\mathbf m},\overline{\mathbf{h}}_{\rm s})$ with $\overline{\mathbf h}_s$ defined by the second of \eqref{mh-from-b-smooth}, i.e., $\overline{\mathbf{h}}_{\rm s}=-\cP[\chi_{y(\Omega)}\overline{\mathbf m}]$, is a constrained critical point of $\widehat{\cE}_{\mathbf{y}}$. In addition, \eqref{b-from-mh} holds.
\end{theorem}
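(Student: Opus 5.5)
The argument mirrors part (ii) of Theorem~\ref{thm:eq-smooth}, with gradients replaced by sub/superdifferentials. The key point is that $\overline{\mathbf m}$ is not given by a formula but is extracted from the inclusion \eqref{equil-b}. Since $\overline{\mathbf b}$ is a constrained critical point, there is $\overline\varphi\in\mathring W^{1,2}(\R^3)$ with
\[
\mathbf g:=\frac{\overline{\mathbf b}-\mathbf b_{\rm a}}{\mu_0}+\nabla\overline\varphi\in-\chi_{\mathbf y(\Omega)}\partial_{\mathbf b}\Phi_{\mathbf y}(\cdot,\overline{\mathbf b})\quad\text{a.e. in }\R^3 .
\]
I will define $\overline{\mathbf m}:=\mathbf g|_{\mathbf y(\Omega)}$. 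The field $\mathbf g$ is in $L^2(\R^3;\R^3)$, so $\overline{\mathbf m}\in L^2(\mathbf y(\Omega);\R^3)$ and it is measurable automatically. No measurable selection argument is needed, because the selection is already supplied by the Lagrange multiplier identity. By construction, \eqref{mh-from-b} holds a.e.\ on $\mathbf y(\Omega)$. Outside $\mathbf y(\Omega)$ the inclusion reads $\mathbf g\in\{0\}$, so $\mathbf g=\chi_{\mathbf y(\Omega)}\overline{\mathbf m}$ a.e.\ in $\R^3$. This is the analogue of \eqref{eq:equil}.

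Next I identify $\overline{\mathbf h}_{\rm s}$. Apply $\cP$ to $\chi_{\mathbf y(\Omega)}\overline{\mathbf m}=\mu_0^{-1}(\overline{\mathbf b}-\mathbf b_{\rm a})+\nabla\overline\varphi$. Both $\overline{\mathbf b}$ and $\mathbf b_{\rm a}$ are divergence-free, hence orthogonal to gradients in $L^2$ (Appendix~\ref{secA:Helmholtz}), so $\cP$ annihilates them. The gradient $\nabla\overline\varphi$ is fixed by $\cP$. Therefore $\cP[\chi_{\mathbf y(\Omega)}\overline{\mathbf m}]=\nabla\overline\varphi$, i.e.\ $\overline{\mathbf h}_{\rm s}=-\nabla\overline\varphi$. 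It follows that $\overline{\mathbf h}_{\rm s}$ is curl-free, and substituting back gives \eqref{b-from-mh}: $\overline{\mathbf b}=\mathbf b_{\rm a}+\mu_0(\chi_{\mathbf y(\Omega)}\overline{\mathbf m}+\overline{\mathbf h}_{\rm s})$. Taking divergences, $\chi_{\mathbf y(\Omega)}\overline{\mathbf m}+\overline{\mathbf h}_{\rm s}=\mu_0^{-1}(\overline{\mathbf b}-\mathbf b_{\rm a})$ is divergence-free. So both Maxwell constraints hold.

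It remains to verify \eqref{equil-mh}, where $\widehat\Psi_{\mathbf y}=\widehat\Phi_{\mathbf y}+\frac{\mu_0}{2}|\cdot|^2=(-\Phi_{\mathbf y})^\diamond$ by \eqref{H0p}. (The statement cites \eqref{H0}, but \eqref{H0p} is the relevant definition in this direction.) From \eqref{mh-from-b} we have $\overline{\mathbf m}\in\partial_{\mathbf b}(-\Phi_{\mathbf y})(\cdot,\overline{\mathbf b})$ pointwise a.e. Under \eqref{H4p}, $-\Phi_{\mathbf y}(\boldsymbol\xi,\cdot)$ is either proper, lsc and convex, or its negative is. In the convex case the Fenchel inversion rule \eqref{Fenchelduality} gives $\overline{\mathbf b}\in\partial_{\mathbf m}(-\Phi_{\mathbf y})^*(\cdot,\overline{\mathbf m})$. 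The concave case is analogous with the concave conjugate and superdifferentials. This is precisely $\overline{\mathbf b}\in\partial_{\mathbf m}\widehat\Psi_{\mathbf y}(\cdot,\overline{\mathbf m})$. Using \eqref{b-from-mh} on $\mathbf y(\Omega)$, this becomes $\mu_0\overline{\mathbf m}+\mu_0\overline{\mathbf h}_{\rm s}+\mathbf b_{\rm a}\in\partial_{\mathbf m}\widehat\Psi_{\mathbf y}(\cdot,\overline{\mathbf m})$, which is \eqref{equil-mh}.

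The main obstacle is this last step. I must make sure that the generalized transform $\diamond$ coincides with the convex (resp.\ concave) conjugate for the relevant sign, and that the inversion rule $\mathbf m\in\partial f(\mathbf b)\iff\mathbf b\in\partial f^*(\mathbf m)$ applies. That rule needs exactly properness, lower semicontinuity and convexity from \eqref{H4p}, together with $f^{**}=f$. In the concave case I will apply the convex statement to $\Phi_{\mathbf y}$ and track the sign conventions: $(-\Phi_{\mathbf y})^\diamond$ becomes minus the convex conjugate of $\Phi_{\mathbf y}$ evaluated at $-\mathbf m$, possibly up to sign. I also need to check that the appendix convention for $\partial$ (sub- versus superdifferential) matches on both sides. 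Everything else is the same Helmholtz-projection algebra as in Theorem~\ref{thm:eq-smooth}(ii).
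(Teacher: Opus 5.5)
Your proposal is correct and follows the paper's proof step by step. You define $\overline{\mathbf m}$ as the restriction to $\mathbf y(\Omega)$ of $\mu_0^{-1}(\overline{\mathbf b}-\mathbf b_{\rm a})+\nabla\overline\varphi$, use the Helmholtz projection to get $\overline{\mathbf h}_{\rm s}=-\nabla\overline\varphi$ (hence the Maxwell constraints and \eqref{b-from-mh}), and then apply Fenchel inversion with $\widehat\Psi_{\mathbf y}=(-\Phi_{\mathbf y})^\diamond$ to obtain \eqref{equil-mh}. You are also right that the relevant relation is \eqref{H0p} rather than \eqref{H0}, which is what the paper's proof actually uses.
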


\begin{proof}[Proof of Theorem~\ref{thm:b-to-mh}]
Since $\overline{\mathbf{b}}$ is a constrained critical point of $\cE_{\mathbf{y}}$, there exists $\overline{\varphi}\in \mathring W^{1,2}(\mathbb R^3)$ such that
\begin{equation}\label{cc-critical5}
	\frac{\overline{\mathbf{b}}-\mathbf{b}_{\rm a}}{\mu_0}+\nabla \overline{\varphi} \in \chi_{\mathbf{y}(\Omega)}\partial_{\mathbf{b}}(-\Phi_{\mathbf{y}})(\cdot,\overline{\mathbf{b}}) \quad\text{ a.e. in }\R^3.
\end{equation}
We define
\begin{equation}\label{cc-defm0}
	\overline{\mathbf{m}}:=\left.\left(\frac{\overline{\mathbf{b}}-\mathbf{b}_{\rm a}}{\mu_0}+\nabla \overline{\varphi}\right)\right|_{\mathbf{y}(\Omega)}\in L^2(\mathbf{y}(\Omega);\R^3).
\end{equation}
We proceed as in the proof of Theorem~\ref{thm:eq-smooth}, to show that the definition \eqref{mh-from-b-smooth}, along with \eqref{cc-defm0}, implies that 
\begin{equation}
\overline{\mathbf h_{\rm s}}=-\nabla \overline{\varphi},
\end{equation}
which, together with the hypothesis that both $\overline{\mathbf b}$ and $\mathbf b_{\rm a}$ are divergence-free, implies that the pair $(\overline{\mathbf m},\overline{\mathbf{h}}_{\rm s})$ satisfies the Maxwell constraints and that the relation \eqref{b-from-mh} holds. 

By \eqref{cc-defm0} and \eqref{cc-critical5}, we have $\overline{\mathbf{m}}\in \partial_{\mathbf{b}}(-\Phi_{\mathbf{y}})(\cdot,\overline{\mathbf{b}})$ a.e.~on $\mathbf{y}(\Omega)$.
Since $\Phi_{\mathbf{y}}$ is either convex or concave in $\mathbf{b}$ by \eqref{H4p}, we have $\Phi_{\mathbf{y}}=(\Phi_{\mathbf{y}}^\diamond)^\diamond=-\widehat\Psi_{\mathbf{y}}^\diamond$, the latter by \eqref{H0p} with $\widehat{\Psi}_{\mathbf{y}}(\boldsymbol{\xi},\mathbf{m})=\widehat{\Phi}_{\mathbf{y}}(\boldsymbol{\xi},\mathbf{m})+\frac{\mu_0}{2}|\mathbf{m}|^2$. Hence
\begin{equation}\label{cc-dual1}
	\overline{\mathbf{m}}\in \partial_{\mathbf{b}}\widehat\Psi_{\mathbf{y}}^\diamond(\cdot,\overline{\mathbf{b}})~~~\text{a.e.~on	}\mathbf{y}(\Omega).
\end{equation}
By convex (or concave) duality (see \eqref{Fenchelduality}), \eqref{cc-dual1} is equivalent to
\begin{equation}\label{cc-dual2}
	\overline{\mathbf{b}}\in\partial_{\mathbf{m}}\widehat\Psi_{\mathbf{y}}(\cdot, \overline{\mathbf{m}})~~~\text{a.e.~on	}\mathbf{y}(\Omega).
\end{equation}
Combined with \eqref{cc-defm0}, the inclusion \eqref{cc-dual2} gives the inclusion \eqref{equil-mh}. Hence $(\overline{\mathbf{m}},\overline{\mathbf{h}}_{\rm s})$ is a constrained critical point of $\widehat\cE_{\mathbf{y}}$.
\end{proof}

\begin{remark}[Alternative argument via direct global minimization]\label{rem:global-min}
Equality of energies at the global infimum
can also be shown by a direct variational argument already observed in \cite[Proposition 2.1]{PeYa09a} (see also the introduction of \cite{PedregalYan2010DualityMicromagnetics}).
It still works for our more general models and does not even require assumptions of convexity. However,
the information we get is very limited unless $\widehat \cE$ is bounded from below. Assuming such a lower bound excludes all non-constant concave $\widehat\Phi$, in particular the diamagnetic case.
The argument goes as follows.

Exploiting the variational characterisation of the projection $\cP$ (cf.~Remark~\ref{rem:h-from-m} in Appendix~\ref{secA:Helmholtz}), the stray field term
in $\widehat \cE$ can be computed by minimization of an auxiliary functional:
\begin{equation}
	\int_{\R^3} \frac{\mu_0}{2}|\mathbf{h}_{\rm s}|^2\,{\rm d}\boldsymbol{\xi}=\int_{\R^3} \frac{\mu_0}{2}|\cP[\chi_{\mathbf{y}(\Omega)}\mathbf{m}]|^2\,{\rm d}\boldsymbol{\xi}=\min_{\mathbf{b}\in L^2_{\div}} \int_{\R^3} \frac{\mu_0}{2} |\mathbf{b}-\chi_{\mathbf{y}(\Omega)}\mathbf{m}|^2 \,{\rm d}\boldsymbol{\xi},
\end{equation}
where $L^2_{\div}:=\{\mathbf{b}\in L^2(\R^3;\R^3):\div \mathbf{b}=0\}$.
Hence, for any $(\mathbf{m},\mathbf{h}_{\rm s})$ satisfying \eqref{divcon},
$\widehat\cE_{\mathbf{y}}(\mathbf{m},\mathbf{h}_{\rm s})=\min_{\mathbf{b}\in L^2_{\div}} \widehat\cF_{\mathbf{y}}(\mathbf{m},\mathbf{b})$,
where
\begin{equation}
	\widehat\cF_{\mathbf{y}}(\mathbf{m},\mathbf{b}):=\int_{\mathbf{y}(\Omega)} \big(\widehat\Phi_{\mathbf{y}}(\boldsymbol{\xi},\mathbf{m})- \mathbf{b}_{\rm a}\cdot \mathbf{m}\big)\,{\rm d}\boldsymbol{\xi}+
	\int_{\R^3}  \frac{\mu_0}{2}|\mathbf{b}-\chi_{\mathbf{y}(\Omega)}\mathbf{m}|^2\,{\rm d}\boldsymbol{\xi}.
\end{equation}
Since the order of consecutive minimizations can be exchanged,
\begin{equation}
	\inf_{(\mathbf{m},\mathbf{h}_{\rm s})\,:\,\text{\eqref{divcon} holds}}\widehat\cE_{\mathbf{y}}(\mathbf{m},\mathbf{h}_{\rm s})=\inf_{\mathbf{b}\in L^2_{\div}} \inf_{\mathbf{m}} \widehat\cF_{\mathbf{y}}(\mathbf{m},\mathbf{b}).
\end{equation}
As $\mathbf{m}\in L^2(\mathbf{y}(\Omega);\R^3)$ is unconstrained, the inner infimum on the right hand side can be computed
pointwise via the Fenchel transform of $\widehat\Psi_{\mathbf{y}}$.
This leads to $\inf_{\mathbf{m}} \widehat\cF_{\mathbf{y}}(\mathbf{m},\mathbf{b})=\cE_{\mathbf{y}}(\mathbf{b})$, with $\Phi_{\mathbf{y}}$ defined by \eqref{H0p} as before.
Consequently,
\begin{equation}
	\inf_{(\mathbf{m},\mathbf{h}_{\rm s})\,:\,\text{\eqref{divcon} holds}}\widehat\cE_{\mathbf{y}}(\mathbf{m},\mathbf{h}_{\rm s})=\inf_{\mathbf{b}\in L^2_{\div}} \cE_{\mathbf{y}}(\mathbf{b}).
\end{equation}
\end{remark}

\subsection{Energy equality}\label{ssec:ee-nonsmooth}

We now show that as in the smooth case, at any pair of corresponding critical points, the two energy functionals take the same value.

\begin{proposition}[Energy equality]\label{prop:ee-cc} Assume that either \eqref{H1}--\eqref{H2} or \eqref{H1p}--\eqref{H2p} hold,
let $\overline{\mathbf{m}}\in L^2(\mathbf{y}(\Omega);\R^3)$ and let
$\overline{\mathbf{b}},\overline{\mathbf{h}}_{\rm s}\in L^2(\R^3;\R^3)$.
In addition, suppose that the Fenchel identity
\begin{equation}\label{cc-eduality}
	\widehat\Psi_{\mathbf{y}}(\cdot,\overline{\mathbf{m}})+\widehat\Psi_{\mathbf{y}}^\diamond(\cdot,\overline{\mathbf{b}})=\overline{\mathbf{m}}\cdot \overline{\mathbf{b}}\quad \text{a.e.~on $\mathbf{y}(\Omega)$,}
\end{equation}
the physical induction relation
\begin{equation}\label{cc-b0m0h0}
	\overline{\mathbf{b}}=\mu_0(\chi_{\mathbf{y}(\Omega)}\overline{\mathbf{m}}+\overline{\mathbf{h}}_{\rm s})+\mathbf{b}_{\rm a}\quad \text{on $\R^3$,}
\end{equation}
and the density relations
\begin{equation}\label{cc-PhihatPhi-to-Psi}
	\Phi_{\mathbf{y}}(\boldsymbol{\xi},\mathbf{b})=-{\widehat\Psi_{\mathbf{y}}}^\diamond(\boldsymbol{\xi},\mathbf{b})~~~\text{and}~~~\widehat\Phi_{\mathbf{y}}(\boldsymbol{\xi},\mathbf{m})+\frac{\mu_0}{2}|\mathbf{m}|^2=\widehat\Psi_{\mathbf{y}}(\boldsymbol{\xi},\mathbf{m}),
\end{equation}
hold. Then $\cE_{\mathbf{y}}(\overline{\mathbf{b}})=\widehat{\cE}_{\mathbf{y}}(\overline{\mathbf{m}},\overline{\mathbf{h}}_{\rm s})$.
\end{proposition}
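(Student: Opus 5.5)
The plan is to repeat the purely algebraic computation from the proof of Proposition~\ref{prop:ee}, this time checking that every integral involved is well defined, since the densities may now take the value $\pm\infty$. I first expand the square using the induction relation \eqref{cc-b0m0h0}. From $\mu_0\overline{\mathbf{h}}_{\rm s}=\overline{\mathbf b}-\mathbf b_{\rm a}-\mu_0\chi_{\mathbf y(\Omega)}\overline{\mathbf m}$ I get, pointwise a.e.\ in $\R^3$,
\begin{equation*}
	\frac{\mu_0}{2}|\overline{\mathbf{h}}_{\rm s}|^2=\frac{1}{2\mu_0}|\overline{\mathbf{b}}-\mathbf{b}_{\rm a}|^2+\frac{\mu_0}{2}\chi_{\mathbf{y}(\Omega)}|\overline{\mathbf{m}}|^2-\chi_{\mathbf{y}(\Omega)}(\overline{\mathbf{b}}-\mathbf{b}_{\rm a})\cdot\overline{\mathbf{m}}.
\end{equation*}
All terms here are integrable, because $\overline{\mathbf m},\overline{\mathbf b},\overline{\mathbf h}_{\rm s},\mathbf b_{\rm a}\in L^2$. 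So I may integrate over $\R^3$ and split the integral.

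Next I substitute this into $\widehat\cE_{\mathbf y}(\overline{\mathbf m},\overline{\mathbf h}_{\rm s})$. The term $-\int_{\mathbf y(\Omega)}\mathbf b_{\rm a}\cdot\overline{\mathbf m}$ cancels against the $+\mathbf b_{\rm a}\cdot\overline{\mathbf m}$ coming from the cross term. Using the second relation in \eqref{cc-PhihatPhi-to-Psi}, the quantity $\widehat\Phi_{\mathbf y}(\cdot,\overline{\mathbf m})+\frac{\mu_0}{2}|\overline{\mathbf m}|^2$ combines into $\widehat\Psi_{\mathbf y}(\cdot,\overline{\mathbf m})$. This gives
\begin{equation*}
\widehat\cE_{\mathbf y}(\overline{\mathbf m},\overline{\mathbf h}_{\rm s})=\int_{\mathbf y(\Omega)}\big(\widehat\Psi_{\mathbf y}(\boldsymbol\xi,\overline{\mathbf m})-\overline{\mathbf m}\cdot\overline{\mathbf b}\big)\,{\rm d}\boldsymbol\xi+\frac1{2\mu_0}\int_{\R^3}|\overline{\mathbf b}-\mathbf b_{\rm a}|^2\,{\rm d}\boldsymbol\xi .
\end{equation*}
The Fenchel identity \eqref{cc-eduality} turns the integrand into $-\widehat\Psi^\diamond_{\mathbf y}(\cdot,\overline{\mathbf b})$. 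By the first relation in \eqref{cc-PhihatPhi-to-Psi}, this equals $\Phi_{\mathbf y}(\cdot,\overline{\mathbf b})$, which yields $\cE_{\mathbf y}(\overline{\mathbf b})$.

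The main obstacle is justifying the splitting and recombining of integrals when the densities are extended-real valued. A priori $\infty-\infty$ could occur, or the material integral could fail to be defined. I handle this in two steps.

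First, \eqref{cc-eduality} holds a.e.\ with a finite right-hand side $\overline{\mathbf m}\cdot\overline{\mathbf b}\in L^1$, using $L^2\cdot L^2\subset L^1$ and the Cauchy--Schwarz inequality. Hence neither $\widehat\Psi_{\mathbf y}(\cdot,\overline{\mathbf m})$ nor $\widehat\Psi^\diamond_{\mathbf y}(\cdot,\overline{\mathbf b})$ can be infinite on a set of positive measure, except in opposite-sign combinations that are excluded by the definition of the (generalized) transform. I read \eqref{cc-eduality} as asserting finiteness of both terms, and I will state this explicitly.

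Second, under \eqref{H2} the negative part of $\widehat\Phi_{\mathbf y}(\cdot,\overline{\mathbf m})$ is bounded by $C|\overline{\mathbf m}|^2+C$, which is integrable on the relevant region. So $\int\widehat\Phi_{\mathbf y}$ is well defined in $(-\infty,+\infty]$. Since $\widehat\Psi_{\mathbf y}(\cdot,\overline{\mathbf m})$ and $-\Phi_{\mathbf y}(\cdot,\overline{\mathbf b})$ differ from each other only by the $L^1$ function $\overline{\mathbf m}\cdot\overline{\mathbf b}$, the corresponding integral of $\Phi_{\mathbf y}$ is also well defined, and adding $L^1$ functions commutes with integration even when the value is infinite. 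The case \eqref{H1p}--\eqref{H2p} is symmetric, with the lower bound now placed on $\Phi_{\mathbf y}$ and transferred to $\widehat\Phi_{\mathbf y}$ through the same $L^1$ identity.

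With this bookkeeping in place, the chain of equalities holds in $\R\cup\{\pm\infty\}$. The proposition follows, with $\mathbf y(\Omega)$ tacitly assumed to have finite measure, or the constant in \eqref{H2} taken to be integrable, exactly as in Proposition~\ref{prop:ee}.
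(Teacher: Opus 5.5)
Your proposal is correct and follows the paper's argument; the paper's proof simply refers back to Proposition~\ref{prop:ee}. Like the paper, you expand \eqref{cc-b0m0h0} to rewrite $\frac{\mu_0}{2}|\overline{\mathbf h}_{\rm s}|^2$, absorb $\frac{\mu_0}{2}|\overline{\mathbf m}|^2$ into $\widehat\Psi_{\mathbf y}$, and conclude with \eqref{cc-eduality} and \eqref{cc-PhihatPhi-to-Psi}. Your extra bookkeeping for extended-real values goes beyond what the paper does and is sound, with one sign slip: the identity actually reads $\widehat\Psi_{\mathbf y}(\cdot,\overline{\mathbf m})=\Phi_{\mathbf y}(\cdot,\overline{\mathbf b})+\overline{\mathbf m}\cdot\overline{\mathbf b}$, so it is $\Phi_{\mathbf y}$, not $-\Phi_{\mathbf y}$, that differs from $\widehat\Psi_{\mathbf y}$ by an $L^1$ function. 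That correct relation transfers the lower bound in both directions and makes the material integrals well defined in $(-\infty,+\infty]$.
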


\begin{proof}
The proof is identical to that of Proposition~\ref{prop:ee}: it expands \eqref{cc-b0m0h0}
to rewrite $\frac{\mu_0}{2}|\overline{\mathbf{h}}_{\rm s}|^2$, substitutes \eqref{cc-PhihatPhi-to-Psi}
into the definitions of $\cE_{\mathbf{y}}$ and $\widehat\cE_{\mathbf{y}}$, and applies the
Fenchel identity \eqref{cc-eduality} to conclude. No smoothness of the energy densities is used.
\end{proof}

\begin{remark}
The same observations as in the remarks following Proposition~\ref{prop:ee} apply here:
the proof uses only \eqref{cc-eduality} and \eqref{cc-b0m0h0}, neither of which requires
differentiability. Verifying both conditions simultaneously requires, in practice, the equilibrium condition.
\end{remark}

\begin{corollary}[Energy equality at constrained critical points]\label{cor:eneq-cc}
Assume that either \eqref{H0}--\eqref{H2} and \eqref{H4} or \eqref{H0p}--\eqref{H2p} and \eqref{H4p} hold. If $(\overline{\mathbf{m}},\overline{\mathbf{h}}_{\rm s})$ and $\overline{\mathbf{b}}$ are related by the transition formulas \eqref{b-from-mh} or \eqref{mh-from-b}, respectively, then
\begin{equation}\label{cc-eneq}
	\widehat{\cE}_{\mathbf{y}}(\overline{\mathbf{m}},\overline{\mathbf{h}}_{\rm s})=\cE_{\mathbf{y}}(\overline{\mathbf{b}}).
\end{equation}
\end{corollary}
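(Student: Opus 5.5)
The plan is to reduce Corollary~\ref{cor:eneq-cc} to Proposition~\ref{prop:ee-cc}. That means checking three things for the triplet $(\overline{\mathbf m},\overline{\mathbf h}_{\rm s},\overline{\mathbf b})$: the density relations \eqref{cc-PhihatPhi-to-Psi}, the physical induction relation \eqref{cc-b0m0h0}, and the Fenchel identity \eqref{cc-eduality}. The growth and measurability hypotheses \eqref{H1}--\eqref{H2} or \eqref{H1p}--\eqref{H2p} are exactly those required by Proposition~\ref{prop:ee-cc}, so nothing further is needed on that side. I will treat the two cases separately: coming from the $(\mathbf m,\mathbf h_{\rm s})$-model (Theorem~\ref{thm:mh-to-b}) and coming from the $\mathbf b$-model (Theorem~\ref{thm:b-to-mh}).

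\emph{Density relations.} In the first case, \eqref{H0} defines $\Phi_{\mathbf y}=-\widehat\Psi_{\mathbf y}^\diamond$, so \eqref{cc-PhihatPhi-to-Psi} holds by definition. In the second case, \eqref{H0p} gives $\widehat\Psi_{\mathbf y}=(-\Phi_{\mathbf y})^\diamond$. Since \eqref{H4p} makes $\pm\Phi_{\mathbf y}(\boldsymbol\xi,\cdot)$ proper, lower semicontinuous and convex, the Fenchel--Moreau involution (as already used in the proof of Theorem~\ref{thm:b-to-mh}) yields $\widehat\Psi_{\mathbf y}^\diamond=-\Phi_{\mathbf y}$ for a.e.\ $\boldsymbol\xi$.

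\emph{Induction relation.} In the first case \eqref{cc-b0m0h0} is the definition \eqref{b-from-mh}. In the second case it is the last conclusion of Theorem~\ref{thm:b-to-mh}, with $\overline{\mathbf h}_{\rm s}=-\cP[\chi_{\mathbf y(\Omega)}\overline{\mathbf m}]=-\nabla\overline\varphi$. I read the corollary as applying to the critical points produced by those theorems, so that the Maxwell constraints and the inclusions are available.

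\emph{Fenchel identity.} This is the main step. In the first case, criticality \eqref{equil-mh} together with \eqref{b-from-mh} gives $\overline{\mathbf b}\in\partial_{\mathbf m}\widehat\Psi_{\mathbf y}(\cdot,\overline{\mathbf m})$ a.e.\ on $\mathbf y(\Omega)$; this is \eqref{cc-inclusion1}. In the second case, the proof of Theorem~\ref{thm:b-to-mh} produced the same inclusion \eqref{cc-dual2}. For a proper lsc convex function $f$, the Fenchel--Young inequality $f(m)+f^*(b)\ge m\cdot b$ becomes an equality exactly when $b\in\partial f(m)$; this is the content of \eqref{Fenchelduality}. The concave case follows by applying the same statement to $-f$, with the sign conventions built into $\diamond$ and the superdifferential. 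Applying this pointwise gives \eqref{cc-eduality} a.e., and Proposition~\ref{prop:ee-cc} then yields \eqref{cc-eneq}.

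The obstacle I expect is bookkeeping rather than substance. In the concave case one must check that the generalized transform $\diamond$ and the superdifferential are normalized so that equality holds in the form $\widehat\Psi+\widehat\Psi^\diamond=\mathbf m\cdot\mathbf b$, and not with a sign flip; I would verify this directly against the appendix definition. One must also check that all quantities are finite a.e., so that no expression of the form $\infty-\infty$ appears. Finiteness follows from the subdifferential being nonempty at $\overline{\mathbf m}$, which forces $\widehat\Psi_{\mathbf y}(\cdot,\overline{\mathbf m})$ to be finite, and hence, by the Fenchel identity, forces $\widehat\Psi_{\mathbf y}^\diamond(\cdot,\overline{\mathbf b})$ to be finite as well.
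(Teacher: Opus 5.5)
Your proposal is correct and follows the paper's own proof. Like the paper, you reduce to Proposition~\ref{prop:ee-cc}: the proofs of Theorems~\ref{thm:mh-to-b} and~\ref{thm:b-to-mh} give $\overline{\mathbf b}\in\partial_{\mathbf m}\widehat\Psi_{\mathbf y}(\cdot,\overline{\mathbf m})$ a.e., hence the Fenchel identity, while the induction and density relations hold by construction. Your explicit appeal to the Fenchel--Moreau involution for $\widehat\Psi_{\mathbf y}^\diamond=-\Phi_{\mathbf y}$ in the second case, and your finiteness check, usefully spell out what the paper compresses into ``by construction.''
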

\begin{proof}
The proofs of Theorems~\ref{thm:mh-to-b} and \ref{thm:b-to-mh} establish
$\overline{\mathbf{b}}\in\partial_{\mathbf{m}}\widehat\Psi_{\mathbf{y}}(\cdot, \overline{\mathbf{m}})$
a.e.~on $\mathbf{y}(\Omega)$, which by convex/concave duality implies the
Fenchel identity \eqref{cc-eduality}. The physical relation \eqref{cc-b0m0h0}
and density relations \eqref{cc-PhihatPhi-to-Psi} hold by construction.
Proposition~\ref{prop:ee-cc} (equivalently, Proposition~\ref{prop:ee}) then gives the result.
\end{proof}

\section{Examples} \label{sec:ex}

We illustrate the equivalence results of the preceding sections with some representative examples of standard magnetic materials. We will only discuss the case of identity deformation $\mathbf{y}=\id$, but the same examples can be easily adapted to nontrivial deformations as well.

\subsection{Smooth cases}

\begin{example}[Diamagnetic or paramagnetic materials]\label{ex:diapara}
Paramagnetic and diamagnetic materials are described by the following linear relation between magnetic induction and magnetic field:
\begin{align}\label{cr-diapara}
	\mathbf{b}=\mu \mathbf{h}
\end{align}
The constant $\mu>0$ is the magnetic permeability of the material. The material is \emph{diamagnetic} if $\mu<\mu_0$ and \emph{paramagnetic} if $\mu>\mu_0$. Using the relation $\mathbf b=\mu_0(\mathbf m+\mathbf h)$, the equation \eqref{cr-diapara} can be equivalently rewritten as
\begin{align}\label{mag-actual-diapara}
	\mathbf{m}&=\big(\tfrac{1}{\mu_0}-\tfrac{1}{\mu}\big)\mathbf{b} 
\end{align}
\vspace*{-3ex}or 
\begin{align}	\label{mag-actual-diapara2}
	\tfrac{\mu_0}{\mu-\mu_0}\mathbf{m} &=\mathbf h.
\end{align}
Since we are in the smooth case, we can use the results from Theorem~\ref{thm:eq-smooth}.
Observe that \eqref{mag-actual-diapara} coincides with the first equation in \eqref{mh-from-b-smooth}, i.e., the explicit constitutive relation 
\begin{align}\label{mag-actual}
	\mathbf{m} &=-\nabla_{\mathbf{b}}\Phi_{\mathbf{y}}(\cdot,\mathbf{b}) \qquad\text{a.e.~in } \mathbf{y}(\Omega),
\end{align}
at the identity deformation $\mathbf{y}=\id$ if
\begin{equation}
	\Phi_{\id}(\boldsymbol{\xi},\mathbf{b})=\frac{1}{2}\Big(\frac{1}{\mu}-\frac{1}{\mu_0}\Big)|\mathbf{b}|^2.
\end{equation}
As to the $(\mathbf{m},\mathbf{h}_s)$-based model, its implicit constitutive relation given at equilibrium by the Euler-Lagrange equation \eqref{equil-mh-smooth} 
is consistent with \eqref{mag-actual-diapara2} (where $\mathbf{h}=\mathbf h_s+\frac{1}{\mu_0} \mathbf{b}_a$) for
\begin{equation}
	\widehat\Phi_{\id}(\boldsymbol{\xi},\mathbf{m})=\frac{1}{2}\frac{\mu_0^2}{\mu-\mu_0}|\mathbf m|^2
	=\frac{1}{2}\Big(\frac{1}{\mu_0}-\frac{1}{\mu}\Big)^{-1} |\mathbf{m}|^2 - \frac{\mu_0}{2}|\mathbf{m}|^2
	=(-\Phi_{\id})^\diamond(\boldsymbol{\xi},\mathbf{m})-\frac{\mu_0}{2}|\mathbf{m}|^2.
\end{equation}
At equilibrium, the inverse relation of \eqref{mh-from-b-smooth} is indeed given by \eqref{b-from-mh-smooth}, and the two densities are related by~\eqref{H0} and~\eqref{H0p}, as expected from Theorem~\ref{thm:eq-smooth}.

The energy functional of the $\mathbf{b}$-model is
\begin{equation}
	\cE_\id(\mathbf{b})=\int_\Omega \frac{1}{2}\Big(\frac{1}{\mu}-\frac{1}{\mu_0}\Big)|\mathbf{b}|^2+\frac{1}{2\mu_0}|\mathbf{b}-\mathbf{b}_a|^2\,{\rm d}\boldsymbol{\xi}
	+\int_{\R^3\setminus \Omega} \frac{1}{2\mu_0}|\mathbf{b}-\mathbf{b}_a|^2\,{\rm d}\boldsymbol{\xi}.
\end{equation}
Since $\mu>0$, it is strictly convex in $\mathbf{b}$ in both the paramagnetic and diamagnetic cases.

The energy functional of the $(\mathbf{m},\mathbf{h}_{\rm s})$-based model is
\begin{equation}
	\widehat\cE_{\id}(\mathbf{m},\mathbf{h}_{\rm s}) = \frac{1}{2}\frac{\mu_0^2}{\mu-\mu_0}\int_\Omega |\mathbf{m}|^2\,{\rm d}\boldsymbol{\xi}
	+ \frac{\mu_0}{2}\int_{\R^3} |\mathbf{h}_{\rm s}|^2\,{\rm d}\boldsymbol{\xi}
	- \int_\Omega \mathbf{m}\cdot\mathbf{b}_a\,{\rm d}\boldsymbol{\xi}.
\end{equation}
In the paramagnetic case $\mu>\mu_0$, the coefficient $\frac{\mu_0^2}{2(\mu-\mu_0)}$ is positive, so both quadratic terms are convex and $\widehat\cE_{\id}$ is jointly convex in $(\mathbf{m},\mathbf{h}_{\rm s})$.
In the diamagnetic case $\mu<\mu_0$, however, $\frac{\mu_0^2}{(\mu-\mu_0)}$ is negative, whence $\widehat\Phi_{\id}$ is strictly concave in $\mathbf{m}$. The stray-field term is still positive but not strong enough to compensate: since $\mathbf{h}_{\rm s}=-\cP[\chi_\Omega\mathbf{m}]$ by the constraints (cf.~Remark~\ref{rem:h-from-m} in Appendix~\ref{secA:Helmholtz}) and the fact that $\cP:L^2\to L^2$ is an orthogonal projection, we have $\int_{\R^3}\mathbf{h}_{\rm s}\cdot(\chi_\Omega\mathbf{m}-\mathbf{h}_{\rm s})\,{\rm d}\boldsymbol{\xi}=0$. Consequently,
\begin{equation}
	\frac{\mu_0}{2}\int_{\R^3}|\mathbf{h}_{\rm s}|^2\,{\rm d}\boldsymbol{\xi}
	= \frac{\mu_0}{2}\int_{\R^3}\big(|\chi_\Omega\mathbf{m}|^2 - |\chi_\Omega\mathbf{m}-\mathbf{h}_{\rm s}|^2\big)\,{\rm d}\boldsymbol{\xi}.
\end{equation}
Substituting back and using $\widehat\Phi_{\id}+\frac{\mu_0}{2}|\mathbf{m}|^2 = \frac{1}{2}\big(\frac{1}{\mu_0}-\frac{1}{\mu}\big)^{-1}|\mathbf{m}|^2$ yields the representation
\begin{equation}\label{hatcE-id-rep}
	\widehat\cE_\id(\mathbf{m},\mathbf{h}_{\rm s}) = \frac{1}{2}\Big(\frac{1}{\mu_0}-\frac{1}{\mu}\Big)^{-1}\int_\Omega |\mathbf{m}|^2\,{\rm d}\boldsymbol{\xi}
	- \frac{\mu_0}{2}\int_{\R^3}|\chi_\Omega\mathbf{m}-\mathbf{h}_{\rm s}|^2\,{\rm d}\boldsymbol{\xi}
	- \int_\Omega \mathbf{m}\cdot\mathbf{b}_a\,{\rm d}\boldsymbol{\xi}.
\end{equation}
For $0<\mu<\mu_0$, we still have $\frac{1}{2}(\frac{1}{\mu_0}-\frac{1}{\mu})^{-1}=\frac{\mu\mu_0}{2(\mu-\mu_0)}<0$ and the second term is non-positive. Hence, $\widehat\cE_\id$ is strictly concave and unbounded from below as $\|\mathbf{m}\|_{L^2(\Omega)}\to\infty$.
\end{example}

\begin{example}[Anisotropic mixed diamagnetic-paramagnetic materials]\label{ex:diaparamix}
As a variant of Example~\ref{ex:diapara} we can also describe a material which changes its behavior from diamagentic to
paramagnetic, depending on the direction of the applied field. While we do not know whether such behavior can be be found in nature,
it is at least conceivable that this might effectively occur in artificially designed microstructures (see, e.g., \cite{Ma-etal08a,Sea-etal10a}) or more
complex biological materials. Fix an orthonormal basis $\{\mathbf{e}_1,\mathbf{e}_2,\mathbf{e}_3\}$, and consider the constitutive equation
the constitutive relation
\begin{align}\label{cr-diaparamix}
	\mathbf{b}=\mu_p (\mathbf{h}\cdot \mathbf{e}_1)\mathbf{e}_1+\mu_d (\mathbf{h}\cdot \mathbf{e}_2)\mathbf{e}_2+\mu_d (\mathbf{h}\cdot \mathbf{e}_3)\mathbf{e}_3, \quad\text{with $0<\mu_d< \mu_0<\mu_p$},
\end{align}
where the permeabilities $\mu_p$ and $\mu_d$ are constant. The relation \eqref{cr-diaparamix} describes a material whose  behavior in the direction $\mathbf{e}_1$ and diamagnetic behavior in the directions $\mathbf{e}_2,\mathbf{e}_3$. For instance, for a paramagnetic behavior in a preferred direction $\mathbf{e}_1$, and diamagnetic behavior
in the directions $\mathbf{e}_2,\mathbf{e}_3$ .

For the single field model at the identity deformation $\mathbf{y}=\id$, \eqref{cr-diaparamix} coincides with
\eqref{mag-actual} for the saddle-type density function
\begin{equation}
	\Phi_{\id}(\boldsymbol{\xi},\mathbf{b})=\frac{1}{2}\Big(\frac{1}{\mu_p}-\frac{1}{\mu_0}\Big)(\mathbf{b}\cdot \mathbf{e}_1)^2+\frac{1}{2}\Big(\frac{1}{\mu_d}-\frac{1}{\mu_0}\Big)[(\mathbf{b}\cdot \mathbf{e}_2)^2+(\mathbf{b}\cdot \mathbf{e}_3)^2].
\end{equation}
Notice that the $\mathbf{e}_i$, representing a material anisotropy, are of course natural to prescribe in reference configuration and 
thus expected to change once a nontrivial deformation $y\neq \id$ enters the picture.
Since we are again in the smooth case, we can use \eqref{H0p} 
to compute the corresponding density
\begin{equation}
	\widehat{\Phi}_{\id}(\boldsymbol{\xi},\mathbf{m})=\frac{1}{2}\Big(\frac{1}{\mu_0}-\frac{1}{\mu_p}\Big)^{-1} (\mathbf{m}\cdot \mathbf{e}_1)^2+
	\frac{1}{2}\Big(\frac{1}{\mu_0}-\frac{1}{\mu_d}\Big)^{-1} [(\mathbf{m}\cdot \mathbf{e}_2)^2+(\mathbf{m}\cdot \mathbf{e}_3)^2]
	- \frac{\mu_0}{2}|\mathbf{m}|^2.
\end{equation} 
While $\cE_\id$ is strictly convex even here, $\widehat\cE_\id$ now has a saddle structure. 
Since $\cE_\id$ has a unique critical point (its global minimizer), so has $\widehat\cE_\id$ by Theorem~\ref{thm:eq-smooth}. 
However, the latter now is neither a local minimum nor a local maximum. 
\end{example}

\subsection{Non-smooth cases}

\begin{example}[Permanently magnetic material]\label{ex:permmag}
Let $\mathbf m_0\in L^2(\Omega;\R^3)$ be a prescribed magnetization state. For the $\mathbf b$-based model, if we set
\begin{equation}\label{Phi-permag}
	\Phi_{\id}(\boldsymbol{\xi},\mathbf{b})=-\mathbf b\cdot \mathbf{m}_0,
\end{equation}
then the constitutive equation (see \eqref{mh-from-b}):
\begin{equation}\label{mh-from-b-nonsmooth}
\mathbf m\in\partial_{\mathbf b}\Phi_{\id}(\boldsymbol{\xi},\mathbf b),
\end{equation}
becomes $\mathbf m=\mathbf m_0$ a.e.~on $\Omega$, i.e., the material is permanently magnetized with fixed magnetization $\mathbf m_0$ in its undeformed state $\mathbf{y}=\id$. 

The corresponding energy density for the $(\mathbf m,\mathbf h_s)$-based model is $\widehat\Phi(\boldsymbol{\xi},\mathbf{m})=\widehat\Psi(\boldsymbol{\xi},\mathbf{m})-\frac{\mu_0}{2}|\mathbf{m}|^2$, where $\widehat\Psi$ is given by (recall \eqref{H0}, now to be used with the nonsmooth Legendre-Fenchel transform of Definition~\ref{def:genFc}):
\begin{equation}
	\widehat\Psi(\boldsymbol{\xi},\mathbf{m})=(-\Phi)^\diamond(\boldsymbol{\xi},\mathbf m)=\begin{cases}
0 & \text{if } \mathbf{m}=\mathbf{m}_0	
\\
+\infty & \text{if } \mathbf{m}\neq \mathbf{m}_0.
\end{cases}
\end{equation}	
Indeed,
\begin{equation}
	\begin{aligned}
		(-\Phi_{\id})^\diamond(\mathbf{m}) = \sup_{\mathbf{b}} (\mathbf{m} - \mathbf{m}_0) \cdot \mathbf{b} =\begin{cases}
0 & \text{if } \mathbf{m}=\mathbf{m}_0	
\\
+\infty & \text{if } \mathbf{m}\neq \mathbf{m}_0.
\end{cases}
	\end{aligned}
\end{equation} 
Both $\cE_\id$ and $\widehat\cE_\id$ are now strictly convex, although the latter is degenerate. Here, degeneracy means that $\widehat{\cE}_{\id}$ is convex but its effective domain consists of a single point (the singleton $\{\mathbf{m}_0\}$), so that strict convexity holds trivially but the functional carries no information about variations of $\mathbf{m}$ away from $\mathbf{m}_0$. \bbb
\end{example}

\begin{remark}
Smooth approximations of the above example can be obtained if we modify \eqref{Phi-permag} with a small diamagnetic or paramagnetic perturbation, i.e., an added quadratic term. This leads to  a constitutive equation of the form
$\mathbf m=\mathbf m_{0}-\frac{\eps}{\mu_0}\mathbf b$ for a small $\eps\neq  0$.
\end{remark}

\begin{example}[Ferromagnetic materials with soft saturation]\label{ex:saturation}
A natural model for magnetic saturation, consistent with the convex framework of this paper, is to subject the magnetization to the convex constraint $|\mathbf{m}|\leq m_s$, where $m_s>0$ is the saturation intensity (typically of the order of $10^6$ A/m). This \emph{soft saturation model} arises as the large-body-limit relaxation of micromagnetics \cite{DeSimone1993EnergyminimisersLargeFerromagnetic}: in the no-exchange limit, the hard saturation constraint $|\mathbf{m}|=m_s$ is replaced by $|\mathbf{m}|\leq m_s$ upon relaxation, and the resulting convex variational problem correctly predicts the observed magnetization curves of soft ferromagnets, exhibiting linear response at low fields followed by saturation \cite{DeSimone1993EnergyminimisersLargeFerromagnetic}. We define the energy density as the indicator function of the closed ball:
\begin{equation}
	\widehat{\Phi}_{\id}^{\rm (c)}(\boldsymbol{\xi},\mathbf{m}) :=
	\begin{cases}
		0 & \text{if } |\mathbf{m}| \le m_s,\\
		+\infty & \text{otherwise.}
	\end{cases}
\end{equation}
The corresponding augmented density is
\begin{equation}
	\widehat{\Psi}^{(\rm s)}_{\id}(\boldsymbol{\xi},\mathbf{m}) = \begin{cases}
		\frac{\mu_0}{2}|\mathbf m|^2 & \text{if } |\mathbf{m}|\le m_s, \\
		+\infty & \text{otherwise.}
	\end{cases}
\end{equation}
Applying \eqref{H0} to $\widehat{\Psi}_{\id}^{(\rm s)}$
we get
\begin{equation}
	\Phi_{\id}^{\rm (s)}(\boldsymbol{\xi},\mathbf{b}) = -(\widehat{\Psi}_{\id}^{(\rm s)})^\diamond(\boldsymbol{\xi},\mathbf{b})=-(\widehat{\Psi}_{\id}^{(\rm s)})^*(\boldsymbol{\xi},\mathbf{b}),
\end{equation}
with the Legendre-Fenchel transform given by Definition~\ref{def:genFc} (in all cases, one can use \eqref{def:LeFe-convex} instead, which is also defined for $\widehat{\Psi}_{\id}^{\rm (sat)}$). \bbb Computed explicitly, this yields that
\begin{align}\label{cr-sat}
	\Phi_{\id}^{\rm (s)}(\boldsymbol{\xi},\mathbf{b})
	&= -\sup_{|\mathbf{m}|\leq m_s}\Bigl(\mathbf{b}\cdot\mathbf{m} - \tfrac{\mu_0}{2}|\mathbf{m}|^2\Bigr) 
	= \begin{cases}
	  -\frac{1}{2\mu_0}|\mathbf{b}|^2 &\text{if $|\mathbf{b}|\leq \mu_0 m_s$}\bbb\\
			-|\mathbf{b}| m_s+\tfrac{\mu_0}{2}m_s^2 &\text{else.}
		\end{cases}
\end{align}
\end{example}
\begin{remark}\label{rem:langevin}
The soft saturation model $\widehat{\Phi}_{\id}^{(\rm s)}$ arises
naturally as the zero-temperature limit of the Langevin model,
a classical model from statistical mechanics in which the energy density is
\begin{equation}\label{eq:Langevin}
    \widehat{\Phi}_{\id}^{(\rm Lan)}(\mathbf{m})
    := \kappa\left[
        \frac{|\mathbf{m}|}{m_s}
        \operatorname{arctanh}\frac{|\mathbf{m}|}{m_s}
        - \frac{1}{2}\ln\!\left(1-\frac{|\mathbf{m}|^2}{m_s^2}\right)
    \right],
    \qquad |\mathbf{m}|<m_s,
\end{equation}
with $\kappa = k_{\rm B}T/\mu_0$ proportional to the absolute temperature $T$.
This density is strictly convex on the open ball $|\mathbf{m}|<m_s$,
with gradient
\[
\nabla_{\mathbf{m}}\widehat{\Phi}_{\id}^{(\rm Lan)}(\mathbf{m})
=\frac{\kappa}{m_s}\operatorname{arctanh}\frac{|\mathbf{m}|}{m_s}\,
\frac{\mathbf{m}}{|\mathbf{m}|},
\]
which is a bijection from the open ball onto $\mathbb{R}^3$; in
particular, assumption~\eqref{H3} is satisfied, and the Langevin model
falls within the smooth framework of Section~\ref{sec:smooth}.
As $T\to 0^+$ (i.e., $\kappa\to 0^+$), the energy
$\widehat{\Phi}_{\id}^{(\rm Lan)}$ converges pointwise to the indicator
of the closed ball, recovering the soft saturation model
$\widehat{\Phi}_{\id}^{(\rm s)}$.
In the opposite limit $T\to T_c^-$ (approaching the Curie temperature),
$m_s(T)\to 0$ and the Langevin energy reduces to the quadratic model
\begin{equation}
    \widehat{\Phi}_{\id}^{(\rm Lan)}(\mathbf{m})
    \approx \frac{\kappa}{2m_s^2}|\mathbf{m}|^2,
\end{equation}
which is precisely the linear paramagnetic energy density of
Example~\ref{ex:diapara}, with susceptibility $\chi=\mu_0 m_s^2/\kappa$
satisfying the Curie law $\chi\propto 1/T$.
The Langevin model and its variants are widely used in the modeling of
magnetorheological elastomers; see, e.g.,
\cite{KankanalaTriantafyllidis2004MRE,Danas2017EffectiveResponse}.
\end{remark}

\begin{remark}[Non-convex energies and the $|\mathbf{m}|=m_s$ constraint]
\label{rem:saturation-nonconvex} 
The standard micromagnetic model with the saturation constraint $|\mathbf{m}|=m_s$ \cite{Brown1966MagnetoelasticInteractions} would correspond to the energy density 
\begin{equation}
	\widehat{\Phi}_{\id}^{(\rm sat)}(\boldsymbol{\xi},\mathbf{m}) :=
	\begin{cases}
		0 & \text{if } |\mathbf{m}|  = m_s, \\
		+\infty & \text{otherwise.}
	\end{cases}
\end{equation}
This case is not covered
by the treatment of this paper. Indeed, $\widehat{\Phi}_{\id}^{(\rm sat)}$
is neither convex nor concave, and its augmented density
$\widehat{\Psi}_{\id}^{(\rm sat)} = \widehat{\Phi}_{\id}^{(\rm sat)}
+ \frac{\mu_0}{2}|\mathbf{m}|^2$ likewise fails to be convex or concave,
so that assumption~\eqref{H4} in Theorem~\ref{thm:mh-to-b} is violated.
In fact, the convex hull of $\widehat{\Phi}_{\id}^{(\rm sat)}$ is the soft saturation model $\widehat{\Phi}^{(\rm s)}_{\id}$. 

We may formally apply
\eqref{H0} to $\widehat{\Psi}_{\id}^{(\rm sat)}$, and then we would get
\begin{equation}
	{\Phi}_{\id}^{\rm (sat)}(\boldsymbol{\xi},\mathbf{b}) =-(\widehat{\Psi}_{\id}^{(\rm sat)})^*(\boldsymbol{\xi},\mathbf{b}),
\end{equation}
where we are using \eqref{def:LeFe-convex}. Computed explicitly, this yields that
\begin{align}\label{cr-sat2}
	{\Phi}_{\id}^{\rm (sat)}(\boldsymbol{\xi},\mathbf{b}) &=
	-\sup_{|\mathbf{m}|=m_s}\Bigl(\mathbf{b}\cdot\mathbf{m} - \tfrac{\mu_0}{2}|\mathbf{m}|^2\Bigr) 
	=-|\mathbf{b}| m_s+\tfrac{\mu_0}{2}m_s^2.
\end{align}
If we now try to  use \eqref{H0p}, to recover the $(\mathbf{m},\mathbf{h}_s)$-based model from ${\Phi}_{\id}^{\rm (sat)}$ (which is concave), the Legendre--Fenchel transform of $-\Phi_{\id}^{\rm (sat)}$ is
\begin{equation}
	(-{\Phi}_{\id}^{(\rm sat)})^\diamond(\boldsymbol{\xi},\mathbf{m})
	= \sup_{\mathbf{b}}\bigl(\mathbf{m}\cdot\mathbf{b} - m_s|\mathbf{b}|\bigr) + \tfrac{\mu_0}{2}m_s^2
	= \begin{cases}
		\tfrac{\mu_0}{2}m_s^2 & \text{if } |\mathbf{m}| \leq m_s,\\
		+\infty & \text{if } |\mathbf{m}| > m_s.
	\end{cases}
\end{equation}
Therefore,  by \eqref{H0p} we now obtain the new density
\begin{equation}
	\widehat{\Phi}^{\rm (sat')}_{\id}(\boldsymbol{\xi},\mathbf{m})
	= (-{\Phi}_{\id}^{\rm (sat)})^\diamond(\boldsymbol{\xi},\mathbf{m}) - \tfrac{\mu_0}{2}|\mathbf{m}|^2
	= \begin{cases}
		\tfrac{\mu_0}{2}(m_s^2 - |\mathbf{m}|^2) & \text{if } |\mathbf{m}| \leq m_s,\\
		+\infty & \text{if } |\mathbf{m}| > m_s.
	\end{cases}
\end{equation}
This matches neither the original $\widehat{\Phi}^{\rm (sat)}_{\id}$ nor its convex hull $\widehat{\Phi}^{\rm (c)}_{\id}$. Thus, while the convex Legendre-Fenchel transform can be always applied here, it is only an involution for convex functions. We still do not even recover the 
soft saturation model $\widehat{\Phi}^{(\rm s)}_{\id}$ from $\tilde{\Phi}_{\id}$ because the transition from $\widehat{\Phi}^{(\rm s)}_{\id}$ to $\widehat{\Psi}^{(\rm s)}_{\id}$ adds a quadratic term;
in fact, $\widehat{\Psi}^{(\rm s)}_{\id}$ is not the convex hull of $\widehat{\Psi}_{\id}^{\rm (sat)}$. 

Finally, we remark that as in our other examples, $\cE_\id$ with the density $\Phi_{\id}$ is strictly convex and its minimisation problem is well-posed. This is no longer the case if 
we use ${\Phi}_{\id}^{\rm (sat)}$ instead, though. However, the relaxation of $\cE_\id$ in that case leads back to the density ${\Phi}_{\id}^{\rm (s)}$: 
the convex hull of the full density ${\Phi}_{\id}^{\rm (sat)}(\cdot,\mathbf{b})+\frac{1}{2\mu_0}|\mathbf{b}-\mathbf{b}_{\rm a}|^2$ is exactly given by ${\Phi}_{\id}^{\rm (s)}(\boldsymbol{\xi},\mathbf{b})+\frac{1}{2\mu_0}|\mathbf{b}-\mathbf{b}_{\rm a}|^2$.
\end{remark}

\begin{remark}
Generally, our theoretical framework for nonsmooth cases strictly relies on the assumption of convex (or concave) $\Phi_{y}$ or 
 the augmented material energy 
$\widehat\Psi_{y}$ 
Full equivalence of micromagnetic models (without a regularizing exchange energy \cite{Brown1966MagnetoelasticInteractions}) that enforce the strong constraint $|\mathbf{m}|=m_s$ actually cannot be expected. In fact, due to the presence of the convex stray field term, it is also not obvious that replacing $\widehat{\Phi}_{\id}^{(\rm sat)}$ by $\widehat{\Phi}^{(\rm s)}_{\id}$ gives the correct relaxation of the functional $\widehat\cE_\id$. The deeper discussion of \cite{PedregalYan2010DualityMicromagnetics} is needed there. The $\mathbf{b}$-based model using $\cE_\id$ with $\Phi_{\id}$ does not have this issue but implicitly relaxes the  saturation constraint. 
\end{remark}

\appendix

\section{Auxiliary results}
In this appendix we collect some auxiliary results that are used in the main text. They are mostly well known but provided here for the reader's convenience. In particular, we recall the $L^2$ Helmholtz decomposition on $\mathbb{R}^3$, which is a key tool in our analysis.

\subsection{Helmholtz decomposition}\label{secA:Helmholtz} 

Here and throughout the paper, we use the quotient space
\begin{equation}
	\mathring W^{1,2}(\R^d)=\left\{[u]:=\{u+\lambda:\lambda \in\R\}\,\left|~
	\begin{aligned}
		&\text{$u\in L^2_{\rm loc}(\R^d)$ and}\\ 
		&\text{$\nabla u$ exists weakly in $L^2(\R^d;\R^{d})$}
	\end{aligned}
	\right.\right\}.
\end{equation} 
It is a Hilbert space with the norm $\|[u]\|_{\mathring W^{1,2}}=(\int_{\R^d} |\nabla u|^2)^{\frac{1}{2}}$, the $W^{1,2}$-gradient semi-norm. 
As this should be clear from the context, we do not distinguish equivalence classes and their representatives in $\mathring W^{1,2}(\R^d)$. In particular, we write
$u\in \mathring W^{1,2}(\R^d)$ instead of $[u]\in \mathring W^{1,2}(\R^d)$. An important property is that if $d>2$,
for any $u\in \mathring W^{1,2}(\R^d)$ there exists a constant $u_0\in\R$ such that $u-u_0$ can be approximated with respect to the $W^{1,2}$-gradient semi-norm by smooth functions with compact support,
see \cite[Thms.~II.6.3 and II.7.1]{Ga11B}.
In this sense, $C_0^\infty(\R^d)$ is dense in 
$\mathring W^{1,2}(\R^d)$ for $d>2$, in particular for our case here where $d=3$.

\begin{lemma}[$L^2$ Helmholtz decomposition]\label{lem:Helmholtz}
Every $\mathbf{u}\in L^2(\mathbb{R}^3;\mathbb{R}^3)$ admits a unique orthogonal decomposition
\begin{equation}
	\mathbf{u}=\nabla\phi+\mathbf{w},
\end{equation}
where $\phi\in\mathring{W}^{1,2}(\mathbb{R}^3)$ and $\operatorname{div}\mathbf{w}=0$ in the distributional sense on $\mathbb{R}^3$.
The two components are orthogonal in $L^2(\mathbb{R}^3;\mathbb{R}^3)$:
\begin{equation}
	\langle\nabla\phi,\mathbf{w}\rangle_{L^2(\mathbb{R}^3)}=0.
\end{equation}
The map $\mathcal{P}:=\nabla\Delta^{-1}\operatorname{div}:L^2(\mathbb{R}^3;\mathbb{R}^3)\to L^2(\mathbb{R}^3;\mathbb{R}^3)$, 
where the Laplacian $\Delta:\mathring{W}^{1,2}(\mathbb{R}^3)\to(\mathring{W}^{1,2}(\mathbb{R}^3))^*$ acts as an isomorphism, is linear and bounded and has the following additional properties:
\begin{align*}
	&\text{$\mathcal{P}^2=\mathcal{P}$ (projection);}\\
	&\text{$\mathcal{P}^*=\mathcal{P}$ (self-adjoint in $L^2(\mathbb{R}^3;\mathbb{R}^3)$);}\\
	&\text{$\operatorname{Range}(\mathcal{P})=\{\nabla\phi:\phi\in\mathring{W}^{1,2}(\mathbb{R}^3)\}$;}\\
	&\text{$\ker(\mathcal{P})=L^2_{\operatorname{div}}(\mathbb{R}^3;\mathbb{R}^3):=\{\mathbf{w}\in L^2(\mathbb{R}^3;\mathbb{R}^3):\operatorname{div}\mathbf{w}=0\}$.}
\end{align*}
\end{lemma}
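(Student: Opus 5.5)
The plan is to obtain the decomposition from the Riesz representation theorem on the Hilbert space $\mathring W^{1,2}(\R^3)$. Given $\mathbf{u}\in L^2(\R^3;\R^3)$, the map $\psi\mapsto\int_{\R^3}\mathbf{u}\cdot\nabla\psi\,{\rm d}\boldsymbol{\xi}$ is a bounded linear functional on $\mathring W^{1,2}(\R^3)$. By Cauchy--Schwarz its norm is at most $\|\mathbf{u}\|_{L^2}$. It is well defined on equivalence classes, since constants have zero gradient. Hence there is a unique $\phi\in\mathring W^{1,2}(\R^3)$ with
\[
\int_{\R^3}\nabla\phi\cdot\nabla\psi\,{\rm d}\boldsymbol{\xi}=\int_{\R^3}\mathbf{u}\cdot\nabla\psi\,{\rm d}\boldsymbol{\xi}\qquad\text{for all }\psi\in\mathring W^{1,2}(\R^3).
\]
This is exactly the weak form of $\Delta\phi=\div\mathbf{u}$, and it shows that $\Delta$ is an isomorphism from $\mathring W^{1,2}$ onto its dual. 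We then set $\cP\mathbf{u}:=\nabla\phi$ and $\mathbf{w}:=\mathbf{u}-\nabla\phi$.

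Next, taking $\psi\in C_0^\infty(\R^3)$ in the identity gives $\int\mathbf{w}\cdot\nabla\psi=0$, which says $\div\mathbf{w}=0$ in the sense of distributions. Conversely, the density of $C_0^\infty$ in $\mathring W^{1,2}(\R^3)$ for $d=3$ (recalled above from \cite{Ga11B}) upgrades distributional divergence-freeness of any $\mathbf{w}\in L^2$ to $\int\mathbf{w}\cdot\nabla\psi=0$ for every $\psi\in\mathring W^{1,2}$. This density step is the one genuinely non-formal ingredient. It is needed to identify $L^2_{\div}$ with the orthogonal complement of $G:=\{\nabla\psi:\psi\in\mathring W^{1,2}(\R^3)\}$, and I expect it to be the main obstacle. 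I would simply invoke the cited result; the constant $u_0$ is irrelevant because only gradients enter.

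The rest follows from Hilbert space structure. The set $G$ is closed in $L^2$, because $\psi\mapsto\nabla\psi$ is an isometry from the complete space $\mathring W^{1,2}$. Hence $L^2=G\oplus G^\perp$ with $G^\perp=L^2_{\div}$. The decomposition above is then precisely the orthogonal splitting, and uniqueness follows: if $\nabla\phi_1+\mathbf{w}_1=\nabla\phi_2+\mathbf{w}_2$, then $\nabla(\phi_1-\phi_2)\in G\cap G^\perp=\{0\}$. The orthogonality $\langle\nabla\phi,\mathbf{w}\rangle=0$ holds by construction. Therefore $\cP$ is the orthogonal projection onto $G$. This immediately gives that $\cP$ is linear, bounded with norm at most one, idempotent and self-adjoint, with $\operatorname{Range}(\cP)=G$ and $\ker(\cP)=G^\perp=L^2_{\div}(\R^3;\R^3)$. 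Finally, the formula $\cP=\nabla\Delta^{-1}\div$ is just a restatement of how $\phi$ was defined.
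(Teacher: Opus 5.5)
Your proof is correct and follows the paper's route. You apply the Riesz representation on $\mathring W^{1,2}(\R^3)$ to get $\phi$ with $\Delta\phi=\div\mathbf{u}$, set $\mathbf{w}=\mathbf{u}-\nabla\phi$, and use density of $C_0^\infty$ in $\mathring W^{1,2}(\R^3)$ as the one analytic input, exactly as the paper does. The only difference is packaging: you identify $L^2_{\div}$ with $G^\perp$ and read off idempotence, self-adjointness, range, kernel and boundedness from the orthogonal projection theorem, whereas the paper checks orthogonality and self-adjointness by direct computation and leaves the range/kernel claims, and the density step hidden in its uniqueness argument, implicit.
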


\begin{proof} This is well-known; we provide a short proof for the convenience of the reader.
The Laplacian $\Delta:\mathring{W}^{1,2}(\mathbb{R}^3)\to(\mathring{W}^{1,2}(\mathbb{R}^3))^*$ is an isomorphism by the Lax--Milgram theorem, since the norm $\|[u]\|_{\mathring{W}^{1,2}}^2=\int_{\mathbb{R}^3}|\nabla u|^2\,{\rm d}\mathbf{x}$ makes $\mathring{W}^{1,2}(\mathbb{R}^3)$ a Hilbert space and the bilinear form $([u],[v])\mapsto\int_{\mathbb{R}^3}\nabla u\cdot\nabla v\,{\rm d}\mathbf{x}$ is continuous and coercive on it.

Given $\mathbf{u}\in L^2(\mathbb{R}^3;\mathbb{R}^3)$, the functional $[v]\mapsto\int_{\mathbb{R}^3}\mathbf{u}\cdot\nabla v\,{\rm d}\mathbf{x}$ is bounded on $\mathring{W}^{1,2}(\mathbb{R}^3)$ (by Cauchy--Schwarz), hence by the Riesz theorem there exists a unique $[\phi]\in\mathring{W}^{1,2}(\mathbb{R}^3)$ such that $\int_{\mathbb{R}^3}\nabla\phi\cdot\nabla v\,{\rm d}\mathbf{x}=\int_{\mathbb{R}^3}\mathbf{u}\cdot\nabla v\,{\rm d}\mathbf{x}$ for all $[v]\in\mathring{W}^{1,2}(\mathbb{R}^3)$, i.e., $\Delta\phi=\operatorname{div}\mathbf{u}$ in $(\mathring{W}^{1,2}(\mathbb{R}^3))^*$.
Setting $\mathbf{w}:=\mathbf{u}-\nabla\phi$, we have $\operatorname{div}\mathbf{w}=\operatorname{div}\mathbf{u}-\Delta\phi=0$ distributionally.

The orthogonality $\langle\nabla\phi,\mathbf{w}\rangle_{L^2}=0$ follows by integration by parts on $\mathbb{R}^3$ (since $C_0^\infty(\R^3)$ is dense $\mathring{W}^{1,2}$, the integration by parts can be effectively carried out on a bounded domain without boundary terms appearing): 
\begin{equation}
	\int_{\mathbb{R}^3}\nabla\phi\cdot\mathbf{w}\,{\rm d}\mathbf{x}=-\int_{\mathbb{R}^3}\phi\,\operatorname{div}\mathbf{w}\,{\rm d}\mathbf{x}=0.
\end{equation}
Uniqueness of the decomposition follows from the uniqueness of $[\phi]$ as the solution of $\Delta\phi=\operatorname{div}\mathbf{u}$.

Self-adjointness of $\mathcal{P}$: for any $\mathbf{u},\mathbf{v}\in L^2(\mathbb{R}^3;\mathbb{R}^3)$, write $\mathbf{u}=\nabla\phi_u+\mathbf{w}_u$ and $\mathbf{v}=\nabla\phi_v+\mathbf{w}_v$. Then, using orthogonality of the decomposition,
\begin{equation}
	\langle\mathcal{P}\mathbf{u},\mathbf{v}\rangle_{L^2}=\langle\nabla\phi_u,\nabla\phi_v+\mathbf{w}_v\rangle_{L^2}=\langle\nabla\phi_u,\nabla\phi_v\rangle_{L^2}=\langle\mathbf{u},\nabla\phi_v\rangle_{L^2}=\langle\mathbf{u},\mathcal{P}\mathbf{v}\rangle_{L^2}.
\end{equation}
The projection property $\mathcal{P}^2=\mathcal{P}$ is immediate since $\nabla\phi_u$ is already a gradient field.
\end{proof}

\begin{remark}
Given $\mathbf{m}\in L^2(\mathbf{y}(\Omega);\mathbb{R}^3)$, Lemma~\ref{lem:Helmholtz} applied to $\mathbf{u}=\chi_{\mathbf{y}(\Omega)}\mathbf{m}$ shows that there is a unique $[\phi]\in\mathring{W}^{1,2}(\mathbb{R}^3)$ such that $\Delta\phi=-\operatorname{div}(\chi_{\mathbf{y}(\Omega)}\mathbf{m})$ in $(\mathring{W}^{1,2}(\mathbb{R}^3))^*$.
Setting $\mathbf{h}_{\rm s}:=\nabla\phi$, we have $\operatorname{curl}\mathbf{h}_{\rm s}=0$ and $\operatorname{div}(\mathbf{h}_{\rm s}+\chi_{\mathbf{y}(\Omega)}\mathbf{m})=0$, i.e., the Maxwell constraint \eqref{divcon} is satisfied.
Moreover, since $\phi=-\Delta^{-1}\operatorname{div}(\chi_{\mathbf{y}(\Omega)}\mathbf{m})$, we have $\mathbf{h}_{\rm s}=\nabla\phi=-\nabla\Delta^{-1}\operatorname{div}(\chi_{\mathbf{y}(\Omega)}\mathbf{m})=-\mathcal{P}[\chi_{\mathbf{y}(\Omega)}\mathbf{m}]$, confirming the formula of Remark~\ref{rem:h-from-m}.
\end{remark}

\begin{remark}\label{rem:h-from-m}
Perturbation of $\mathbf{h}_{\rm s}$ for fixed $\mathbf{m}$ is effectively impossible under the constraints \eqref{divcon}, as these fully determine $\mathbf{h}_{\rm s}$ as a function of $\mathbf{m}$:
As a curl-free field, $\mathbf{h}\in L^2(\R^3;\R^3)$ has a potential
$\eta\in \mathring W^{1,2}(\R^3)$ 
so that $\mathbf{h}_{\rm s}=\nabla \eta$, and the other constraint implies that
$\Delta \eta=-\operatorname{div}(\chi_{\mathbf{y}(\Omega)} \mathbf{m})$ in the weak sense in $(\mathring W^{1,2})^*$, so that
\begin{align}\label{h-as-projection}
    \mathbf{h}_{\rm s}={\mathbf{h}_{\rm s}}_m:=-\cP[\chi_{\mathbf{y}(\Omega)}\mathbf{m}],&~~~~\text{where}~
		\cP:=\nabla \Delta^{-1} \operatorname{div}, ~\cP:L^2(\R^3;\R^3)\to L^2(\R^3;\R^3). 
\end{align}
\end{remark}

\subsection{Properties of the Legendre-Fenchel transform}\label{secB:duality}

\subsubsection{Smooth case: Functions with invertible gradient}

\begin{definition}[Smooth Legendre-Fenchel transform]\label{def:genFc-smooth}
Let $f:\R^3\to\R$ be continuously differentiable and suppose that
$\nabla f:\R^3\to\R^3$ is continuous and invertible.
The \emph{smooth Legendre-Fenchel transform} of $f$ is defined by
\begin{align}\label{genFc-smooth}
    f^\diamond(\mathbf{z}^*) := \mathbf{z}^* \cdot (\nabla f)^{-1}(\mathbf{z}^*) - f\bigl((\nabla f)^{-1}(\mathbf{z}^*)\bigr)
    \quad \text{for all } \mathbf{z}^* \in \R^3.
\end{align}
This definition requires only the invertibility of $\nabla f$ and applies whether $f$ is convex, concave, or neither
(e.g., $f(\mathbf{z}) := z_1^2 + z_2^2 - z_3^2 \bbb$ for $\mathbf{z}=(z_1,z_2,z_3)\in\R^3$).
\end{definition}

\begin{remark}
When $\diamond$ is applied to a function $f(\boldsymbol{\xi},\cdot)$ that also depends on a parameter
$\boldsymbol{\xi}\in\mathbf{y}(\Omega)$, it is understood that the transform is taken with respect to
the second argument, with $\boldsymbol{\xi}$ fixed.
\end{remark}

\begin{lemma}[Properties of the Legendre-Fenchel transform for smooth functions]\label{lem:duality}
Let $f:\R^3\to \R$, $\mathbf{z}\mapsto f(\mathbf{z})$ be continuously differentiable. In addition, assume that 
$\nabla f:\R^3\to \R^3$
is invertible with a globally Lipschitz-continuous inverse $(\nabla f)^{-1}$.
Then the function $f^\diamond$ defined as in \eqref{genFc-smooth}, i.e.,
\begin{align}\label{genFc-smooth2}
	f^\diamond(\mathbf{z}^*)=\mathbf{z}^*\cdot (\nabla f)^{-1}(\mathbf{z}^*)-f\bigl((\nabla f)^{-1}(\mathbf{z}^*)\bigr)~~~\text{for}~\mathbf{z}^*\in \R^3,
\end{align}
is also differentiable and its derivative satisfies 
\begin{align}\label{ldual-nabla}
	\nabla f^\diamond=(\nabla f)^{-1}~~\text{on}~\R^3.
\end{align}
In particular, 
we have the linear growth condition
\begin{align}\label{ldual-lingrowth}
	|\nabla f^\diamond(\mathbf{z}^*)|\leq \|\nabla (\nabla f)^{-1}\|_{L^\infty(\R^3;\R^{3\times 3})} |\mathbf{z}^*|+|(\nabla f)^{-1}(\mathbf{0})|~~\text{for all }\mathbf{z}^*\in\R^3.
\end{align}
Moreover, 
for any pair $(\mathbf{z},\mathbf{z}^*)\in \R^3\times\R^3$, 
\begin{align}\label{ldual-dual1}
	\mathbf{z}=\nabla f^\diamond(\mathbf{z}^*)\quad\text{if and only if}\quad \mathbf{z}^*=\nabla f(\mathbf{z}),
\end{align}
any of the two equations in \eqref{ldual-dual1} implies the Fenchel identity
\begin{align}\label{ldual-dual2}
	f^\diamond(\mathbf{z}^*)+f(\mathbf{z})=\mathbf{z}^*\cdot \mathbf{z},
\end{align}
and $(f^\diamond)^\diamond(\mathbf{z})=f(\mathbf{z})$ for all $\mathbf{z}\in\R^3$, i.e., the smooth Legendre-Fenchel transform is an involution.
\end{lemma}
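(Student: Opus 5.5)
Set $g:=(\nabla f)^{-1}:\R^3\to\R^3$. By hypothesis $g$ is globally Lipschitz, hence differentiable a.e.\ with $\|\nabla g\|_{L^\infty}<\infty$, and $f^\diamond(\mathbf z^*)=\mathbf z^*\cdot g(\mathbf z^*)-f(g(\mathbf z^*))$. The first goal is \eqref{ldual-nabla}. Since $g$ need not be $C^1$, the naive chain-rule computation $\nabla f^\diamond = g + (\nabla g)^T\mathbf z^* - (\nabla g)^T\nabla f(g) = g$ is only formal. To make it rigorous I will compute difference quotients directly. Fix $\mathbf z^*$, put $\mathbf z=g(\mathbf z^*)$, take an increment $\mathbf k$, and set $\mathbf z'=g(\mathbf z^*+\mathbf k)$, so that $|\mathbf z'-\mathbf z|\le L|\mathbf k|$. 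Then
\[
f^\diamond(\mathbf z^*+\mathbf k)-f^\diamond(\mathbf z^*)-\mathbf k\cdot\mathbf z
=\mathbf z^*\cdot(\mathbf z'-\mathbf z)+\mathbf k\cdot(\mathbf z'-\mathbf z)-\bigl(f(\mathbf z')-f(\mathbf z)\bigr).
\]
Because $f\in C^1$, we have $f(\mathbf z')-f(\mathbf z)=\nabla f(\mathbf z)\cdot(\mathbf z'-\mathbf z)+o(|\mathbf z'-\mathbf z|)$, and $\nabla f(\mathbf z)=\mathbf z^*$. The first and last terms therefore cancel up to $o(|\mathbf k|)$, while the middle term is $O(L|\mathbf k|^2)$. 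This gives differentiability of $f^\diamond$ with $\nabla f^\diamond=g$, and continuity of $g$ makes $f^\diamond$ in fact $C^1$. This step is the main technical point, since it must avoid differentiating $g$; the Lipschitz bound is what turns $o(|\mathbf z'-\mathbf z|)$ into $o(|\mathbf k|)$.

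The growth bound \eqref{ldual-lingrowth} then follows from $|g(\mathbf z^*)|\le|g(\mathbf 0)|+\mathrm{Lip}(g)|\mathbf z^*|$, using that $\mathrm{Lip}(g)=\|\nabla g\|_{L^\infty}$ for Lipschitz maps (Rademacher).

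The equivalence \eqref{ldual-dual1} is now immediate: $\mathbf z=\nabla f^\diamond(\mathbf z^*)=g(\mathbf z^*)$ holds iff $\nabla f(\mathbf z)=\mathbf z^*$, because $g$ is the inverse of $\nabla f$. Under either condition we have $\mathbf z=g(\mathbf z^*)$, and inserting this into the definition \eqref{genFc-smooth2} gives the Fenchel identity \eqref{ldual-dual2} directly.

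For the involution property, note that $f^\diamond$ is $C^1$ and its gradient $g$ is continuous and invertible with inverse $\nabla f$. Hence $(f^\diamond)^\diamond$ is well defined by Definition~\ref{def:genFc-smooth}, even though $\nabla f$ need not be Lipschitz, which is harmless because the definition only requires invertibility. Then
\[
(f^\diamond)^\diamond(\mathbf z)=\mathbf z\cdot\nabla f(\mathbf z)-f^\diamond(\nabla f(\mathbf z)),
\]
and applying \eqref{ldual-dual2} with $\mathbf z^*=\nabla f(\mathbf z)$ shows $f^\diamond(\nabla f(\mathbf z))=\mathbf z\cdot\nabla f(\mathbf z)-f(\mathbf z)$. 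Substituting yields $(f^\diamond)^\diamond(\mathbf z)=f(\mathbf z)$.
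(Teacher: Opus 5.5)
Your proof is correct and follows the same route as the paper: the gradient formula \eqref{ldual-nabla} comes first, then \eqref{ldual-dual1}, then the Fenchel identity by direct substitution, and finally the involution, obtained by applying the definition to $f^\diamond$ and using \eqref{ldual-dual2} at $\mathbf{z}^*=\nabla f(\mathbf{z})$. The one difference is in the key step: the paper does the chain-rule computation only for a $C^1$ inverse and just asserts that difference quotients handle the Lipschitz case, whereas your first-order expansion actually carries this out and gives differentiability of $f^\diamond$ at every point, with no appeal to Rademacher's theorem.
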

\begin{proof}
We adopt the convention that $\mathbf{z}$, $\nabla f^\diamond(\mathbf{z}^*)$ and $(\nabla f)^{-1}(\mathbf{z}^*)$ are column vectors, while $\mathbf{z}^*$ and $\nabla f(\mathbf{z})$ are row vectors.

\emph{Proof of \eqref{ldual-nabla}.}
Differentiating the definition \eqref{genFc-smooth2} by the chain rule, if $(\nabla f)^{-1}$ is of class $C^1$,
\begin{align}
	\nabla f^\diamond(\mathbf{z}^*)=(\nabla f)^{-1}(\mathbf{z}^*)+\big[\mathbf{z}^*-(\nabla f)((\nabla f)^{-1}(\mathbf{z}^*))\big](\nabla (\nabla f)^{-1}(\mathbf{z}^*))=(\nabla f)^{-1}(\mathbf{z}^*),
\end{align}
since the term in the square brackets vanishes by definition of $(\nabla f)^{-1}$.
Using difference quotients instead of the chain rule, the same conclusion holds if $(\nabla f)^{-1}$ is only Lipschitz continuous (and hence differentiable almost everywhere by Rademacher's theorem), since the Lipschitz constant provides a uniform bound for its difference quotients.
This proves \eqref{ldual-nabla}.
The bound \eqref{ldual-lingrowth} is a straightforward consequence of \eqref{ldual-nabla} and the Lipschitz property of $(\nabla f)^{-1}$.

\emph{Proof of \eqref{ldual-dual1}.}
By \eqref{ldual-nabla} and the invertibility of $\nabla f$, we have
\begin{equation}
	\mathbf{z}=\nabla f^\diamond(\mathbf{z}^*)=(\nabla f)^{-1}(\mathbf{z}^*)\quad\iff\quad \mathbf{z}^*=\nabla f(\mathbf{z}),
\end{equation}
which is exactly \eqref{ldual-dual1}.

\emph{Proof of \eqref{ldual-dual2}.}
Assuming either equation in \eqref{ldual-dual1} holds, i.e., $\mathbf{z}=(\nabla f)^{-1}(\mathbf{z}^*)$, we substitute directly into \eqref{genFc-smooth2}:
\begin{equation}
	f^\diamond(\mathbf{z}^*)=\mathbf{z}^*\cdot(\nabla f)^{-1}(\mathbf{z}^*)-f((\nabla f)^{-1}(\mathbf{z}^*))=\mathbf{z}^*\cdot\mathbf{z}-f(\mathbf{z}),
\end{equation}
which gives \eqref{ldual-dual2}.
\end{proof}

\emph{Proof of the involution.}
Since $\nabla f^\diamond=(\nabla f)^{-1}$ by \eqref{ldual-nabla}, applying Definition~\ref{def:genFc-smooth} to $f^\diamond$ (whose gradient $(\nabla f)^{-1}$ has inverse $\nabla f$) gives, for any $\mathbf{w}\in\R^3$,
\begin{equation}
	(f^\diamond)^\diamond(\mathbf{w})=\mathbf{w}\cdot((\nabla f)^{-1})^{-1}(\mathbf{w})-f^\diamond\!\left(((\nabla f)^{-1})^{-1}(\mathbf{w})\right)=\mathbf{w}\cdot\nabla f(\mathbf{w})-f^\diamond(\nabla f(\mathbf{w})).
\end{equation}
Setting $\mathbf{z}=\mathbf{w}$ and $\mathbf{z}^*=\nabla f(\mathbf{w})$ in \eqref{ldual-dual2} yields $f^\diamond(\nabla f(\mathbf{w}))=\nabla f(\mathbf{w})\cdot\mathbf{w}-f(\mathbf{w})$. Substituting back gives $(f^\diamond)^\diamond(\mathbf{w})=f(\mathbf{w})$.

\subsubsection{Nonsmooth case: Convex or concave functions}\label{sssec:convexanalysis}

For completeness, we start with a recapitulation of the Fenchel-Legendre transform and relevant notions from convex analysis (see, e.g., \cite{Ro70B}). For a proper (i.e., not infinite everywhere) function $f:\R^d\to\R\cup\{+\infty\}$, its (convex) \emph{Fenchel-Legendre conjugate} is
\begin{equation}\label{def:LeFe-convex}
	f^*(\mathbf{z}^*) := \sup_{\mathbf{z}\in\R^d} ( \mathbf{z}^* \cdot \mathbf{z} - f(\mathbf{z}) ),
\end{equation}
and its \emph{convex subdifferential} at $\mathbf{z}$ is
\begin{equation}
	\partial f(\mathbf{z}):=\bigl\{\mathbf{z}^*\in\R^d : f(\mathbf{w})\geq f(\mathbf{z})+\mathbf{z}^*\cdot(\mathbf{w}-\mathbf{z})\ \forall\,\mathbf{w}\in\R^d\bigr\}.
\end{equation}
The \emph{Fenchel duality relation} states
\begin{equation}
	\mathbf{z}^*\in\partial f(\mathbf{z})\quad\iff\quad f(\mathbf{z})+f^*(\mathbf{z}^*)=\mathbf{z}\cdot\mathbf{z}^*.
\end{equation} 
Moreover, if $f$ is convex, proper and lower semicontinuous, then $f=f^{**}:=(f^*)^*$
and we even have that
\begin{equation}\label{Fenchelduality}
	\mathbf{z}^*\in\partial f(\mathbf{z})\quad\iff\quad f(\mathbf{z})+f^*(\mathbf{z}^*)=\mathbf{z}\cdot\mathbf{z}^*\quad\iff\quad \mathbf{z} \in\partial f^*(\mathbf{z}^*).
\end{equation} 
For a proper 
function $f:\R^d\to\R\cup\{-\infty\}$,\bbb 
the \emph{concave conjugate} is defined as
\begin{equation}
	f_*(\mathbf{z}^*):=\inf_{\mathbf{z}\in\R^d}\bigl\{\mathbf{z}^*\cdot\mathbf{z}-f(\mathbf{z})\bigr\},
\end{equation}
see \cite[Sections~12,~30]{Ro70B} and \cite{Ro74}. It satisfies $f_*=-(-f)^*$, so no separate theory is needed beyond sign changes. The \emph{concave superdifferential} is defined via $\partial f(\mathbf{z}):=-\partial(-f)(\mathbf{z})$, and satisfies the duality relation $\mathbf{z}^*\in\partial f(\mathbf{z})$ if and only if $f(\mathbf{z})+f_*(\mathbf{z}^*)=\mathbf{z}\cdot\mathbf{z}^*$.

\begin{definition}[Convex/concave Fenchel-Legendre transform]\label{def:genFc}
For $d\in \mathbb{N}$ and proper $f:\R^{d}\times \R^d\to [-\infty,+\infty]$, we define the convex/concave Fenchel-Legendre transform of $f$ with respect to the second variable as
\begin{align}\label{genFc}
	f^\diamond(\boldsymbol{\xi},\mathbf{z}^*):=\begin{cases}
		f^*(\boldsymbol{\xi},\mathbf{z}^*)&\text{if $\mathbf{z}\mapsto f(\boldsymbol{\xi},\mathbf{z})$ is convex},\\
		-(-f)^*(\boldsymbol{\xi},-\mathbf{z}^*)&\text{if $\mathbf{z}\mapsto f(\boldsymbol{\xi},\mathbf{z})$ is concave},
	\end{cases}
	\quad \text{for all }\boldsymbol{\xi},\mathbf{z}^*\in \R^d.
\end{align}
\end{definition}
\begin{remark}\label{rem:genFc}
When $f$ is differentiable, strictly convex or strictly concave and $\nabla f$ is continuous and invertible (invertibility actually automatically follows from strict convexity or strict concavity), the two definitions coincide: $f^\diamond$ given by \eqref{genFc} equals $f^\diamond$ given by \eqref{genFc-smooth}.

\end{remark}

\subsection{Necessary conditions for critical points or extremals}\label{secC:crit}

\subsubsection{Smooth case}

We begin with the smooth case, which provides the Euler-Lagrange equations that characterize constrained critical points in the sense of Definition~\ref{def:crit-smooth}.

\begin{proposition}[Euler-Lagrange equations: smooth case]\label{prop:EL-smooth}
Assume that
$\Phi_{\mathbf{y}}(\boldsymbol{\xi},\cdot)$ and $\widehat\Phi_{\mathbf{y}}(\boldsymbol{\xi},\cdot)$ are differentiable with respect to $\mathbf{b}$ and $\mathbf{m}$ respectively, for a.e.\ $\boldsymbol{\xi}\in\mathbf{y}(\Omega)$. In addition, assume that
the gradients $\nabla_{\mathbf{b}}\Phi_{\mathbf{y}}$ and $\nabla_{\mathbf{m}}\widehat\Phi_{\mathbf{y}}$ (and hence $\nabla_{\mathbf{m}}\widehat\Psi_{\mathbf{y}}$) satisfy a linear growth condition in their second argument, uniformly in $\boldsymbol{\xi}$, so that the functionals $\cE_{\mathbf{y}}$ and $\widehat\cE_{\mathbf{y}}$ are G\^ateaux-differentiable in their respective magnetic variables on $L^2$.
Then the following holds.
\begin{itemize}
\item[(i)] Let $(\overline{\mathbf{m}},\overline{\mathbf{h}}_{\rm s})\in L^2(\mathbf{y}(\Omega);\mathbb{R}^3)\times L^2(\mathbb{R}^3;\mathbb{R}^3)$ satisfy \eqref{divcon}. If
\begin{equation}
\frac{{\rm d}}{{\rm d}\varepsilon}\Big|_{\varepsilon=0}\widehat\cE_{\mathbf{y}}(\overline{\mathbf{m}}+\varepsilon\boldsymbol\eta,\, \overline{\mathbf{h}}_{\rm s}+\varepsilon\boldsymbol\zeta)=0
\end{equation}
for every perturbation direction $(\boldsymbol\eta,\boldsymbol\zeta)\in L^2(\mathbf{y}(\Omega);\mathbb{R}^3)\times L^2(\mathbb{R}^3;\mathbb{R}^3)$ such that 
$\operatorname{curl} \boldsymbol\zeta=0$ and $\operatorname{div} (\chi_{\mathbf{y}(\Omega)} \boldsymbol\eta+\boldsymbol\zeta)=0$,
then $(\overline{\mathbf{m}},\overline{\mathbf{h}}_{\rm s})$ is a smooth constrained critical point of $\widehat\cE_{\mathbf{y}}$ in the sense of Definition~\ref{def:crit-smooth}, i.e., \eqref{equil-mh-smooth} holds.

\item[(ii)] Let $\overline{\mathbf{b}}\in L^2(\mathbb{R}^3;\mathbb{R}^3)$ with $\operatorname{div}\overline{\mathbf{b}}=0$. If
\begin{equation}
\frac{{\rm d}}{{\rm d}\varepsilon}\Big|_{\varepsilon=0}\cE_{\mathbf{y}}(\overline{\mathbf{b}}+\varepsilon\boldsymbol{\psi})=0
\end{equation}
for every perturbation direction $\boldsymbol{\psi}\in L^2(\mathbb{R}^3;\mathbb{R}^3)$ such that $\operatorname{div}\boldsymbol{\psi}=0$, then $\overline{\mathbf{b}}$ is a smooth constrained critical point of $\cE_{\mathbf{y}}$ in the sense of Definition~\ref{def:crit-smooth}, i.e., there exists $\overline\varphi\in\mathring{W}^{1,2}(\mathbb{R}^3)$ such that \eqref{equil-b-smooth} holds.
\end{itemize}
\end{proposition}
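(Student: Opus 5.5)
We prove Proposition~\ref{prop:EL-smooth} by computing the Gâteaux derivatives explicitly and then identifying the constrained directions through the Helmholtz decomposition of Lemma~\ref{lem:Helmholtz}. The linear growth assumption on the gradients, together with $|\nabla_{\mathbf b}\Phi_{\mathbf y}(\cdot,\mathbf b)|\le C(1+|\mathbf b|)$, lets us differentiate under the integral sign by dominated convergence. For any $\mathbf{b},\boldsymbol\psi\in L^2$ we write
\[
\Phi_{\mathbf y}(\cdot,\mathbf b+\varepsilon\boldsymbol\psi)-\Phi_{\mathbf y}(\cdot,\mathbf b)=\varepsilon\int_0^1\nabla_{\mathbf b}\Phi_{\mathbf y}(\cdot,\mathbf b+s\varepsilon\boldsymbol\psi)\cdot\boldsymbol\psi\,{\rm d}s .
\]
The integrand is dominated by $C(1+|\mathbf b|+|\boldsymbol\psi|)|\boldsymbol\psi|\in L^1$, so the derivative at $\varepsilon=0$ is $\int\chi_{\mathbf y(\Omega)}\nabla_{\mathbf b}\Phi_{\mathbf y}(\cdot,\mathbf b)\cdot\boldsymbol\psi$. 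The same argument applies to $\widehat\Phi_{\mathbf y}$. The quadratic terms are differentiated directly.

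\textbf{Part (ii).} Set
\[
\mathbf g:=\chi_{\mathbf y(\Omega)}\nabla_{\mathbf b}\Phi_{\mathbf y}(\cdot,\overline{\mathbf b})+\frac{\overline{\mathbf b}-\mathbf b_{\rm a}}{\mu_0}\in L^2(\R^3;\R^3).
\]
The hypothesis says $\langle\mathbf g,\boldsymbol\psi\rangle_{L^2}=0$ for all $\boldsymbol\psi\in L^2_{\div}$. By Lemma~\ref{lem:Helmholtz}, $L^2_{\div}=\ker\cP$ and $\operatorname{Range}\cP$ is its orthogonal complement, since $\cP$ is a self-adjoint projection. Hence $\mathbf g\in\operatorname{Range}\cP$, so $\mathbf g=-\nabla\overline\varphi$ for some $\overline\varphi\in\mathring W^{1,2}(\R^3)$. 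This is exactly \eqref{equil-b-smooth}.

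\textbf{Part (i).} We parametrize the admissible directions. By Remark~\ref{rem:h-from-m}, the constraints $\operatorname{curl}\boldsymbol\zeta=0$ and $\operatorname{div}(\chi_{\mathbf y(\Omega)}\boldsymbol\eta+\boldsymbol\zeta)=0$ force $\boldsymbol\zeta=-\cP[\chi_{\mathbf y(\Omega)}\boldsymbol\eta]$. Conversely, every $\boldsymbol\eta\in L^2(\mathbf y(\Omega);\R^3)$ gives an admissible pair $(\boldsymbol\eta,-\cP[\chi_{\mathbf y(\Omega)}\boldsymbol\eta])$. The stationarity condition therefore reads, for all $\boldsymbol\eta$,
\[
\int_{\mathbf y(\Omega)}\big(\nabla_{\mathbf m}\widehat\Phi_{\mathbf y}(\cdot,\overline{\mathbf m})-\mathbf b_{\rm a}\big)\cdot\boldsymbol\eta-\mu_0\int_{\R^3}\overline{\mathbf h}_{\rm s}\cdot\cP[\chi_{\mathbf y(\Omega)}\boldsymbol\eta]=0 .
\]
Since $\overline{\mathbf h}_{\rm s}$ is itself curl-free, it lies in $\operatorname{Range}\cP$. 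Self-adjointness and idempotence of $\cP$ then give
\[
\langle\overline{\mathbf h}_{\rm s},\cP[\chi_{\mathbf y(\Omega)}\boldsymbol\eta]\rangle=\langle\cP\overline{\mathbf h}_{\rm s},\chi_{\mathbf y(\Omega)}\boldsymbol\eta\rangle=\int_{\mathbf y(\Omega)}\overline{\mathbf h}_{\rm s}\cdot\boldsymbol\eta .
\]
The condition becomes
\[
\int_{\mathbf y(\Omega)}\big(\nabla_{\mathbf m}\widehat\Phi_{\mathbf y}(\cdot,\overline{\mathbf m})-\mathbf b_{\rm a}-\mu_0\overline{\mathbf h}_{\rm s}\big)\cdot\boldsymbol\eta=0\qquad\text{for all }\boldsymbol\eta .
\]
The bracketed field is in $L^2$, so we may take $\boldsymbol\eta$ equal to it, and it vanishes a.e. This is \eqref{equil-mh-smooth}.

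The main obstacle is the step in part (i) that identifies $\overline{\mathbf h}_{\rm s}$ with $\cP\overline{\mathbf h}_{\rm s}$. This requires that curl-free $L^2$ fields on $\R^3$ be gradients of $\mathring W^{1,2}$ functions, i.e., $\{\mathbf h\in L^2:\operatorname{curl}\mathbf h=0\}=\operatorname{Range}\cP$. We would cite this from Lemma~\ref{lem:Helmholtz} and Remark~\ref{rem:h-from-m}, or argue it by showing that the divergence-free component $\mathbf w$ of a curl-free field is both curl- and divergence-free in $L^2$, hence harmonic and therefore zero by Liouville's theorem. Part (ii) only uses the orthogonality $(L^2_{\div})^\perp=\operatorname{Range}\cP$, which follows immediately from the lemma.
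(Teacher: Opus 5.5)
Your proof is correct and follows essentially the same route as the paper's. In (i) you parametrize the admissible directions as $(\boldsymbol\eta,-\cP[\chi_{\mathbf y(\Omega)}\boldsymbol\eta])$ and use self-adjointness and idempotence of $\cP$ on $\overline{\mathbf h}_{\rm s}\in\operatorname{Range}\cP$; in (ii) you use $(\ker\cP)^\perp=\operatorname{Range}\cP$ to obtain the gradient multiplier. Your extra dominated-convergence justification, which the paper simply absorbs into the assumed G\^ateaux differentiability, implicitly requires $|\mathbf y(\Omega)|<\infty$ for the term $C|\boldsymbol\psi|$ to be integrable, and this is harmless in this setting.
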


\begin{proof}
\textit{Part (i).} 
The conditions $\operatorname{curl} \boldsymbol\zeta=0$ and $\operatorname{div} (\chi_{\mathbf{y}(\Omega)} \boldsymbol\eta+\boldsymbol\zeta)=0$ are equivalent to $\boldsymbol\zeta=-\cP[\chi_{\mathbf{y}(\Omega)}\boldsymbol\eta]$ by Lemma~\ref{lem:Helmholtz}, so the perturbation directions are exactly those of the form $(\boldsymbol\eta,-\cP[\chi_{\mathbf{y}(\Omega)}\boldsymbol\eta])$ for $\boldsymbol\eta\in L^2(\mathbf{y}(\Omega);\mathbb{R}^3)$. For any perturbation direction $\boldsymbol\eta\in L^2(\mathbf{y}(\Omega);\mathbb{R}^3)$, differentiate at $\varepsilon=0$ the map
$$\varepsilon\mapsto\widehat\cE_{\mathbf{y}}(\overline{\mathbf{m}}+\varepsilon\boldsymbol\eta,\, -\cP[\chi_{\mathbf{y}(\Omega)}(\overline{\mathbf{m}}+\varepsilon\boldsymbol\eta)]).$$
The derivative of the material term gives $\int_{\mathbf{y}(\Omega)}\nabla_{\mathbf{m}}\widehat\Phi_{\mathbf{y}}(\cdot,\overline{\mathbf{m}})\cdot\boldsymbol\eta\,{\rm d}\boldsymbol{\xi}$. The derivative of the stray-field term $\frac{\mu_0}{2}\int_{\mathbb{R}^3}|\cP[\chi_{\mathbf{y}(\Omega)}\mathbf{m}]|^2\,{\rm d}\boldsymbol{\xi}$ at $\overline{\mathbf{m}}$ in direction $\boldsymbol\eta$ is
$$\mu_0\int_{\mathbb{R}^3}\cP[\chi_{\mathbf{y}(\Omega)}\overline{\mathbf{m}}]\cdot\cP[\chi_{\mathbf{y}(\Omega)}\boldsymbol\eta]\,{\rm d}\boldsymbol{\xi} = -\mu_0\int_{\mathbf{y}(\Omega)}\overline{\mathbf{h}}_{\rm s}\cdot\boldsymbol\eta\,{\rm d}\boldsymbol{\xi},$$
where we used self-adjointness of $\cP$, the identity $\overline{\mathbf{h}}_{\rm s}=-\cP[\chi_{\mathbf{y}(\Omega)}\overline{\mathbf{m}}]$, and $\cP\overline{\mathbf{h}}_{\rm s}=\overline{\mathbf{h}}_{\rm s}$ (since $\overline{\mathbf{h}}_{\rm s}$ is already in the range of $\cP$). The derivative of the forcing term $-\int_{\mathbf{y}(\Omega)}\mathbf{m}\cdot\mathbf{b}_{\rm a}\,{\rm d}\boldsymbol{\xi}$ gives $-\int_{\mathbf{y}(\Omega)}\boldsymbol\eta\cdot\mathbf{b}_{\rm a}\,{\rm d}\boldsymbol{\xi}$. Setting the total derivative to zero for all $\boldsymbol\eta$ and using the definition $\widehat\Psi_{\mathbf{y}}=\widehat\Phi_{\mathbf{y}}+\frac{\mu_0}{2}|\mathbf{m}|^2$ gives \eqref{equil-mh-smooth}.

\textit{Part (ii).} By definition of $\cE_{\mathbf{y}}$ in \eqref{E(b)}, we have $\cE_{\mathbf{y}}(\mathbf{b})=\cF_{\mathbf{y}}(\mathbf{b})+\cG(\mathbf{b})$ for all $\mathbf{b}\in L^2(\mathbb{R}^3;\mathbb{R}^3)$, where $\cF_{\mathbf{y}}$ and $\cG$ are as in \eqref{E(b)}. The G\^ateaux derivative of $\cF_{\mathbf{y}}$ at $\overline{\mathbf{b}}$ in direction $\boldsymbol\psi$ is
$$D\cF_{\mathbf{y}}(\overline{\mathbf{b}})[\boldsymbol\psi]=\int_{\mathbb{R}^3}\left(\chi_{\mathbf{y}(\Omega)}\nabla_{\mathbf{b}}\Phi_{\mathbf{y}}(\cdot,\overline{\mathbf{b}})+\frac{\overline{\mathbf{b}}-\mathbf{b}_{\rm a}}{\mu_0}\right)\cdot\boldsymbol\psi\,{\rm d}\boldsymbol{\xi}.$$
Since $\overline{\mathbf{b}}$ is a local minimizer of $\cE_{\mathbf{y}}$ subject to $\operatorname{div}\mathbf{b}=0$, the first-order condition $D\cF_{\mathbf{y}}(\overline{\mathbf{b}})[\boldsymbol\psi]=0$ must hold for every divergence-free $\boldsymbol\psi\in L^2(\mathbb{R}^3;\mathbb{R}^3)$. The space of divergence-free fields is the kernel of $\cP$ (cf.\ Lemma~\ref{lem:Helmholtz}), so its orthogonal complement is the range of $\cP$, which consists of gradient fields $\nabla\varphi$. Hence there exists $\overline\varphi\in\mathring{W}^{1,2}(\mathbb{R}^3)$ such that
$$\chi_{\mathbf{y}(\Omega)}\nabla_{\mathbf{b}}\Phi_{\mathbf{y}}(\cdot,\overline{\mathbf{b}})+\frac{\overline{\mathbf{b}}-\mathbf{b}_{\rm a}}{\mu_0}+\nabla\overline\varphi=0\quad\text{a.e. in }\mathbb{R}^3,$$
which is exactly \eqref{equil-b-smooth}.
\end{proof}

\subsubsection{Convex/concave (nonsmooth) case}

We now turn to the nonsmooth setting. The following propositions show that the subdifferential inclusions of Definition~\ref{def:crit} are the natural generalization of the Euler-Lagrange equations above, characterizing global minimizers in the case of convex functionals. \bbb

In this section, we briefly demonstrate that like any classical notion of critical points, our Definition~\ref{def:crit} of magnetic critical points extrapolates the natural first order necessary condition for local extrema, here in case of particular constrained functionals. The necessity of conditions like \eqref{equil-b} and \eqref{equil-mh} for local minimizers of nonsmooth functionals is apparently not as comprehensively studied in the literature as one might like, except for single integrals and for integrands which are a convex function of gradients in the leading order term (see, e.g., \cite{BonCel10a}).
To the best of our knowledge, the only available results beyond that, in particular for constrained problems, were given in \cite{Dol14a,Dol21a} but are not immediately applicable either. 
Fortunately, our situation is relatively simple and does not require overly technical proofs.
We, too, will restrict our attention to convex integrands here, and fully concave cases are of course analogous. 
The results below could be suitably generalized to semiconvex of semiconcave integrands, \bbb but the discussion of anything beyond that would require significantly more effort.

\begin{proposition}[Characterization of magnetic critical points of convex $\widehat{\mathcal E}$]\label{prop:charact-mh}
Suppose that \eqref{H1} and \eqref{H2} hold with $\widehat\Phi_{\mathbf{y}}$ convex in $\mathbf{m}$, that
 $(\overline{\mathbf{m}},\overline{\mathbf{h}}_{\rm s})\in L^2(\Omega;\R^3)\times L^2(\R^3;\R^3)$ satisfies the constraints \eqref{divcon} and that $\widehat\cE(\mathbf{y},\overline{\mathbf{m}},\overline{\mathbf{h}}_{\rm s})<\infty$.
If  $(\overline{\mathbf{m}},\overline{\mathbf{h}}_{\rm s})$ is a constrained local minimizer of $(\mathbf{m},\mathbf{h}_{\rm s})\mapsto \widehat{\mathcal E}(\mathbf{y},\mathbf{m},\mathbf{h}_{\rm s})$ subject to \eqref{divcon},
then  $(\overline{\mathbf{m}},\overline{\mathbf{h}}_{\rm s})$ is a magnetic critical point of $\widehat\cE$ in the sense of Definition~\ref{def:crit}, i.e., \eqref{equil-mh} holds.
Conversely, if 
$(\overline{\mathbf{m}},\overline{\mathbf{h}}_{\rm s})$ satifies \eqref{equil-mh}, 
	then $(\overline{\mathbf{m}},\overline{\mathbf{h}}_{\rm s})$ is a constrained global minimizer of $(\mathbf{m},\mathbf{h}_{\rm s})\mapsto \widehat{\mathcal E}(\mathbf{y},\mathbf{m},\mathbf{h}_{\rm s})$.
\end{proposition}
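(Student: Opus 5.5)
We reduce the constrained problem to an unconstrained one in $\mathbf{m}$ alone. By Remark~\ref{rem:h-from-m}, the constraints \eqref{divcon} force $\mathbf{h}_{\rm s}=-\cP[\chi_{\mathbf{y}(\Omega)}\mathbf{m}]$. Hence it suffices to study
\[
J(\mathbf{m}):=\int_{\mathbf{y}(\Omega)}\big(\widehat\Phi_{\mathbf{y}}(\boldsymbol{\xi},\mathbf{m})-\mathbf{b}_{\rm a}\cdot\mathbf{m}\big)\,{\rm d}\boldsymbol{\xi}+\frac{\mu_0}{2}\int_{\R^3}\big|\cP[\chi_{\mathbf{y}(\Omega)}\mathbf{m}]\big|^2\,{\rm d}\boldsymbol{\xi}
\]
on $L^2(\mathbf{y}(\Omega);\R^3)$. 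Constrained local (global) minimizers of $\widehat\cE$ are exactly local (global) minimizers of $J$. The functional $J=J_1+J_2$ splits into a convex integral functional $J_1$ (by \eqref{H1}, \eqref{H2} and convexity of $\widehat\Phi_{\mathbf{y}}$ it is well defined with values in $(-\infty,+\infty]$) and a continuous convex quadratic form $J_2$. Since $J$ is convex, any local minimizer is global: if $J(\mathbf{m})<J(\overline{\mathbf{m}})$, then $J(\overline{\mathbf{m}}+t(\mathbf{m}-\overline{\mathbf{m}}))<J(\overline{\mathbf{m}})$ for small $t>0$. So both directions reduce to the statement that $\overline{\mathbf{m}}$ minimizes $J$ if and only if \eqref{equil-mh} holds.

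For the necessity direction, we use the sum rule $\partial J=\partial J_1+\nabla J_2$. This holds because $J_2$ is Fréchet differentiable and finite everywhere, and one can verify it by hand: $0\in\partial J(\overline{\mathbf{m}})$ together with differentiability of $J_2$ gives $-\nabla J_2(\overline{\mathbf{m}})\in\partial J_1(\overline{\mathbf{m}})$, since $J_1(\overline{\mathbf{m}}+t\boldsymbol\eta)-J_1(\overline{\mathbf{m}})\ge -(J_2(\overline{\mathbf{m}}+t\boldsymbol\eta)-J_2(\overline{\mathbf{m}}))$, and convexity of $J_1$ upgrades the one-sided difference quotients at $t\downarrow0$. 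As in the proof of Proposition~\ref{prop:EL-smooth}(i), the gradient is $\nabla J_2(\overline{\mathbf{m}})=-\mu_0\overline{\mathbf{h}}_{\rm s}$ on $\mathbf{y}(\Omega)$, by self-adjointness and idempotence of $\cP$.

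The key step is then to localize: an element $\mathbf{g}\in L^2$ of $\partial J_1(\overline{\mathbf{m}})$, where $J_1$ includes the linear term $-\mathbf{b}_{\rm a}\cdot\mathbf{m}$, satisfies $\mathbf{g}+\mathbf{b}_{\rm a}\in\partial_{\mathbf{m}}\widehat\Phi_{\mathbf{y}}(\boldsymbol{\xi},\overline{\mathbf{m}}(\boldsymbol{\xi}))$ for a.e.~$\boldsymbol{\xi}$. This is the classical Rockafellar result for normal convex integrands. I would prove it directly. Test with $\mathbf{m}=\overline{\mathbf{m}}+\chi_A(\mathbf{w}-\overline{\mathbf{m}})$ for a measurable $A$ and a constant $\mathbf{w}$ in a countable dense set $D\subset\R^3$, with $A$ chosen inside sets where $|\overline{\mathbf{m}}|$ and $\widehat\Phi_{\mathbf{y}}(\cdot,\overline{\mathbf{m}})$ are bounded so that everything is integrable. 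Arbitrariness of $A$ yields the pointwise subgradient inequality for all $\mathbf{w}\in D$ off a null set. Lower semicontinuity and convexity in $\mathbf{m}$ then extend the inequality to all $\mathbf{w}$: a convex lsc function is the closure of its values along segments, so the inequality passes from a dense set to all of $\R^3$, at least on the interior of the domain, with the boundary handled by the closure of the effective domain. This localization is the main technical obstacle, specifically integrability of test fields and the passage from $D$ to $\R^3$ where $\widehat\Phi_{\mathbf{y}}$ takes the value $+\infty$.

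Finally, we get $\mu_0\overline{\mathbf{h}}_{\rm s}+\mathbf{b}_{\rm a}\in\partial_{\mathbf{m}}\widehat\Phi_{\mathbf{y}}(\cdot,\overline{\mathbf{m}})$, and adding the gradient $\mu_0\overline{\mathbf{m}}$ of the smooth quadratic $\frac{\mu_0}{2}|\mathbf{m}|^2$ gives \eqref{equil-mh}; note $\partial(\widehat\Phi+q)=\partial\widehat\Phi+\nabla q$ for a smooth $q$, as checked above. For the converse, assume \eqref{equil-mh}, i.e.~$\mathbf{g}:=\mu_0\overline{\mathbf{h}}_{\rm s}+\mathbf{b}_{\rm a}\in\partial_{\mathbf{m}}\widehat\Phi_{\mathbf{y}}(\cdot,\overline{\mathbf{m}})$ pointwise. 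Integrating the pointwise subgradient inequality gives $J_1(\mathbf{m})\ge J_1(\overline{\mathbf{m}})-\int\mu_0\overline{\mathbf{h}}_{\rm s}\cdot(\mathbf{m}-\overline{\mathbf{m}})$; the inequality holds in the extended sense, and integrability comes from $\mathbf{g}\in L^2$. Convexity of $J_2$ gives $J_2(\mathbf{m})\ge J_2(\overline{\mathbf{m}})+\int\mu_0\overline{\mathbf{h}}_{\rm s}\cdot(\mathbf{m}-\overline{\mathbf{m}})$. Summing the two yields $J(\mathbf{m})\ge J(\overline{\mathbf{m}})$, which is global minimality.
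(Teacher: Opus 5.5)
Your overall route is the paper's. You eliminate $\mathbf h_{\rm s}=-\cP[\chi_{\mathbf y(\Omega)}\mathbf m]$, split into the integral functional $J_1$ and the Fr\'echet-differentiable stray-field term $J_2$ with $\nabla J_2(\overline{\mathbf m})=-\mu_0\overline{\mathbf h}_{\rm s}$, and use convexity to pass from local to global minimizers. Everything then reduces to $0\in\partial J_1(\overline{\mathbf m})+\nabla J_2(\overline{\mathbf m})$ plus a pointwise description of $\partial J_1$. The paper simply asserts that pointwise description, i.e.\ Rockafellar's theorem for convex normal integrands. Your by-hand sum rule and your converse are fine, with one slip. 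In the converse both displayed inequalities have the sign of the $\overline{\mathbf h}_{\rm s}$-term reversed: it should be $+\int\mu_0\overline{\mathbf h}_{\rm s}\cdot(\mathbf m-\overline{\mathbf m})$ for $J_1$ and $-\int\mu_0\overline{\mathbf h}_{\rm s}\cdot(\mathbf m-\overline{\mathbf m})$ for $J_2$. The two errors cancel when you sum.

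The step that fails as proposed is your direct proof of the localization using constant test values $\mathbf w$ from a fixed countable dense set $D\subset\R^3$. Testing gives the affine-minorant inequality only at points of $D$. Extending it to all of $\R^3$, by continuity and the segment-closure property of lsc convex functions, needs $D$ to meet $\operatorname{int}\operatorname{dom}\widehat\Phi_{\mathbf y}(\boldsymbol\xi,\cdot)$, so that interior must be nonempty. Nothing in \eqref{H1}, \eqref{H2} and convexity guarantees this; the paper's own permanent-magnet example already has a singleton domain. Concretely, take $\widehat\Phi_{\mathbf y}(\boldsymbol\xi,\mathbf m)=\frac12 m_1^2$ if $m_2=\xi_1$ and $m_3=0$, and $+\infty$ otherwise. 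For a.e.\ $\boldsymbol\xi$ no point of $D$ lies in the domain, so every tested inequality reads $+\infty\geq\dots$ and says nothing. Yet $\partial_{\mathbf m}\widehat\Phi_{\mathbf y}(\boldsymbol\xi,\overline{\mathbf m}(\boldsymbol\xi))=\{\mathbf v:\ v_1=\overline m_1(\boldsymbol\xi)\}$ is a genuine constraint, and appealing to ``the closure of the effective domain'' cannot recover it when the interior is empty.

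To close the gap, either cite Rockafellar's theorem, as the paper implicitly does, or replace the constants by a Castaing representation $\{(\mathbf z_k(\boldsymbol\xi),\alpha_k(\boldsymbol\xi))\}_k$ of the epigraph multifunction $\boldsymbol\xi\mapsto\operatorname{epi}\widehat\Phi_{\mathbf y}(\boldsymbol\xi,\cdot)$. Test with $\overline{\mathbf m}+\chi_A(\mathbf z_k-\overline{\mathbf m})$, with $A$ restricted to where $|\mathbf z_k|$ and $\alpha_k$ are bounded. This gives $\alpha_k\geq\widehat\Phi_{\mathbf y}(\cdot,\mathbf z_k)\geq\ell(\mathbf z_k)$ a.e.\ for the affine minorant $\ell$. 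Density in the epigraph and continuity of $\ell$ then give $\widehat\Phi_{\mathbf y}(\boldsymbol\xi,\cdot)\geq\ell$ everywhere, with no interiority assumption. This route needs $\widehat\Phi_{\mathbf y}$ to be a normal integrand, slightly more than \eqref{H1}. That is tacitly required anyway for $\boldsymbol\xi\mapsto\widehat\Phi_{\mathbf y}(\boldsymbol\xi,\mathbf m(\boldsymbol\xi))$ to be measurable, so your $J_1$ is well defined.
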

\begin{proof}
First notice that by our assumptions, $(\mathbf{m},\mathbf{h}_{\rm s})\mapsto\widehat{\mathcal E}(\mathbf{y},\mathbf{m},\mathbf{h}_{\rm s})$ is convex.
Hence, any constrained local minimizer $(\overline{\mathbf{m}},\overline{\mathbf{h}}_{\rm s})$ of $(\mathbf{m},\mathbf{h}_{\rm s})\mapsto\widehat{\mathcal E}(\mathbf{y},\mathbf{m},\mathbf{h}_{\rm s})$ with finite energy
is automatically a constrained global minimizer and it suffices to discuss global minimizers.

Given \eqref{divcon}, $\mathbf{h}_{\rm s}=-\cP[\chi_{\mathbf{y}(\Omega)}\mathbf{m}]$ by Remark~\ref{rem:h-from-m}. By the definition of $\widehat\cE$ in \eqref{E(mh)}, we therefore have that
\begin{align}
		\begin{aligned}
			 \widehat\cE(\mathbf{y},\mathbf{m},\mathbf{h}_{\rm s})=&\widehat\cF_{\mathbf{y}}(\mathbf{m})+\widehat\cG(\mathbf{m}),
		\end{aligned}
\end{align}
where
\begin{align}
		\begin{aligned}
			 \widehat\cF_{\mathbf{y}}(\mathbf{m}):=\int_{\mathbf{y}(\Omega)} \big(\widehat\Phi_{\mathbf{y}}(\boldsymbol{\xi},\mathbf{m})- \mathbf{m} \cdot \mathbf{b}_{\rm a}\big)\,{\rm d}\boldsymbol{\xi}~~\text{and}~~\widehat\cG(\mathbf{m}):=\int_{\mathbb R^3}\frac{\mu_0}{2}\big|\cP[\chi_{\mathbf{y}(\Omega)}\mathbf{m}]\big|^2 \,{\rm d}\boldsymbol{\xi}.
		\end{aligned}
\end{align}
Since $\cP$ is a bounded self-adjoint linear projection operator on $L^2(\R^3;\R^3)$, $\widehat\cG_{\mathbf{y}}:L^2(\R^3;\R^3)\to \R$ is convex and Frechet-differentiable with derivative
\begin{equation}
	D\widehat{\cG}_{\mathbf{y}}(\mathbf{m})[\dot{\mathbf{m}}]=\int_{\mathbb R^3}\mu_0\cP[\chi_{\mathbf{y}(\Omega)}\mathbf{m}]\cdot \cP[\chi_{\mathbf{y}(\Omega)}\dot{\mathbf{m}}] \,{\rm d}\boldsymbol{\xi}=\int_{\mathbf{y}(\Omega)}\mu_0\cP[\chi_{\mathbf{y}(\Omega)}\mathbf{m}]\cdot \dot{\mathbf{m}}\,{\rm d}\boldsymbol{\xi},
\end{equation}
for all $\dot{\mathbf{m}}\in L^2(\mathbf{y}(\Omega);\R^3)$.
Moreover, $\widehat\cF_{\mathbf{y}}$ is convex and its subdifferential is given by
\begin{align}
	\partial\widehat\cF_{\mathbf{y}}(\mathbf{m})=\left\{ \dot{\mathbf{m}}\mapsto \int_{\mathbf{y}(\Omega)} \mathbf{g}\cdot \dot{\mathbf{m}}\,{\rm d}\boldsymbol{\xi} ~\left|~
	\begin{aligned}
		&\mathbf{g}\in L^2(\mathbf{y}(\Omega);\R^3),\\
		&\mathbf{g}\in \mu_0\cP[\chi_{\mathbf{y}(\Omega)}\mathbf{m}]-\mathbf{b}_{\rm a}+\partial_{\mathbf{m}} \widehat\Phi_{\mathbf{y}}(\cdot,\mathbf{m}(\boldsymbol{\xi}))	~~\text{a.e.~on }\mathbf{y}(\Omega)
	\end{aligned}	
	\right.\right\}
\end{align}
Here, we have used Riesz representation to identify the elements of the subdifferential naturally given by continuous linear functionals in the dual of $L^2(\mathbf{y}(\Omega);\R^3)$ 
with functions $\mathbf{g}\in L^2(\mathbf{y}(\Omega);\R^3)$.
Combined, we see that $\mathbf{m}\mapsto \widehat\cF_{\mathbf{y}}(\mathbf{m})+\widehat\cG(\mathbf{m})$, $L^2(\mathbf{y}(\Omega);\R^3)\to \R$, is convex with subdifferential
\begin{align}
	\partial(\widehat\cF_{\mathbf{y}}+\widehat\cG)(\mathbf{m})=\left\{ \dot{\mathbf{m}}\mapsto \int_{\mathbf{y}(\Omega)} \mathbf{g}\cdot \dot{\mathbf{m}}\,{\rm d}\boldsymbol{\xi} ~\left|~
	\begin{aligned}
		&\mathbf{g}\in L^2(\mathbf{y}(\Omega);\R^3),\\
		&\mathbf{g}\in \mu_0\cP[\chi_{\mathbf{y}(\Omega)}\mathbf{m}]-\mathbf{b}_{\rm a}+\partial_{\mathbf{m}} \widehat\Phi_{\mathbf{y}}(\cdot,\mathbf{m})~~\text{a.e.~on }\mathbf{y}(\Omega) 
	\end{aligned}	
	\right.\right\}.
\end{align}
Finally, $(\overline{\mathbf{m}},\overline{\mathbf{h}}_{\rm s})$ is a constrained global minimizer $(\overline{\mathbf{m}},\overline{\mathbf{h}}_{\rm s})$ of $(\mathbf{m},\mathbf{h}_{\rm s})\mapsto \widehat{\mathcal E}(\mathbf{y},\mathbf{m},\mathbf{h}_{\rm s})$
if and only if $\overline{\mathbf{m}}$ is a global minimizer of $\widehat\cF_{\mathbf{y}}+\widehat\cG$. 
The latter is in turn equivalent to 
\begin{equation}
	\mathbf{0}\in \partial (\widehat\cF_{\mathbf{y}}+\widehat\cG).
\end{equation}
By our characterizations of $\partial \cF_{\mathbf{y}}$ and $\partial \cG$, we see that this in turn is equivalent 
to \eqref{equil-mh}.
\end{proof}

\begin{proposition}[Characterization of magnetic critical points of convex $\mathcal E$]\label{prop:charact-b}
Suppose that \eqref{H1p} and \eqref{H2p} hold, that
\begin{align}
	\Psi_y(\boldsymbol{\xi},\mathbf{b}) =\chi_{y(\Omega)}(\boldsymbol{\xi})\Phi_{\mathbf{y}}(\boldsymbol{\xi},\mathbf{b})+\frac{1}{2\mu_0}|\mathbf{b}-\mathbf{b}_{\rm a}(\boldsymbol{\xi})|^2
	\qquad\text{is convex in $\mathbf{b}$ for a.e.~$\boldsymbol{\xi}$},
\end{align} \bbb
that
$\overline{\mathbf{b}}\in L^2(\R^3;\R^3)$ satisfies $\operatorname{div} \overline{\mathbf{b}}=0$ on $\R^3$
and that $\cE(\mathbf{y},\overline{\mathbf{b}})<\infty$.
If $\overline{\mathbf{b}}$ is a local minimizer of $\mathbf{b} \mapsto \mathcal{E}(\mathbf{y}, \mathbf{b})$, $L^2(\R^3;\R^3)\to \R$ under the constraint $\operatorname{div} \mathbf{b}=0$, then 
$\overline{\mathbf{b}}$ is a magnetic critical point of $\cE$ in the sense of Definition~\ref{def:crit}, i.e., \eqref{equil-b} holds 
with some $\overline{\varphi}\in \mathring W^{1,2}(\mathbb R^3)$.
Conversely, if 
\eqref{equil-b} holds for some $\overline{\varphi}\in \mathring W^{1,2}(\mathbb R^3)$, then $\overline{\mathbf{b}}$ is a constrained global minimizer of $\mathbf{b}\mapsto \mathcal E(\mathbf{y},\mathbf{b})$.
\end{proposition}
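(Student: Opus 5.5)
The plan is to mirror the proof of Proposition~\ref{prop:charact-mh}. The functional is rewritten as a convex integral functional restricted to the closed subspace $L^2_{\div}$, and then subdifferential calculus is combined with the Helmholtz decomposition of Lemma~\ref{lem:Helmholtz}. Write $\cE_{\mathbf{y}}(\mathbf{b})=\int_{\R^3}\Psi_{\mathbf y}(\boldsymbol{\xi},\mathbf{b})\,{\rm d}\boldsymbol{\xi}=:\cF(\mathbf{b})$, an integral functional on $L^2(\R^3;\R^3)$. By \eqref{H1p}--\eqref{H2p} and the convexity assumption it is convex, and it is lower semicontinuous by Fatou, using the quadratic lower bound. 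The constrained problem is then the minimization of $\cF+\iota_{L^2_{\div}}$, where $\iota$ denotes the indicator function of the closed linear subspace $L^2_{\div}=\ker\cP$.

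First I treat the converse direction, which is elementary. Suppose \eqref{equil-b} holds, and set $\mathbf g:=\chi_{\mathbf y(\Omega)}\mathbf{m}^*+\frac{1}{\mu_0}(\overline{\mathbf b}-\mathbf b_{\rm a})$ with $-\mathbf m^*\in\partial_{\mathbf b}\Phi_{\mathbf y}$ chosen measurably, so that \eqref{equil-b} reads $\mathbf g=-\nabla\overline\varphi$. Pointwise, $\mathbf g(\boldsymbol\xi)\in\partial_{\mathbf b}\Psi_{\mathbf y}(\boldsymbol\xi,\overline{\mathbf b}(\boldsymbol\xi))$, because the sum rule in finite dimensions holds trivially in the direction ``sum of subgradients is a subgradient''. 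Integrating the subgradient inequality gives, for every divergence-free $\mathbf b$,
\[
\cE_{\mathbf y}(\mathbf b)\ge \cE_{\mathbf y}(\overline{\mathbf b})+\int_{\R^3}\mathbf g\cdot(\mathbf b-\overline{\mathbf b})\,{\rm d}\boldsymbol\xi=\cE_{\mathbf y}(\overline{\mathbf b}),
\]
where the last integral vanishes by the orthogonality of gradients and divergence-free fields in Lemma~\ref{lem:Helmholtz}. One caveat: $\mathbf g\in L^2$ holds because it equals $-\nabla\overline\varphi$, so this argument needs no further integrability assumptions.

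For the direct direction, convexity first reduces local minimality to global minimality on $L^2_{\div}$, exactly as in Proposition~\ref{prop:charact-mh}. Hence $0\in\partial(\cF+\iota_{L^2_{\div}})(\overline{\mathbf b})$. The key step is the sum rule $\partial(\cF+\iota_{L^2_{\div}})=\partial\cF+(L^2_{\div})^\perp$, together with $(L^2_{\div})^\perp=\operatorname{Range}\cP=\{\nabla\varphi\}$ from Lemma~\ref{lem:Helmholtz}. After that, I would use the standard characterization of the subdifferential of a convex integral functional on $L^2$ (Rockafellar): $\mathbf g\in\partial\cF(\overline{\mathbf b})$ if and only if $\mathbf g\in L^2$ and $\mathbf g\in\partial_{\mathbf b}\Psi_{\mathbf y}(\cdot,\overline{\mathbf b})$ a.e. The pointwise sum rule for $\Psi_{\mathbf y}$ is exact because the quadratic part is finite and differentiable everywhere, so $\partial_{\mathbf b}\Psi_{\mathbf y}=\chi_{\mathbf y(\Omega)}\partial_{\mathbf b}\Phi_{\mathbf y}+\frac{1}{\mu_0}(\mathbf b-\mathbf b_{\rm a})$. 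Rearranging $\mathbf g=-\nabla\overline\varphi$ then gives \eqref{equil-b}. Outside $\mathbf y(\Omega)$, the only element of $\partial_{\mathbf b}\Psi_{\mathbf y}$ is $\frac{1}{\mu_0}(\overline{\mathbf b}-\mathbf b_{\rm a})$, which is consistent with the factor $\chi_{\mathbf y(\Omega)}$ in \eqref{equil-b}.

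The main obstacle is the sum rule with the indicator function. $\cF$ may take the value $+\infty$, since $\Phi_{\mathbf y}$ is only lower semicontinuous, so the Moreau--Rockafellar qualification condition (continuity of one summand at a common point) can fail. I would first try to avoid the abstract rule by working directly in the subspace. Minimizing $\cF$ over $L^2_{\div}$ is equivalent to minimizing $\mathbf w\mapsto\cF(\mathbf w)$ with $\mathbf w=(I-\cP)\mathbf u$, and the chain rule for the surjective bounded operator $I-\cP$ onto the closed subspace is valid without qualification conditions (Attouch--Brezis applies since the range is all of $L^2_{\div}$, which is closed). This yields $0\in(I-\cP)\partial\cF(\overline{\mathbf b})$, i.e.\ some $\mathbf g\in\partial\cF(\overline{\mathbf b})$ lies in $\operatorname{Range}\cP$. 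If that approach fails, I would fall back on a penalty or Yosida regularization of $\Phi_{\mathbf y}$ and pass to the limit using the maximal-monotone closedness of $\partial_{\mathbf b}\Psi_{\mathbf y}$.
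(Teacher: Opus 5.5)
Your architecture matches the paper's: you split $\cE_{\mathbf y}=\cF+\iota_{L^2_{\div}}$, describe $\partial\cF$ pointwise via $\partial_{\mathbf b}\Psi_{\mathbf y}$, and identify $(L^2_{\div})^\perp$ with the gradients. Your converse direction is correct, and more direct than the paper's. The paper routes both implications through the sum rule, although only the trivial inclusion $\partial\cF+\partial\cG\subseteq\partial(\cF+\cG)$ is needed for the converse. One small fix: with your $\mathbf g$ you need $\mathbf m^*\in\partial_{\mathbf b}\Phi_{\mathbf y}$, not $-\mathbf m^*$. Setting $\mathbf g:=-\nabla\overline\varphi$ outright avoids the measurable selection altogether.

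The gap is the device you use to avoid a qualification condition in the direct direction: it is circular. The map $A=I-\cP$ is onto $L^2_{\div}$, but $\cF$ is defined on all of $L^2(\R^3;\R^3)$. The chain rule $\partial(\cF\circ A)=A^*\,\partial\cF\circ A$ therefore requires that $\bigcup_{\lambda>0}\lambda(\operatorname{dom}\cF-\operatorname{Range}A)$ be a closed subspace, or that $\cF$ be continuous at some point of $\operatorname{Range}A$. Closedness of $\operatorname{Range}A$ is not that condition. Since $\operatorname{Range}A=L^2_{\div}$, the condition is verbatim the qualification for $\partial(\cF+\iota_{L^2_{\div}})=\partial\cF+(L^2_{\div})^\perp$. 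In fact one checks directly that $\partial(\cF\circ A)(\overline{\mathbf b})=(I-\cP)\,\partial(\cF+\iota_{L^2_{\div}})(\overline{\mathbf b})$. So $0\in\partial(\cF\circ A)(\overline{\mathbf b})$ merely restates minimality, while $0\in(I-\cP)\partial\cF(\overline{\mathbf b})$ is exactly the sum-rule conclusion you wanted to bypass. If you instead regard $\cF|_{L^2_{\div}}$ as a function on $L^2_{\div}$, the chain rule comes for free, but connecting its subdifferential to the pointwise $\partial_{\mathbf b}\Psi_{\mathbf y}$ is the sum rule again. The Yosida fallback hits the same wall: passing to the limit needs an $L^2$ bound on the approximate multipliers $\nabla\varphi_\lambda$, which is precisely what a qualification condition provides.

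Your worry is nevertheless well founded. The paper's proof simply writes $\partial(\cF_{\mathbf y}+\cG)(\overline{\mathbf b})=\partial\cF_{\mathbf y}(\overline{\mathbf b})+\partial\cG(\overline{\mathbf b})$ without verifying any qualification. Under the stated hypotheses alone, no reformulation removes the issue. Take $\Phi_{\mathbf y}(\boldsymbol\xi,\cdot)$ to be the indicator of $\{\mathbf b_0(\boldsymbol\xi)\}$ with $\mathbf b_0\in L^2_{\div}$; this is convex, lower semicontinuous and nonnegative. Then \eqref{equil-b} imposes nothing on $\mathbf y(\Omega)$ and only asks that $\overline{\mathbf b}-\mathbf b_{\rm a}$ coincide off $\mathbf y(\Omega)$ with an $L^2$ gradient on $\R^3$. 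This holds for all divergence-free $\mathbf b_0,\mathbf b_{\rm a}$ if and only if $L^2_{\div}+\{\mathbf w\in L^2(\R^3;\R^3):\mathbf w=0\text{ a.e. on }\mathbf y(\Omega)\}$ is closed. That is a regularity property of $\mathbf y(\Omega)$ which fails, e.g., for a suitable inward cusp, so in this generality the multiplier can genuinely fail to exist.

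What closes the step is an extra hypothesis such as $\Phi_{\mathbf y}(\boldsymbol\xi,\mathbf b)\le c(\boldsymbol\xi)+C|\mathbf b|^2$ with $c\in L^1(\mathbf y(\Omega))$. All the $\mathbf b$-densities of Section~\ref{sec:ex} satisfy it. Then $\cF$ is finite, convex and bounded above on balls of $L^2$, hence continuous. Moreau--Rockafellar yields $\partial(\cF+\iota_{L^2_{\div}})(\overline{\mathbf b})=\partial\cF(\overline{\mathbf b})+\{\nabla\varphi:\varphi\in\mathring W^{1,2}(\R^3)\}$, and the rest of your argument goes through as written.
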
 
\begin{remark} Unless $\Phi_y$ is either convex or concave in $\mathbf{b}$, the subdifferential (or superdifferential) appearing in \eqref{equil-b} might be empty on a set of positive measure, rendering \eqref{equil-b} useless. 
However, in both good cases, we have that 
\begin{align}\label{EL-b-PhiPsi}
	\partial_{\mathbf{b}}\Psi_{\mathbf{y}}(\boldsymbol{\xi},\mathbf{b})=\frac{\mathbf{b}-\mathbf{b}_{\rm a}(\boldsymbol{\xi})}{\mu_0}+\partial_{\mathbf{b}}\Phi_{\mathbf{y}}(\boldsymbol{\xi},\mathbf{b})
	\quad\text{for every $\mathbf{b}\in \R^3$ and a.e.~$\boldsymbol \xi\in \mathbf{y}(\Omega)$ }
\end{align}
with the usual convex subdifferential applied to $\Psi_{\mathbf{y}}$. Using this to substitute $\partial_{\mathbf{b}}\Phi_{\mathbf{y}}$, \eqref{equil-b} remains meaningful also under our present assumptions.
\end{remark} \bbb
\begin{proof}[Proof of Proposition~\ref{prop:charact-b}]
Since $\mathbf{b} \mapsto \mathcal{E}(\mathbf{y}, \mathbf{b})$ is convex, it suffices to discuss global minimizers, because any local minimizers with finite energy are automatically global.
By definition of $\cE$ in \eqref{E(b)}, we have that
\begin{align}
		\begin{aligned}
			 \cE(\mathbf{y},\mathbf{b})=&\cF_{\mathbf{y}}(\mathbf{b})+\cG(\mathbf{b})\quad\text{for all $\mathbf{b}\in L^2(\R^3;\R^3)$ with $\div \mathbf{b}=0$ on $\R^3$},
		\end{aligned}
\end{align}
where $\cF_{\mathbf{y}},\cG:L^2(\R^3;\R^3)\to \R\cup\{+\infty\}$,
\begin{align}
			 \cF_{\mathbf{y}}(\mathbf{b}):=\int_{\R^3} \Psi_y(\boldsymbol{\xi},\mathbf{b})\,{\rm d}\boldsymbol{\xi},~~~
			\cG(\mathbf{b}):=\begin{cases}
				0 & \text{if $\div \mathbf{b}=0$ on $\R^3$,}\\
				+\infty & \text{else.}
		\end{cases}
\end{align}
Both $\cF_{\mathbf{y}}$ and $\cG$ are convex and lower semicontinuous, the latter because the set of divergence-free fields in $L^2$ is closed.
Moreover, for any $\mathbf{b}\in L^2(\R^3;\R^3)$,
their subdifferentials (as unconstrained functionals on $L^2(\R^3;\R^3)$) are given by
\begin{align}
\begin{alignedat}{2}
	&\partial\cF_{\mathbf{y}}(\mathbf{b})&&=\left\{ \dot{\mathbf{b}}\mapsto \int_{\R^3} \mathbf{g}\cdot \dot{\mathbf{b}}\,{\rm d}\boldsymbol{\xi} ~\left|~
	\begin{aligned}
		&\mathbf{g}\in L^2(\R^3;\R^3),\\
		&\mathbf{g}(\boldsymbol{\xi})\in \partial_{\mathbf{b}} \Psi_{\mathbf{y}}(\boldsymbol{\xi},\mathbf{b}(\boldsymbol{\xi}))
		~~\text{for a.e. }\boldsymbol{\xi}\in \R^3
	\end{aligned}	
	\right.\right\},\\
		&\partial\cG(\mathbf{b})&&=\left\{\left. \dot{\mathbf{b}}\mapsto \int_{\mathbf{y}(\Omega)} \mathbf{g}\cdot \dot{\mathbf{b}}\,{\rm d}\boldsymbol{\xi} ~\right|~
		\mathbf{g}=\nabla\overline{\varphi}~\text{for a }\overline{\varphi} \in \mathring W^{1,2}(\R^3)
	\right\}\quad \text{if }\div \mathbf{b}=0,\\ \bbb
	  & \partial\cG(\mathbf{b})&&=\emptyset\quad \text{otherwise}. \bbb
\end{alignedat}
\end{align}
Here, we have used Riesz representation to identify the elements of the subdifferential naturally given by linear functionals in the dual of $L^2(\R^3;\R^3)$ 
with functions $\mathbf{g}\in L^2(\R^3;\R^3)$. Moreover, for stating $\partial\cG(\mathbf{b})$, 
we have exploited that 
\begin{equation}
	\mathbf{g}=\nabla\overline{\varphi}~\text{for a }\overline{\varphi}\in \mathring W^{1,2}~~~\text{if and only if}~~~\int_{\R^3} \mathbf{g}\cdot \dot{\mathbf{b}}\,{\rm d}\boldsymbol{\xi}=0~~\text{for all $\dot{\mathbf{b}}\in L^2$ with $\div \dot{\mathbf{b}}=0$}.
\end{equation}
(In terms of the projection of Appendix~\ref{secA:Helmholtz}: the range of $\cP$, i.e., the gradient fields, is the orthogonal complement of its kernel, i.e., the divergence-free fields.)

Finally, we observe that $\overline{\mathbf{b}}$ is a constrained minimizer of $\mathbf{b} \mapsto \mathcal{E}(\mathbf{y}, \mathbf{b})$ if and only if 
$\overline{\mathbf{b}}$ is an (unconstrained) minimizer of $\cF_{\mathbf{y}}+\cG$. In view of the definition $\cG$, the latter is equivalent to
\begin{equation}\label{neccondE}
	\div \overline{\mathbf{b}}=0\quad\text{and}\bbb\quad \mathbf{0}\in \partial (\cF_{\mathbf{y}}+\cG)(\overline{\mathbf{b}})=\partial \cF_{\mathbf{y}}(\overline{\mathbf{b}})+\partial \cG(\overline{\mathbf{b}}).
\end{equation}
By our characterizations of $\partial \cG$ and $\partial \cF_{\mathbf{y}}$, we see that the inclusion in \eqref{neccondE} is in turn equivalent 
to the existence of some $\overline{\varphi}\in \mathring W^{1,2}(\mathbb R^3)$ such that
\eqref{equil-b} holds (recall \eqref{EL-b-PhiPsi}).
\end{proof}

\bibliographystyle{abbrv}
\bibliography{references}

\end{document}